\newtheorem{lemma}{Lemma}
\newtheorem{theorem}{Theorem}
\newtheorem{corollary}{Corollary}
\newtheorem{example}{Example}
\begin{document}

\begin{center}\Large\bf
Deformation of algebroid bracket of differential forms and Poisson manifold
\end{center}
\begin{center}\large
Alina Dobrogowska, Grzegorz Jakimowicz, Karolina Wojciechowicz
\end{center}

\begin{center}
Institute of Mathematics, University of Białystok, Ciołkowskiego 1M, 15-245 Białystok, Poland
\end{center}
\begin{center}
E-mail: alina.dobrogowska@uwb.edu.pl, g.jakimowicz@uwb.edu.pl, kzukowska@math.uwb.edu.pl
\end{center}

\begin{abstract}
We construct the family of algebroid brackets $[\cdot,\cdot]_{c,v}$ on the tangent bundle $T^*M$ to a Poisson manifold $(M,\pi)$ starting from an algebroid bracket of differential forms. We use these brackets to generate Poisson structures on the tangent bundle $TM$.
Next, in the case when $M$ is equipped with a bi-Hamiltonian structure $(M,\pi_1, \pi_2)$ we show how to construct another family of Poisson structures.
Moreover we present how to find Casimir functions for those structures  and we discuss some particular examples.
\end{abstract}

{\bf Keywords:}
Lie algebroid, linear Poisson structure,  bi-Hamiltonian structure, Lie algebra, Casimir function, tangent lift of Poisson structure, Lagrange top\\

\section{Introduction}

The theory of Lie algebroids, see \cite{Ka, 5,13,2,4, We}, is one of important tools of the theory of integrable systems.
The theories of Poisson and bi-Hamiltonian manifolds are another useful tool (see e.g.  \cite{a39, a40, a35,a31,14,10}).
There are links between Poisson manifolds and Lie algebroids.
It is well known that the total space of the dual bundle of a Lie algebroid has a canonical Poisson structure
and there exists the canonical algebroid bracket  of differential forms $A=T^*M$, where $M$ is  a Poisson manifold.
In this paper we consider some modifications of this bracket.
The aim of this paper is to bring together two areas: the theory of Lie algebroids and the theory of Poisson manifolds.
We study the connection between Poisson manifolds and bi-Hamiltonian manifolds,  and some deformation of  Poisson structures generated by  Lie algebroid structures of differential forms.

The paper is organized as follows. In the beginning of
Section 2 we recall the definitions and well known results about Lie algebroids and bi-Hamiltonian manifolds.
Next sections contain the main results of the paper.  In Section 3 we define  the family of algebroid brackets $[\cdot,\cdot]_{c,v}$ (Theorem \ref{theorem-1}) on the tangent bundle $T^*M$ to a Poisson manifold $(M,\pi)$ starting from a algebroid bracket of differential forms. In Section 4 we describe the Poisson structure on the dual bundle of Lie algebroid and we define an additional Poisson structures on $TM$. We show how the bi-Hamiltonian structure from $M$ transfers to the tangent space $TM$.  Moreover we discuss how to lift Casimirs functions and a family of functions in involution  from $M$ to $TM$.
Section 5 contains additional procedures for building Poisson structures on $TM$. 
We also present some examples in Section 6, e.g. the Lagrange top.

\section{Preliminaries and notations}

In the present section  we recall some basic facts about Poisson manifolds, linear Poisson structures, Lie algebroids,  bi-Hamiltonian manifolds and tangent lifts of Poisson structures and bi-Hamiltonian structures.
 
Let $(M,\pi)$ be a $N$--dimensional Poisson manifold. Then the Poisson tensor $\pi\in \Gamma \left(\bigwedge^2 TM\right) $ can be written as 
\begin{equation}
\pi({\bf x})=\sum_{ i,j=1}^{N} \dfrac{1}{2} \pi_{ij}({\bf x})\dfrac{\partial}{\partial x_i} \wedge \dfrac{\partial}{\partial x_j},
\end{equation}
where  ${\bf x}=(x_1,\dots, x_N)$ is a system of local coordinates on $M$.
It leads to the Poisson bracket in the form
\begin{equation}
\{f,g\}({\bf x})=\sum_{ i,j=1}^{N}\pi_{ij}({\bf x})\dfrac{\partial f}{\partial x_i} \dfrac{\partial g}{\partial x_j},
\end{equation}
which is a skew-symmetric bilinear mapping satisfying the Jacobi identity
\begin{equation}
\{\{f,g\},h\}+\{\{h,f\},g\}+\{\{g,h\},f\}=0,
\end{equation}
as well as the Leibniz rule
\begin{equation}
\{ fg,h\}=f\{ g,h \} +g\{ f,h\}.
\end{equation}
The bivector $\pi_{ij}({\bf x})=-\pi_{ji}({\bf x})=\{x_i,x_j\}$ satisfies the following system of equations equivalent to the Jacobi identity
\begin{equation} 
\sum_{s=1}^N\left(
\frac{\partial \pi_{ij}}{\partial x_s}\pi_{sk}+
\frac{\partial \pi_{ki}}{\partial x_s}\pi_{sj}+
\frac{\partial \pi_{jk}}{\partial x_s}\pi_{si}\right)=0.
\end{equation}

Given a Lie algebra $(\mathfrak{g},[\cdot, \cdot])$ of dimension $N$, there exists the canonical Poisson structure on $M=\mathfrak{g}^*$. This bracket is called Lie--Poisson bracket and can be defined by the following formula
\begin{equation}
\{f,g\}({\bf x})=\langle {\bf x},[df({\bf x}), dg({\bf x})]\rangle,
\end{equation}
where $df({\bf x}), dg({\bf x})\in \left(\mathfrak{g}^*\right)^*\simeq  \mathfrak{g}$. There is a natural equivalence between $N$--dimensional linear Poisson structures and $N$--dimensional Lie algebras. If $[e_i,e_j]=\sum_{n=1}^Nc_{ij}^n e_n$ then $\{x_i,x_j\}=\sum_{n=1}^Nc_{ij}^n x_n$,  where $(e_1,e_2,\dots, e_N)$ is a basis of $\mathfrak{g}$ and $c_{ij}^n$ are the structure constants of this Lie algebra.

Let us consider a certain Lie algebroid structure on $T^*M$
\begin{equation}
 \label{d1111xxx}
\xymatrix{
& & A=T^*M \ar@<.0ex>[dd]^*+<1ex>\txt{\tiny{{$q^*_{_{M}}$}}} \ar@<-.0ex>[rr]^*+<1ex>\txt{\tiny{{$a$}}}  & & TM \ar@<.0ex>[dd]^*+<1ex>\txt{\tiny{{$q_{_{M}}$}}}&&\\
    & &&&   & &\\
  && M  \ar@<-.0ex>[rr]^*+<1ex>\txt{\tiny{{$id$}}}  & & M &&\\
 }
 \\
 \end{equation}
A vector bundle map $a$ is called the anchor of the Lie algebroid $A=T^*M$ and in this case it is defined as
\begin{equation}
\label{anchor}
a(df)(\cdot)=\{f,\cdot\}.
\end{equation}
Sections $\Gamma A $ form a Lie algebra with a Lie bracket 
\begin{equation}
\label{bra-alg}
[df,dg]=d\{f,g\},
\end{equation}
 where $f,g\in C^{\infty}(M)$. This bracket must satisfy the following conditions
\begin{align}
& \label{an-1} [df, h\,dg]=h[df,dg]+a(df)(h)dg,\\
& \label{an-2}a\left([df,dg]\right)=[a(df),a(dg)],
\end{align}
for all $df,dg\in \Gamma A$, $h\in C^{\infty}(M)$, see \cite{3,4}. 
On the dual space  $(TM,q_M, M)$ to the Lie algebroid $(T^*M,q^*_M, M)$ we have the  tangent Poisson structure.
The Poisson bracket on $C^{\infty}(TM)$ is given by relations
\begin{align}
& \label{1}\{f\circ q_M, g\circ q_M\}_{TM}=0,\\
& \{ l_{df}, l_{dg}\}_{TM}=l_{[df,dg]},\\
& \label{2}\{f\circ q_M, l_{dg}\}_{TM}=-a(dg)(f)\circ q_M,
\end{align}
where $f,g\in C^{\infty}(M)$.
 In the above formulas $l_{df}\in C^{\infty}(TM)$ is defined by pairing
\begin{equation}
\label{new-var}
l_{df}(\xi)=\big\langle \xi, df(q_M(\xi))\big\rangle, \quad  \xi\in TM.
\end{equation}
In this situation the tangent Poisson tensor can be expressed by formula
\begin{equation}
\label{tensor-pi}
\pi_{TM}({\bf x}, {\bf y})=\left(
\begin{array}{c|c}
0& \pi({\bf x})\\
\hline
\pi({\bf x}) & \sum_{s=1}^{N}\dfrac{\partial\pi }{\partial x_s}({\bf x})y_s
\end{array}
\right),
\end{equation}
where $({\bf x}, {\bf y})=(x_1,\dots, x_N, y_1=l_{dx_1},\dots, y_N=l_{dx_N})$ is a system of local coordinates on $TM$.

Some of the properties of such Poisson structure are well known, see \cite{AJ, GraUrb}.
If $c_1,\dots, c_r$ are  Casimir functions for the Poisson structure $\pi$, i.e. $\{c, f\}=0$ for all $f\in C^{\infty}(M)$,  then the functions 
\begin{equation}
\label{cas}
c_i\circ q_M \quad \textrm{and}\quad l_{dc_i}=\sum_{s=1}^{N}\dfrac{\partial c_i}{\partial x_s}y_s, \quad i=1,\dots r,
\end{equation}
are Casimir functions for the Poisson tensor $\pi_{TM}$. 
Subsequently if the functions $\{H_i\}_{i=1}^k$ are in involution with respect to the Poisson bracket generated by $\pi$,
then the functions 
\begin{equation}
\label{cas-1-n}
H_i\circ q_M^* \quad \textrm{and}\quad  l_{dH_i}=\sum_{s=1}^{N}\dfrac{\partial H_i}{\partial x_s}({\bf x})y_s,\quad i=1,\dots k,
\end{equation}
 are in involution with respect to the Poisson tensor $\pi_{TM}$ given by (\ref{tensor-pi}).
 
We say that  two Poisson tensors $\pi_1$ and $\pi_2$ are compatible if any linear combination
\begin{equation}
\label{bi}
\pi_{\alpha,\beta}=\alpha \pi_1+\beta \pi_2, \quad \alpha,\beta\in\mathbb{R}
\end{equation}
is also a Poisson tensor.
The Poisson structures $\pi_1$ and $\pi_2$ on $M$ are compatible if and only if their Schouten--Nijenhuis bracket vanishes $[\pi_1,\pi_2]_{S-N}=0$, which means that
\begin{equation}
\label{bi-H}
\sum_{s=1}^N\left(\!\!
\pi_{2,sk} \frac{\partial \pi_{1,ij}}{\partial x_s}+
\pi_{1,sk} \frac{\partial \pi_{2,ij}}{\partial x_s}+
\pi_{2,sj} \frac{\partial \pi_{1,ki}}{\partial x_s}+
\right.
\end{equation}
$$\left.
\pi_{1,sj} \frac{\partial \pi_{2,ki}}{\partial x_s}+
\pi_{2,si} \frac{\partial \pi_{1,jk}}{\partial x_s}+
\pi_{1,si} \frac{\partial \pi_{2,jk}}{\partial x_s}\right)=0,
$$
see \cite{8,7}. The manifold $M$ equipped with two compatible Poisson structures $\pi_1$ and $\pi_2$ is called bi-Hamiltonian manifold and we denote it as $(M,\pi_1,\pi_2)$.

As it was shown in \cite{AJ} the algebroid structure (\ref{tensor-pi}) on $TM$  can be deformed as follows
\begin{equation}
\label{tensor-pi-lambda}
\pi_{TM, \lambda}({\bf x}, {\bf y})=\left(
\begin{array}{c|c}
0& \pi_2({\bf x})\\
\hline
\pi_2({\bf x}) & \sum_{s=1}^{N}\dfrac{\partial\pi_2 }{\partial x_s}({\bf x})y_s+\lambda \pi_1 ({\bf x})
\end{array}
\right)
\end{equation}
using bi-Hamiltonian structure $(M,\pi_1,\pi_2)$.
This structure can also be presented globally
\begin{align}
& \label{1n1}\{f\circ q_M, g\circ q_M\}_{TM}=0, \quad \{ l_{df}, l_{dg}\}_{TM}=l_{[df,dg]}+\lambda \{f,g\}_1\circ q_M,\\
& \label{2n2}\{f\circ q_M, l_{dg}\}_{TM}=-a(dg)(f)\circ q_M,
\end{align}
where $f,g\in C^{\infty}(M)$, $\lambda\in\mathbb{R}$. Some of the properties of the Poisson structure above are known, see \cite{AJ}.
If functions $\{H_i\}_{i=1}^k$ are in involution with respect to  the both Poisson brackets given by $\pi_1$ and $\pi_2$,
then the functions (\ref{cas-1-n})  are in involution with respect to the Poisson tensor (\ref{tensor-pi-lambda}).
Moreover, if $c_1,\dots, c_r$, where $r=\dim M-\operatorname{rank} \pi_2$, are  Casimir functions for the Poisson structure $\pi_2$ and functions $f_i^{\lambda}$, $i=1,\dots, r$, satisfy the conditions  
\begin{equation}
\label{q1-n}
\{f^{\lambda}_i,x_j\}_1=\{x_j,c_i\}_2, \quad \textrm{for}\quad j=1,\cdots, n,
\end{equation}
then the functions 
\begin{equation}
\label{cas-b-n}
c_i\circ q_M^* \quad \textrm{and}\quad  \tilde{c}_i=\sum_{s=1}^{N}\dfrac{\partial c_i}{\partial x_s}({\bf x})y_s+\lambda f^{\lambda}_i({\bf x}),\quad i=1,\dots r,
\end{equation}
 are the Casimir functions for the Poisson tensor $\pi_{TM, \lambda}$ given by (\ref{tensor-pi-lambda}).

\section{Deformation of the algebroid bracket of differential forms}

It is well known \cite{3} that there exists the canonical algebroid bracket (\ref{bra-alg}) of differential forms $A=T^*M$, where $M$ is equipped with a Poisson bracket. In this section we consider a some modification of this bracket. We will show that by adding an extra term  to this bracket we will still get the algebroid bracket. We consider a degenerate situation when 
\begin{equation}
\textrm{rank}\; \pi<\dim M.
\end{equation}
In this case Poisson bracket on $M$ has at least one Casimir function.

\begin{theorem}
\label{theorem-1}
Let $(M,\pi)$ be a  Poisson manifold and  $c$ be a Casimir function for $\pi$. If there exists  a vector field  $v\in \Gamma^{\infty}(TM)$ such that Lie derivative of $\pi$ with respect to $v$ vanishes
\begin{equation}
\mathcal{L}_{v}\pi=0,
\end{equation}
then $\left(T^*M,[\cdot, \cdot]_{c,v}, a_{c,v}\right)$ is a Lie algebroid, where Lie bracket
$[\cdot, \cdot]_{c,v}: \Gamma^{\infty}(T^*M)\times \Gamma^{\infty}(T^*M)\longrightarrow \Gamma^{\infty}(T^*M)$
and  the anchor map $a_{c,v}:\Gamma^{\infty}(T^*M)\longrightarrow \Gamma^{\infty}(TM)$
are given by
\begin{align}
&  [\alpha, \beta]_{c,v}=\mathcal{L}_{\pi(\alpha,\cdot)} \beta-\mathcal{L}_{\pi(\beta,\cdot)} \alpha-d(\pi(\alpha, \beta))+c(\beta(v)\mathcal{L}_{v} \alpha      -\alpha(v)\mathcal{L}_{v}\beta),\\
& a_{c,v}(\alpha)(f)=\pi(\alpha,df)-c \alpha(v) df(v),
\end{align}
for all  $\alpha, \beta\in \Gamma^{\infty}(T^*M)$ and $f\in C^{\infty}(M)$.
\end{theorem}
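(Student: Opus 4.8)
The plan is to regard the new structure as a perturbation of the canonical Koszul algebroid recalled above. Write $a_K(\alpha)=\pi(\alpha,\cdot)$ and $[\alpha,\beta]_K=\mathcal{L}_{\pi(\alpha,\cdot)}\beta-\mathcal{L}_{\pi(\beta,\cdot)}\alpha-d(\pi(\alpha,\beta))$ for the canonical anchor and bracket (\ref{bra-alg}) on $A=T^*M$, which already form a Lie algebroid. Abbreviating $X_\alpha:=a_K(\alpha)=\pi(\alpha,\cdot)$, one has the decompositions
\begin{align}
& [\alpha,\beta]_{c,v}=[\alpha,\beta]_K+c\big(\beta(v)\mathcal{L}_v\alpha-\alpha(v)\mathcal{L}_v\beta\big),\\
& a_{c,v}(\alpha)=X_\alpha-c\,\alpha(v)\,v.
\end{align}
Bilinearity over $\mathbb{R}$ and skew-symmetry are then immediate, as the added term is manifestly skew in $(\alpha,\beta)$. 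For the Leibniz rule, expanding $\mathcal{L}_v(h\beta)=(vh)\beta+h\mathcal{L}_v\beta$ and $(h\beta)(v)=h\,\beta(v)$ shows the perturbation satisfies $c\big(\cdot\big)(\alpha,h\beta)=h\,c\big(\cdot\big)(\alpha,\beta)-c\,\alpha(v)(vh)\beta$; combined with the Koszul rule $[\alpha,h\beta]_K=h[\alpha,\beta]_K+X_\alpha(h)\beta$ this gives $[\alpha,h\beta]_{c,v}=h[\alpha,\beta]_{c,v}+a_{c,v}(\alpha)(h)\beta$, since $X_\alpha(h)-c\,\alpha(v)(vh)=a_{c,v}(\alpha)(h)$. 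This step is purely formal and uses neither hypothesis.

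The two hypotheses enter the verification of the anchor compatibility through two identities. First, $\mathcal{L}_v\pi=0$ is equivalent to $a_K(\mathcal{L}_v\alpha)=\mathcal{L}_v(\pi(\alpha,\cdot))=[v,X_\alpha]$ for every $\alpha$, or dually to $v$ being a derivation of the Poisson bracket, $v\{f,g\}=\{vf,g\}+\{f,vg\}$. Second, $c$ being a Casimir gives $X_\alpha(c)=\pi(\alpha,dc)=0$ for all $\alpha$. Using these together with $a_K([\alpha,\beta]_K)=[X_\alpha,X_\beta]$, I would expand both sides of $a_{c,v}([\alpha,\beta]_{c,v})=[a_{c,v}(\alpha),a_{c,v}(\beta)]$. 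Both reduce to $[X_\alpha,X_\beta]+c\,\beta(v)[v,X_\alpha]-c\,\alpha(v)[v,X_\beta]$ plus a multiple of $v$, and matching the coefficients of $v$ amounts to the scalar identity $[\alpha,\beta]_{c,v}(v)=X_\alpha(\beta(v))-X_\beta(\alpha(v))-c\,\alpha(v)\,v(\beta(v))+c\,\beta(v)\,v(\alpha(v))$. This is obtained by evaluating the bracket on $v$ and invoking $\mathcal{L}_v\pi=0$ once more in the form $v(\pi(\alpha,\beta))=\pi(\mathcal{L}_v\alpha,\beta)+\pi(\alpha,\mathcal{L}_v\beta)$.

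For the Jacobi identity I would not attack it head-on but exploit that the Jacobiator $\mathrm{Jac}(\alpha,\beta,\gamma)=[[\alpha,\beta]_{c,v},\gamma]_{c,v}+\text{cyclic}$ is totally skew-symmetric and, for any skew bracket satisfying the Leibniz rule whose anchor is a bracket homomorphism, is $C^\infty(M)$-linear in each argument. Having already established the Leibniz rule and the anchor homomorphism, this tensoriality is automatic, so it suffices to check the Jacobi identity on a local generating set of exact one-forms $\alpha=df$, $\beta=dg$, $\gamma=dh$. On exact forms the bracket collapses to $[df,dg]_{c,v}=d\{f,g\}+c\big((vg)\,d(vf)-(vf)\,d(vg)\big)$, and expanding the cyclic sum (repeatedly using the Leibniz rule to peel off the non-exact factors) reduces the claim to three inputs: the Jacobi identity for $\{\cdot,\cdot\}$, the Casimir property $\{c,\cdot\}=0$, and the derivation property $v\{f,g\}=\{vf,g\}+\{f,vg\}$ coming from $\mathcal{L}_v\pi=0$.

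I expect the anchor homomorphism and the exact-form Jacobi computation to be the only substantive parts, and of these the Jacobi verification is the most laborious, since the perturbation term is not closed and its brackets generate many terms. The cancellations are, however, governed by the same two structural facts used throughout — that $c$ is $\pi$-central and that $v$ is an infinitesimal Poisson automorphism — so the bookkeeping, though lengthy, becomes routine once the tensoriality reduction has removed the need to handle general sections.
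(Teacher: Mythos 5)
Your proposal is correct, and it rests on the same two pillars as the paper's own proof — the splitting of $[\cdot,\cdot]_{c,v}$ into the canonical Koszul bracket plus the perturbation term, and the reduction of the Jacobi identity to exact forms, where the only inputs are the Jacobi identity for $\{\cdot,\cdot\}$, the Casimir property of $c$, and the derivation property of $v$ — but it is organized genuinely differently. The paper factors the argument through two lemmas: first it proves that the perturbation pair $[\alpha,\beta]_v=c\left(\beta(v)\mathcal{L}_v\alpha-\alpha(v)\mathcal{L}_v\beta\right)$, $a_v(\alpha)(f)=-c\,\alpha(v)\,df(v)$ is \emph{by itself} a Lie algebroid for an arbitrary smooth function $c$ and arbitrary vector field $v$ (no hypotheses at all), and then it shows that the sum of this bracket with the canonical one is again an algebroid bracket; the payoff is that in the final Jacobi check the two pure Jacobiators vanish separately, so only the cross terms $\circlearrowleft\left([[df,dg],dh]_v+[[df,dg]_v,dh]\right)$ must cancel, and that is exactly where $\{h,c\}=0$ and $\mathcal{L}_v\pi=0$ enter. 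You instead verify the three axioms for the full bracket in one pass, which means your exact-form expansion must also absorb the pure perturbation--perturbation terms (the paper disposes of those once and for all in its first lemma). In exchange, your treatment is sharper at two points: you make explicit the tensoriality of the Jacobiator (skew-symmetry, Leibniz rule and anchor homomorphism imply $C^{\infty}(M)$-linearity in each slot), which the paper invokes only implicitly when it says the identity ``results from 1., 2. and below''; and your anchor-compatibility check is coordinate-free — writing $a_{c,v}(\alpha)=X_\alpha-c\,\alpha(v)v$, observing that both sides agree modulo multiples of $v$, and reducing everything to the single scalar identity $[\alpha,\beta]_{c,v}(v)=X_\alpha(\beta(v))-X_\beta(\alpha(v))-c\,\alpha(v)v(\beta(v))+c\,\beta(v)v(\alpha(v))$ — whereas the paper computes on exact forms and then extends to $h\,df$, $p\,dg$ via the Leibniz rule. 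The only place you are thinner than the paper is the concluding exact-form expansion, which you defer as routine bookkeeping; the paper carries it out in full and its computation confirms that the cancellations follow from precisely the three inputs you name, so this is a matter of detail rather than a gap.
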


We have divided the proof into two lemmas.
\begin{lemma}
\label{lemma-1}
Suppose that $(M,\pi)$ is a  Poisson manifold, $v\in \Gamma^{\infty}(TM)$ is a vector field and $c\in C^{\infty}(M)$ is a smooth function.
Then 
$\left(T^*M,[\cdot, \cdot]_{v}, a_{v}\right)$ is a Lie algebroid if a Lie bracket
$[\cdot, \cdot]_{v}: \Gamma^{\infty}(T^*M)\times \Gamma^{\infty}(T^*M)\longrightarrow \Gamma^{\infty}(T^*M)$
and  a anchor $a_{v}:\Gamma^{\infty}(T^*M)\longrightarrow \Gamma^{\infty}(TM)$
are given by
\begin{align}
& \label{an-3}  [\alpha, \beta]_{v}= c(\beta(v)\mathcal{L}_{v} \alpha      -\alpha(v)\mathcal{L}_{v}\beta),\\
& \label{an-4} a_{v}(\alpha)(f)=-c \alpha(v) df(v),
\end{align}
where $\alpha, \beta\in \Gamma^{\infty}(T^*M)$ and $f\in C^{\infty}(M)$.
\end{lemma}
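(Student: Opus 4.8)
The plan is to verify the Lie algebroid axioms directly from the explicit formulas, exploiting the fact that the anchor is very simple. Since $a_v(\alpha)(f) = -c\,\alpha(v)\,df(v) = -c\,\alpha(v)\,v(f)$, we may rewrite it as $a_v(\alpha) = -c\,\alpha(v)\,v$, so that its image always lies in the line spanned by the fixed vector field $v$. The $\mathbb{R}$-bilinearity of $[\cdot,\cdot]_v$ is immediate, and skew-symmetry follows at once by swapping $\alpha$ and $\beta$ in (\ref{an-3}). It therefore remains to check the Leibniz rule (the analogue of (\ref{an-1})), the anchor-homomorphism property (the analogue of (\ref{an-2})), and the Jacobi identity.

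For the Leibniz rule I would compute $[\alpha, h\beta]_v$ using $(h\beta)(v) = h\,\beta(v)$ and $\mathcal{L}_v(h\beta) = v(h)\beta + h\,\mathcal{L}_v\beta$; collecting terms gives $[\alpha, h\beta]_v = h[\alpha,\beta]_v - c\,\alpha(v)\,v(h)\,\beta$, and since $a_v(\alpha)(h)\beta = -c\,\alpha(v)\,v(h)\,\beta$ this is exactly $h[\alpha,\beta]_v + a_v(\alpha)(h)\beta$. The whole argument rests on one elementary identity: evaluating the Lie-derivative formula $(\mathcal{L}_v\alpha)(w) = v(\alpha(w)) - \alpha([v,w])$ at $w=v$ and using $[v,v]=0$ gives
\begin{equation}
(\mathcal{L}_v\alpha)(v) = v(\alpha(v)).
\end{equation}
With this the homomorphism property is short: on one side $a_v([\alpha,\beta]_v) = -c\,[\alpha,\beta]_v(v)\,v = -c^2\big(\beta(v)\,v(\alpha(v)) - \alpha(v)\,v(\beta(v))\big)v$, while on the other side, writing $a_v(\alpha) = g\,v$ and $a_v(\beta) = k\,v$ with $g=-c\,\alpha(v)$, $k=-c\,\beta(v)$, one has $[g v, k v] = \big(g\,v(k) - k\,v(g)\big)v$; upon expanding $g$ and $k$ the terms proportional to $v(c)$ cancel and the two expressions coincide.

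The main obstacle is the Jacobi identity, which I would establish by a direct but carefully organized expansion. Abbreviating $L := \mathcal{L}_v$ and the functions $a := \alpha(v)$, $b := \beta(v)$, $g := \gamma(v)$, I first record $[\alpha,\beta]_v(v) = c\big(b\,v(a) - a\,v(b)\big)$ (again via $(\mathcal{L}_v\alpha)(v)=v(\alpha(v))$) together with $L[\alpha,\beta]_v = v(c)(bL\alpha - aL\beta) + c\big(v(b)L\alpha + bL^2\alpha - v(a)L\beta - aL^2\beta\big)$. Substituting these into $[[\alpha,\beta]_v,\gamma]_v$ and forming the cyclic sum, the contributions separate according to the order of the Lie derivative they carry: the second-order terms (the coefficients of $L^2\alpha$, $L^2\beta$, $L^2\gamma$) cancel in pairs, since each such coefficient occurs twice with opposite signs, and the remaining first-order terms (the coefficients of $L\alpha$, $L\beta$, $L\gamma$, including those proportional to $v(c)$) cancel once the three cyclic contributions are added. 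Hence the Jacobiator vanishes and $(T^*M, [\cdot,\cdot]_v, a_v)$ is a Lie algebroid. The only genuinely delicate part is the bookkeeping of the cyclic substitutions in this last step; everything else is mechanical.
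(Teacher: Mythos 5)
Your proof is correct, and it verifies all the axioms; the interesting difference from the paper's argument lies in how the Jacobi identity is handled. The paper, like you, proves the Leibniz rule and the anchor-morphism property by direct computation (its anchor computation is essentially yours, phrased via Cartan's formula $\mathcal{L}_v\alpha = \iota_v d\alpha + d(\alpha(v))$ rather than via the factorization $a_v(\alpha) = -c\,\alpha(v)\,v$, but both reduce to the same key identity $(\mathcal{L}_v\alpha)(v) = v(\alpha(v))$ that you isolate). For the Jacobi identity, however, the paper only checks the cyclic sum on \emph{exact} forms $df, dg, dh$ — where $\mathcal{L}_v\,df = d(df(v))$ collapses the bracket to $[df,dg]_v = c\bigl(dg(v)\,d(df(v)) - df(v)\,d(dg(v))\bigr)$ — and then invokes the standard fact that, once the Leibniz rule and the anchor-morphism property hold, the Jacobiator is $C^\infty(M)$-linear in each slot, hence tensorial, so vanishing on exact generators suffices. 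You instead expand the Jacobiator directly for arbitrary $1$-forms, tracking the $v(c)$, first-order, and second-order Lie-derivative terms and showing each group cancels in the cyclic sum (I checked the bookkeeping: the $L^2$ coefficients cancel in pairs, and the $v(c)$ and first-order coefficients of $L\alpha$, $L\beta$, $L\gamma$ each sum to zero). Your route costs more computation but is self-contained: it never needs the tensoriality-of-the-Jacobiator lemma, which the paper uses only implicitly (``this identity results from 1., 2. and below''). The paper's route buys a shorter calculation at the price of that extra structural ingredient.
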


\begin{proof}
\begin{enumerate}
\item Let us first prove the formula (\ref{an-1}), i.e. $[\alpha, f\beta]_{v}=f[\alpha, \beta]_{v}+a_{v}(\alpha)(f)\beta$. Using the Leibniz rule for the Lie derivative we calculate
$$
[\alpha, f\beta]_{v} = c(f\beta(v)\mathcal{L}_{v} \alpha      -\alpha(v)\mathcal{L}_{v}(f\beta))=f [\alpha, \beta]_{v}-c\alpha (v)\beta\mathcal{L}_{v} f=$$
$$=f [\alpha, \beta]_{v}+a_{v}(\alpha)(f)\beta.
$$
\item Furthermore, if we prove that (\ref{an-2}), i.e.
\begin{equation}
a_{v}\left([\alpha,\beta]_{v}\right)=[a_{v}(\alpha),a_{v}(\beta)],
\end{equation}
the assertion follows. Using the relationship $\mathcal{L}_{v}\alpha=d\alpha(v)+d\left(\alpha(v)\right)$ we first compute the left side of the above equality
\begin{equation}
a_{v}\left([\alpha,\beta]_{v}\right)(f)=a_{v}\left(c\left( \beta(v)d(\alpha(v))-\alpha(v)d(\beta(v)) \right)\right)(f)=
\end{equation}
$$
=-c^2\left( \beta(v)d(\alpha(v))(v)-\alpha(v)d(\beta(v))(v) \right)df(v).
$$
The right side of the equality is
\begin{equation}
[a_{v} (\alpha), a_{v} (\beta)](f)=a_{v} (\alpha)\left(  a_{v} (\beta)(f) \right)-a_{v} (\beta)\left(  a_{v} (\alpha)(f) \right)=
\end{equation}
$$
=a_{v} (\alpha)\left(  -c \beta (v) df(v) \right)-a_{v} (\beta)\left(   -c \alpha (v) df(v) \right)=
$$$$
=c\alpha (v) d\left( c \beta (v) df(v) \right)(v)-c\beta (v) d\left( c \alpha (v) df(v) \right)(v)=
$$$$
=-c^2\left(  \beta(v)d(\alpha(v))(v)-\alpha(v)d(\beta(v))(v) \right)df(v).
$$
\item We are now in the position to show the Jacobi identity for this bracket 
(this identity results from 1., 2. and below).
 We have
\begin{equation}
 \circlearrowleft [[df,dg]_{v},dh]_{v}=  \circlearrowleft [c\left( dg(v)d \left( df(v) \right)- df(v)d \left( dg(v) \right)\right), dh]_{v}=
\end{equation}
$$
=\circlearrowleft \left(
c[ dg(v)d \left( df(v) \right)- df(v)d \left( dg(v) \right), dh]_{v}-\right.
$$$$\left.
-a_{v}(dh)(c)\left(dg(v)d \left( df(v) \right)- df(v)d \left( dg(v) \right)\right)
\right)=
$$$$
=\circlearrowleft \left( c dg(v)[ d \left( df(v) \right), dh]_{v}-c a_{v}(dh)(dg(v)) d\left( df(v) \right)- \right.
$$$$
-c df(v)[ d \left( dg(v) \right), dh]_{v}+c a_{v}(dh)(df(v)) d\left( dg(v) \right)-
$$$$
\left. -a_{v}(dh)(c)\left(dg(v)d \left( df(v) \right)- df(v)d \left( dg(v) \right) \right)\right)=0.
$$
Here $\circlearrowleft [[f,g]_{v},h]_{v}$ indicates the sum over circular permutations of $f$, $g$, $h$.
This finishes the proof.
\end{enumerate}
\end{proof}

\begin{lemma}
If $c$ is a Casimir function for Poisson bracket $\{\cdot, \cdot\}$ and if $\mathcal{L}_{v}\pi=0$, then the linear combination of the Lie bracket (\ref{bra-alg}) and  the bracket described in Lemma \ref{lemma-1} 
\begin{equation}
[\alpha, \beta]_{c,v}=
[\alpha, \beta]+ [\alpha, \beta]_{v}
\end{equation}
is again the algebroid bracket with the anchor 
\begin{equation}
a_{c,v}(\alpha)(f)=a(\alpha)(f)+a_{v}(\alpha)(f)
\end{equation}
where $\alpha, \beta\in \Gamma^{\infty}(T^*M)$ and $f\in C^{\infty}(M)$.
\end{lemma}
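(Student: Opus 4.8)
The plan is to exploit the fact that $[\cdot,\cdot]_{c,v}$ is by construction the sum of two brackets already known to be Lie algebroid brackets: the canonical Koszul bracket $[\cdot,\cdot]$ with anchor $a$, governed by \eqref{an-1}, \eqref{an-2}, \eqref{bra-alg}, and the bracket $[\cdot,\cdot]_v$ with anchor $a_v$ established in Lemma \ref{lemma-1}. Both summands are $\mathbb{R}$-bilinear and skew-symmetric, so $[\cdot,\cdot]_{c,v}$ inherits these properties immediately. What remains is to verify the three algebroid axioms for the combined bracket with anchor $a_{c,v}=a+a_v$: the Leibniz rule \eqref{an-1}, the anchor-homomorphism property \eqref{an-2}, and the Jacobi identity. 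Throughout I would write the canonical anchor as $a(\alpha)=\pi(\alpha,\cdot)$ and note $a_v(\alpha)=-c\,\alpha(v)\,v$ as a vector field.

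The Leibniz rule is free: adding $[\alpha,f\beta]=f[\alpha,\beta]+a(\alpha)(f)\beta$ and $[\alpha,f\beta]_v=f[\alpha,\beta]_v+a_v(\alpha)(f)\beta$ gives $[\alpha,f\beta]_{c,v}=f[\alpha,\beta]_{c,v}+a_{c,v}(\alpha)(f)\beta$ at once. For the anchor-homomorphism axiom I would expand $a_{c,v}([\alpha,\beta]_{c,v})$ and $[a_{c,v}(\alpha),a_{c,v}(\beta)]$ bilinearly. The two diagonal parts $a([\alpha,\beta])=[a(\alpha),a(\beta)]$ and $a_v([\alpha,\beta]_v)=[a_v(\alpha),a_v(\beta)]$ cancel by the two constituent algebroid structures, leaving the cross identity
\[
a([\alpha,\beta]_v)+a_v([\alpha,\beta])=[a(\alpha),a_v(\beta)]+[a_v(\alpha),a(\beta)]
\]
to be checked directly. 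Unfolding both sides with the vector-field Leibniz rule $[X,gY]=g[X,Y]+X(g)Y$, the reduction hinges on two facts: terms of the form $a(\alpha)(c)=\pi(\alpha,dc)$ vanish because $c$ is a Casimir function, and the relation $\pi(\mathcal{L}_v\alpha,\cdot)=\mathcal{L}_v(\pi(\alpha,\cdot))$, equivalently $[v,a(\alpha)]=a(\mathcal{L}_v\alpha)$, holds precisely because $\mathcal{L}_v\pi=0$. With these the identity collapses to $[\alpha,\beta](v)=a(\alpha)(\beta(v))-a(\beta)(\alpha(v))$, which follows after expanding the Koszul bracket on $v$ via $(\mathcal{L}_X\beta)(v)=X(\beta(v))-\beta([X,v])$ and again invoking $\mathcal{L}_v\pi=0$ to cancel the remaining $\pi(\mathcal{L}_v\alpha,\beta)$-type contributions. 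This is the step where both hypotheses are genuinely used, and it is the main obstacle.

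For the Jacobi identity I would first observe that, once Leibniz and the anchor-homomorphism property hold, the Jacobiator of $[\cdot,\cdot]_{c,v}$ is $C^\infty(M)$-linear in each argument, hence tensorial; it therefore suffices to verify it on the exact generators $df,dg,dh$, exactly as in the proof of Lemma \ref{lemma-1}. Expanding the cyclic sum bilinearly splits it into the pure canonical sum $\circlearrowleft[[df,dg],dh]=\circlearrowleft d\{\{f,g\},h\}$, which vanishes by the Jacobi identity for the Poisson bracket together with $dd=0$; the pure sum $\circlearrowleft[[df,dg]_v,dh]_v$, which vanishes by Lemma \ref{lemma-1}; and the mixed cyclic sum
\[
\circlearrowleft\Big([[df,dg]_v,dh]+[[df,dg],dh]_v\Big).
\]
The core of the argument is to show this mixed sum is zero. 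Here I expect to move $\mathcal{L}_v$ past the Poisson-theoretic operations using $\mathcal{L}_v\pi=0$ so that $\mathcal{L}_v$ and $\pi(\cdot,\cdot)$ commute appropriately, and repeatedly to discard terms in which a derivative of $c$ enters through $\pi$, which vanish since $c$ is Casimir; the cyclic symmetrization then matches the surviving contributions of the two mixed brackets against each other. As in Lemma \ref{lemma-1}, exactness of the generators, through $\mathcal{L}_v\,df=d(df(v))$ and $dd=0$, is what keeps these expansions manageable.
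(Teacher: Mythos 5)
Your proposal is correct, and its overall skeleton matches the paper's: Leibniz rule by additivity; then the anchor condition; then the Jacobi identity checked only on exact generators and reduced to the mixed cyclic sum $\circlearrowleft\big([[df,dg]_v,dh]+[[df,dg],dh]_v\big)$, which is exactly the reduction in the paper's equation (\ref{Jacobi}). Where you genuinely differ is the anchor step. The paper proves $a_{c,v}([\alpha,\beta]_{c,v})=[a_{c,v}(\alpha),a_{c,v}(\beta)]$ in two stages: first on exact forms by an explicit Poisson-bracket computation, then on general forms $h\,df$, $p\,dg$ by invoking the already-established Leibniz rule. You instead cancel the diagonal terms against the two constituent algebroid structures and verify the cross identity
$a([\alpha,\beta]_v)+a_v([\alpha,\beta])=[a(\alpha),a_v(\beta)]+[a_v(\alpha),a(\beta)]$
invariantly for arbitrary one-forms, reducing it via $[v,a(\alpha)]=a(\mathcal{L}_v\alpha)$ (from $\mathcal{L}_v\pi=0$) and $a(\alpha)(c)=0$ (Casimir) to the identity $[\alpha,\beta](v)=a(\alpha)(\beta(v))-a(\beta)(\alpha(v))$. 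That identity does hold under $\mathcal{L}_v\pi=0$ (expand the Koszul bracket on $v$ and use $\pi(\mathcal{L}_v\alpha,\beta)+\pi(\alpha,\mathcal{L}_v\beta)=v(\pi(\alpha,\beta))$), so this route is sound, and it is arguably cleaner than the paper's: it isolates exactly where each hypothesis enters and avoids the generator-plus-extension argument. You also make explicit the tensoriality of the Jacobiator, which the paper only asserts (``results from 1., 2. and below''). The one thin spot is the final step: you describe rather than compute the vanishing of the mixed cyclic sum; this is the bulk of the paper's calculation, and it does go through exactly as you predict --- $d\{f,g\}(v)=\{df(v),g\}+\{f,dg(v)\}$ from $\mathcal{L}_v\pi=0$, $a(dh)(c)=\{h,c\}=0$ from the Casimir hypothesis, after which the surviving terms cancel pairwise under cyclic symmetrization --- but a complete write-up would need that cancellation carried out.
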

\begin{proof}
\begin{enumerate}
\item Our proof starts with the observation that
\begin{equation}
[\alpha,f\beta]_{c,v}=[\alpha, f\beta]+ [\alpha, f\beta]_{v}=f[\alpha, \beta]+a(\alpha)(f)\beta+
\end{equation}
$$
+ f[\alpha, \beta]_{v}+ a_{v}(\alpha)(f)\beta=f[\alpha, \beta]_{c,v}+a_{c,v}(\alpha)(f)\beta.
$$
\item Next, we show 
that the following equality holds 
\begin{equation}
a_{c,v}\left([\alpha,\beta]_{c,v}\right)=[a_{c,v}(\alpha),a_{c,v}(\beta)].
\end{equation}
We have divided this proof into two steps.
\begin{enumerate}[a)]
\item For differential of  functions. The left side of the equality is
\begin{equation}
L= a_{c,v}\left([df,dg]_{c,v}\right)(h)=
\end{equation}
$$
=a_{c,v}\left(d\{f,g\}+c\left(dg(v)d(df(v))-df(v)d(dg(v))\right)\right)(h)=
$$$$
=\{\{f,g\},h\}+cdg(v)\{df(v),h\}-cdf(v)\{dg(v),h\} -cd\{f,g\}(v)dh(v)-
$$$$
-c^2 dg(v)d(df(v))(v)dh(v)+c^2 df(v)d(dg(v))(v)dh(v).
$$
The Lie derivative of the bi-vector $\pi$ along $v$ is defined by the formula 
$\mathcal{L}_{v}\left(\pi(df,dg)\right) = \left(\mathcal{L}_{v}\pi\right)(df,dg)+\pi (\mathcal{L}_{v}(df), dg)+\pi (df, \mathcal{L}_{v}(dg))=\{df(v),g\}+\{f,dg(v)\}$, where one term of this expression vanish due to the assumption $\mathcal{L}_{v}\pi=0$, then 
$$
L=\{\{f,g\},h\}+cdg(v)\{df(v),h\}-cdf(v)\{dg(v),h\} -cdh(v)\{df(v),g\}-
$$
$$
-cdh(v)\{f,dg(v)\}-c^2 dg(v)d(df(v))(v)dh(v)+c^2 df(v)d(dg(v))(v)dh(v).
$$
The right side of the equality is
\begin{equation}
R=[a_{c,v}(df),a_{c,v}(dg)](h)=
\end{equation}
$$
=a_{c,v}(df)\left(a_{c,v}(dg)(h)\right)-a_{c,v}(dg)\left(a_{c,v}(df)(h)\right)=
$$$$
=a_{c,v}(df)\left(\{g,h\}-cdg(v)dh(v)\right)-a_{c,v}(dg)\left(\{f,h\}-cdf(v)dh(v)\right)=
$$$$
=\{f,\{g,h\}\}-\{f,cdg(v)dh(v)\}-\{g,\{f,h\}\}+\{g,cdf(v)dh(v)\}-
$$$$
-cdf(v)d\left(\{g,h\}-cdg(v)dh(v)\right)(v)+cdg(v)d\left(\{f,h\}-cdf(v)dh(v)\right)(v)=
$$$$
=\{f,\{g,h\}\}+\{g,\{h,f\}\}-cdg(v)\{f,dh(v)\}-cdh(v)\{f,dg(v)\}+
$$$$
+cdf(v)\{g,dh(v)\}+cdh(v)\{g,df(v)\} -cdf(v)d\{g,h\}(v)+cdg (v)d\{f,h\}(v)+
$$$$
+c^2df(v)dh(v)d(dg(v))(v)-c^2dg(v)dh(v)d(df(v))(v)=
$$$$
=\{f,\{g,h\}\}+\{g,\{h,f\}\}-cdh(v)\{f,dg(v)\}+cdh(v)\{g,df(v)\}-
$$$$
-cdf(v)\{dg(v),h\}+cdg(v)\{df(v),h\}+
$$$$
+c^2df(v)dh(v)d(dg(v))(v)-c^2dg(v)dh(v)d(df(v))(v),
$$
because $d\{g,h\}(v)=\mathcal{L}_{v}\{g,h\}=\{dg(v),h\}+\{g,dh(v)\}$.
\item For differential forms.  The left side of the equality is
\begin{equation}
L= a_{c,v}\left([hdf,pdg]_{c,v}\right)=
\end{equation}
$$
= a_{c,v}\left(p[hdf,dg]_{c,v}+a_{c,v}(hdf)(p)dg\right)=
$$$$
= a_{c,v}\left(ph[df,dg]_{c,v}-pa_{c,v}(dg)(h)df+a_{c,v}(hdf)(p)dg\right)=
$$$$
= ph[a_{c,v}(df),a_{c,v}(dg)]-pa_{c,v}(dg)(h)a_{c,v}(df)+ha_{c,v}(df)(p)a_{c,v}(dg).
$$
The right side of the equality is
\begin{equation}
R=[a_{c,v}(hdf),a_{c,v}(pdg)]=[ha_{c,v}(df),pa_{c,v}(dg)]=
\end{equation}
$$
=ha_{c,v}(df)(p)a_{c,v}(dg)+p[ha_{c,v}(df), a_{c,v}(dg)]=
$$$$
=ha_{c,v}(df)(p)a_{c,v}(dg)-pa_{c,v}(dg)(h)a_{c,v}(df)+ph[a_{c,v}(df), a_{c,v}(dg)].
$$
\end{enumerate}
\item At the end we will show 
the Jacobi identity (from 1.,2. and below)
\begin{equation}
\label{Jacobi}
\circlearrowleft [[df,dg]_{c,v},dh]_{c,v}=   \circlearrowleft \left( [[df,dg],dh]_{v}+[[df,dg]_{v},dh]\right)=
\end{equation}
$$
 \circlearrowleft \left( 
c\left(
dh(v)d\left(d\left(\pi(df,dg)\right) (v) \right)-d\left(\pi(df,dg)\right) (v) d\left(dh(v)\right)
\right)\right.+
$$$$
+c[dg(v)d \left( df(v) \right)- df(v)d \left( dg(v) \right), dh]+
$$$$\left.
-a(dh)(c) \left(dg(v)d \left( df(v) \right)- df(v)d \left( dg(v) \right)\right)
\right).
$$
The last term vanishes, because from assumption we have $a(dh)(c)=\{h,c\}=0$. 
The first term using the  relationship $\mathcal{L}_v f=v(f)=df(v)$ can be rewritten in the form
\begin{equation}
c\left(
dh(v)d\left(\mathcal{L}_{v}\left(\pi(df,dg)\right)  \right)-\mathcal{L}_{v}\left(\pi(df,dg)\right)  d\left(dh(v)\right)
\right).
\end{equation}
The Lie derivative of the bi-vector $\pi$ along $v$ is defined by the formula 
$\mathcal{L}_{v}\left(\pi(df,dg)\right) = \left(\mathcal{L}_{v}\pi\right)(df,dg)+\pi (\mathcal{L}_{v}(df), dg)+\pi (df, \mathcal{L}_{v}(dg))$, then 
\begin{equation}
c\big( dh(v)d\left( \left(\mathcal{L}_{v}\pi\right)(df,dg)+\pi (\mathcal{L}_{v}(df), dg)+\pi (df, \mathcal{L}_{v}(dg))  \right)-
\end{equation}
$$
-\left( \left(\mathcal{L}_{v}\pi\right)(df,dg)+\pi (\mathcal{L}_{v}(df), dg)+\pi (df, \mathcal{L}_{v}(dg)) \right)  d\left(dh(v)\right)
\big).
$$
Two terms of this expression vanish due to the second assumption $\mathcal{L}_{v}\pi=0$
\begin{equation}
c\big( dh(v)d\left( \pi (\mathcal{L}_{v}(df), dg)+\pi (df, \mathcal{L}_{v}(dg))  \right)-
\end{equation}
$$
-\left( \pi (\mathcal{L}_{v}(df), dg)+\pi (df, \mathcal{L}_{v}(dg)) \right)  d\left(dh(v)\right)
\big).
$$
Finally the expression (\ref{Jacobi}) becomes
\begin{equation}
\circlearrowleft [[df,dg]_{c,v},dh]_{c,v}=  \circlearrowleft c\big( dh(v)d\left( \pi (\mathcal{L}_{v}(df), dg)+\pi (df, \mathcal{L}_{v}(dg))  \right)-
\end{equation}
$$
-\left( \pi (\mathcal{L}_{v}(df), dg)+\pi (df, \mathcal{L}_{v}(dg)) \right)  d\left(dh(v)\right)+dg(v) d \{df(v),h\}+
$$  $$
-\{h,dg(v)\} d\left(df(v)\right)
-df(v) d \{dg(v),h\}+\{h,df(v)\} d\left(dg(v)\right)
\big)=0.
$$ 
\end{enumerate}
\end{proof}

{\bf Remark:} The algebroid bracket from the Theorem \ref{theorem-1} can be rewritten  in the general form  using the pairing between $TM$ and $T^*M$
\begin{equation}
\big< [\alpha,\beta]_{c,v}, w\big>=\big<\alpha,[\pi,\big<\beta,w\big>]_{S-N}\big> -\big< \beta, [\pi,\big<\alpha,w\big>]_{S-N}\big>
-
\end{equation}
$$
-[\pi,w]_{S-N}(\alpha,\beta)+c\left(\beta(v)\left(\mathcal{L}_{v} \alpha\right)(w)-\alpha(v)\left(\mathcal{L}_{v} \beta\right)(w)\right),
$$
where $\alpha, \beta\in \Gamma^{\infty}(T^*M)$, $v,w\in \Gamma^{\infty}(TM)$, $f\in C^{\infty}(M)$, $[\cdot,\cdot]_{S-N}$ -- the Schouten--Nijenhuis bracket and $\mathcal{L}_{v}\pi=0$, $\pi(dc,\cdot)=0$.

\section{Deformation of tangent Poisson structures  on $TM$}

In this section, we deform Poisson structure  (\ref{tensor-pi}) and tangent lifts of bi-Hamiltonian structure 
(\ref{tensor-pi-lambda}) on $TM$. 

It is well known that a total space of a dual bundle of a Lie algebroid has a canonical Poisson structure.
Thus we obtain the Poisson bracket on $C^{\infty}(TM)$ 
on the tangent bundle $TM$, which is dual to the Lie algebroid $T^*M$ defined in the Theorem 
\ref{theorem-1}
\begin{align}
&\{f\circ q_M, g\circ q_M\}_{c,v}=0, \quad \{ l_{df}, l_{dg}\}_{c,v}=l_{[df,dg]_{c,v}},\\
&\{f\circ q_M, l_{dg}\}_{c,v}=-a_{c,v}(dg)(f)\circ q_M=-a(dg)(f)\circ q_M+ c dg(v)df(v),
\end{align}
where $f,g\in C^{\infty}(M)$, $c$ is Casimir function for $\pi$ and $\mathcal{L}_{v}\pi=0$. In the above formulas $l_{df}\in C^{\infty}(TM)$ is defined by (\ref{new-var}).
In local coordinates $({\bf x}, {\bf y})$ when $v=\sum_{i=1}^{N}v_i\frac{\partial}{\partial x_i}$ the Poisson tensor is given by formula
\begin{equation}
\label{tensor-pi-1-n0}
\begingroup\makeatletter\def\f@size{5}\check@mathfonts 
\pi_{c,v}  ({\bf x}, {\bf y})\!\!=\!\!\!\!\left(\!\!
\begin{array}{c|c}
 0 & \pi ({\bf x})+ c({\bf x})v({\bf x})v^{\top}({\bf x})\\
\hline
\pi ({\bf x})- c({\bf x})v({\bf x})v^{\top}({\bf x}) & \sum_{s=1}^{N}\left(\dfrac{\partial\pi }{\partial x_s}({\bf x})+
c({\bf x})\left( \dfrac{\partial v }{\partial x_s}({\bf x})v^{\top}({\bf x})-v({\bf x})\left( \dfrac{\partial v }{\partial x_s}({\bf x})\right)^{\top}\right)\right)y_s
\end{array}
\!\!\right)\!\!,
\endgroup
\end{equation}
where $v^{\top}=(v_1,\dots, v_{N})$.
The particular case of above construction, when $v=\frac{\partial}{\partial x_p}$, is described by the following theorem.

\begin{theorem}
\label{theorem-2}
Let $(M,\pi)$ be a Poisson manifold and let ${\bf x}=(x_1,\dots, x_N)$ be a system of local coordinates on $M$. If the Poisson tensor $\pi$ does not depend on the variable $x_p$ for certain $1\leq p\leq N$, and the function $c$ is Casimir function for $\pi$,
then we obtain a new Poisson tensor on  $TM$ 
\begin{equation}
\label{tensor-pi-1-n1}
 \pi_{TM, c({\bf x})}  ({\bf x}, {\bf y})=\left(
\begin{array}{c|c}
 0 & \pi ({\bf x})+\delta_{pp} c({\bf x}) \\
\hline
\pi ({\bf x})-\delta_{pp} c({\bf x}) & \sum_{s=1}^{N}\dfrac{\partial\pi }{\partial x_s}({\bf x})y_s
\end{array}
\right),
\end{equation}
where  $({\bf x}, {\bf y})=(x_1,\dots, x_N, y_1,\dots, y_N)$ is a system of local coordinates on $TM$ and $\delta_{ij}$ is  the Kronecker delta. 
\end{theorem}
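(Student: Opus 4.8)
The plan is to deduce Theorem \ref{theorem-2} as the special case $v=\frac{\partial}{\partial x_p}$ of the general construction recorded in (\ref{tensor-pi-1-n0}). Since that formula is exactly the local expression of the Poisson structure on $TM$ dual to the Lie algebroid produced by Theorem \ref{theorem-1}, the only work is to check that the hypotheses of Theorem \ref{theorem-1} are satisfied for this choice of $v$ and then to specialise the formula. I would emphasise from the outset that this makes the statement a corollary: no independent verification of the Jacobi identity (or of $[\pi_{TM,c},\pi_{TM,c}]_{S-N}=0$) is needed, because Theorem \ref{theorem-1} already guarantees a Lie algebroid and hence a canonical Poisson structure on the dual bundle.

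First I would verify the two standing assumptions of Theorem \ref{theorem-1}. The requirement that $c$ be a Casimir function for $\pi$ is assumed outright. For the condition $\mathcal{L}_v\pi=0$ I would take $v=\frac{\partial}{\partial x_p}$ and compute $\mathcal{L}_{\partial/\partial x_p}\pi$ directly. Writing $\pi=\frac12\sum_{i,j}\pi_{ij}\,\frac{\partial}{\partial x_i}\wedge\frac{\partial}{\partial x_j}$ and applying the Leibniz rule for the Lie derivative together with the vanishing of coordinate brackets $\big[\frac{\partial}{\partial x_p},\frac{\partial}{\partial x_i}\big]=0$, every term in which the derivative falls on a coordinate vector field drops out, leaving
\begin{equation}
\mathcal{L}_{\partial/\partial x_p}\pi=\frac12\sum_{i,j}\frac{\partial \pi_{ij}}{\partial x_p}\,\frac{\partial}{\partial x_i}\wedge\frac{\partial}{\partial x_j}=\frac{\partial \pi}{\partial x_p}.
\end{equation}
Hence the hypothesis that $\pi$ does not depend on $x_p$ is \emph{precisely} the condition $\mathcal{L}_v\pi=0$, so Theorem \ref{theorem-1} applies with this $v$.

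Next I would specialise the general tensor (\ref{tensor-pi-1-n0}). For $v=\frac{\partial}{\partial x_p}$ the components are $v_i=\delta_{ip}$, so $v\,v^{\top}$ has $(i,j)$ entry $\delta_{ip}\delta_{jp}$, i.e.\ it is the elementary matrix written $\delta_{pp}$ in (\ref{tensor-pi-1-n1}); thus the off-diagonal blocks reduce to $\pi\pm\delta_{pp}c$. Moreover the components $v_i$ are constant, so $\frac{\partial v}{\partial x_s}=0$ for every $s$, which annihilates the entire correction term $c\big(\frac{\partial v}{\partial x_s}v^{\top}-v(\frac{\partial v}{\partial x_s})^{\top}\big)$ in the lower-right block, leaving only $\sum_{s}\frac{\partial\pi}{\partial x_s}y_s$ there. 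Substituting these simplifications into (\ref{tensor-pi-1-n0}) reproduces (\ref{tensor-pi-1-n1}) verbatim, and since the left-hand side is a Poisson tensor by Theorem \ref{theorem-1}, so is the right-hand side.

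The argument is essentially bookkeeping once the general formula is available, and the only genuinely conceptual point is the first step: for a \emph{coordinate} vector field the Lie derivative of $\pi$ collapses to componentwise partial differentiation, so the coordinate-independence hypothesis coincides exactly with the algebroid condition $\mathcal{L}_v\pi=0$. I would record as a consistency remark that $\frac{\partial\pi}{\partial x_p}=0$ makes the $s=p$ summand of $\sum_s\frac{\partial\pi}{\partial x_s}y_s$ vanish automatically, which is reassuring but need not be displayed.
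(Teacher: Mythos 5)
Your proposal is correct, but it follows a genuinely different route from the paper's own proof. The paper proves Theorem \ref{theorem-2} by brute force: it writes out the bracket relations of the tensor (\ref{tensor-pi-1-n1}) in the coordinates $({\bf x},{\bf y})$ and checks that all four types of cyclic sums $\circlearrowleft\{\{\cdot,\cdot\},\cdot\}$ vanish, using exactly the two hypotheses (that $c$ is a Casimir and that $\pi$ is independent of $x_p$) to kill the residual terms; this verification is self-contained and does not invoke Theorem \ref{theorem-1} at all. You instead realize the theorem as a corollary of Theorem \ref{theorem-1} together with the canonical Poisson structure on the dual of a Lie algebroid, the only substantive step being the observation that for a coordinate vector field $v=\frac{\partial}{\partial x_p}$ the Lie derivative collapses to $\mathcal{L}_v\pi=\frac{\partial\pi}{\partial x_p}$, so the $x_p$-independence hypothesis is literally the condition $\mathcal{L}_v\pi=0$, after which the specialization $v_i=\delta_{ip}$, $\frac{\partial v}{\partial x_s}=0$ reduces (\ref{tensor-pi-1-n0}) to (\ref{tensor-pi-1-n1}). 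This is in fact the route the paper's prose suggests (``the particular case of above construction, when $v=\frac{\partial}{\partial x_p}$''), even though its displayed proof does not take it. What your approach buys is economy and conceptual clarity -- no Jacobi identity needs to be rechecked; what it costs is reliance on two ingredients the paper states but does not prove in detail, namely the general fact that the dual bundle of a Lie algebroid is canonically Poisson and the correctness of the coordinate expression (\ref{tensor-pi-1-n0}) for that dual structure, whereas the paper's computation is independent of both (and would remain valid even if one distrusted the proof of Theorem \ref{theorem-1}). Since (\ref{tensor-pi-1-n0}) is presented in the paper as established setup preceding the theorem, your use of it is legitimate, and your argument stands.
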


\begin{proof}
If $c({\bf x})=0$, then we obtain the classical tangent Poisson structure (\ref{tensor-pi}).
For $c({\bf x})\neq 0$, by a direct calculation we obtain
\begin{align} 
& \circlearrowleft \{\{x_i,x_j\}_{TM,c({\bf x})},x_k\}_{TM,c({\bf x})}= \circlearrowleft \{\{x_i,x_j\}_{TM,c({\bf x})},y_k\}_{TM,c({\bf x})}=0,
\\
& \circlearrowleft \{\{y_i,y_j\}_{TM,c({\bf x})},x_k\}_{TM,c({\bf x})}= 
\sum_{m=1}^{N} \delta_{pk}  \left( \delta_{pi}\pi_{mj}({\bf x})-\delta_{pj}\pi_{mi}({\bf x})\right) \dfrac{\partial c}{\partial x_m}({\bf x})+\nonumber\\
&\quad +c({\bf x})\left( \delta_{pi} \dfrac{\partial \pi_{jk}}{\partial x_p}({\bf x})+\delta_{pj} \dfrac{\partial \pi_{ki}}{\partial x_p}({\bf x})- \delta_{pk} \dfrac{\partial \pi_{ij}}{\partial x_p}({\bf x})\right)=0,\nonumber\\
& \circlearrowleft \{\{y_i,y_j\}_{TM,c({\bf x})},y_k\}_{TM,c({\bf x})}=  
\sum_{s=1}^{N}y_s c({\bf x}) \left(
\delta_{pi} \dfrac{\partial^2 \pi_{jk}}{\partial x_p\partial x_s}({\bf x}) 
+\delta_{pj} \dfrac{\partial^2 \pi_{ki}}{\partial x_p\partial x_s}({\bf x})
+\right. \nonumber\\
& \quad \left. + \delta_{pk} \dfrac{\partial^2 \pi_{ij}}{\partial x_p\partial x_s}({\bf x}) \right),\nonumber
\end{align}
because $c$ is Casimir function for $\pi$ and $\pi$ does not depend on the variable $x_p$. 
Here $\circlearrowleft \{\{f,g\},h\}$ indicates the sum over circular permutations of $f$, $g$, $h$.
This finishes the proof.
\end{proof}

\begin{theorem}
\label{theorem-3}
Let $(M,\pi)$ be a Poisson manifold. If the Poisson tensor $\pi$ does not depend on the variable $x_p$ for certain $1\leq p\leq N$, and the function $c$ is linear Casimir function for $\pi$,
then we obtain a new Poisson tensor on  $TM$ 
\begin{equation}
\label{tensor-pi-1-n2}
 \pi_{TM, c({\bf y})}  ({\bf x}, {\bf y})=\left(
\begin{array}{c|c}
 0 & \pi ({\bf x})+\delta_{pp} c({\bf y})\\
\hline
\pi ({\bf x})-\delta_{pp} c({\bf y}) & \sum_{s=1}^{N}\dfrac{\partial\pi }{\partial x_s}({\bf x})y_s
\end{array}
\right).
\end{equation}
\end{theorem}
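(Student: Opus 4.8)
The plan is to prove Theorem \ref{theorem-3} by the same strategy as Theorem \ref{theorem-2}: since a skew bilinear bracket obeying the Leibniz rule is Poisson precisely when the Jacobiator $\circlearrowleft\{\{z_a,z_b\},z_c\}$ vanishes on all triples of coordinate functions $z\in\{x_1,\dots,x_N,y_1,\dots,y_N\}$, it suffices to check these finitely many cases. From the tensor (\ref{tensor-pi-1-n2}) I first read off the brackets
\[
\{x_i,x_j\}=0,\quad \{x_i,y_j\}=\pi_{ij}({\bf x})+\delta_{pi}\delta_{pj}\,c({\bf y}),\quad \{y_i,y_j\}=\sum_{s=1}^N\frac{\partial\pi_{ij}}{\partial x_s}({\bf x})\,y_s .
\]
Because $c$ is a \emph{linear} Casimir, $c({\bf y})=\sum_s a_s y_s$ with constant coefficients $a_s=\partial c/\partial x_s$; in the notation of (\ref{new-var})--(\ref{cas}) this is exactly the tangent lift $l_{dc}$.

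The computation hinges on two auxiliary identities. The Casimir condition reads $\{c,x_i\}_M=\sum_s a_s\pi_{si}({\bf x})=0$ for every $i$, and differentiating it in ${\bf x}$ gives $\sum_s a_s\,\partial\pi_{si}/\partial x_m=0$. From these I would obtain
\[
\{c({\bf y}),y_i\}=\sum_m\Big(\sum_s a_s\frac{\partial\pi_{si}}{\partial x_m}\Big)y_m=0,\qquad \{c({\bf y}),x_i\}=\sum_s a_s\pi_{si}({\bf x})-a_p\delta_{pi}\,c({\bf y})=-a_p\delta_{pi}\,c({\bf y}),
\]
the two relations I will feed into every case.

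I would then split the Jacobiator according to the number of $y$-entries. Triples $(x_i,x_j,x_k)$ vanish since $\{x_i,x_j\}=0$. For $(x_i,x_j,y_k)$ the only surviving contributions come through $\{c({\bf y}),x_\cdot\}$; inserting the second identity, the two nonzero terms are $\mp a_p\delta_{pi}\delta_{pj}\delta_{pk}\,c({\bf y})$ and cancel by skew-symmetry. For $(y_i,y_j,x_k)$ every term proportional to $c({\bf y})$ acquires a factor $\partial\pi_{\bullet\bullet}/\partial x_p$, which vanishes because $\pi$ is independent of $x_p$ (while $\{c({\bf y}),y_\cdot\}=0$ disposes of the remaining $c$-terms), so this case collapses to $\circlearrowleft\sum_s(\partial\pi_{ij}/\partial x_s)\pi_{sk}$, i.e.\ to the Jacobi identity for $\pi$ on $M$. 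The last case $(y_i,y_j,y_k)$ is handled identically: each $c({\bf y})$ contribution is multiplied either by a second derivative $\partial^2\pi/\partial x_p\partial x_s$ or by $\{c({\bf y}),y_\cdot\}$ and hence drops out, leaving exactly the computation that shows the undeformed tangent structure (\ref{tensor-pi}) is Poisson.

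I expect the bookkeeping of the extra $c({\bf y})$ terms to be the only real obstacle, and the decisive point is that the two hypotheses must cooperate: the Casimir property is what forces $\{c({\bf y}),y_i\}=0$ and pins down $\{c({\bf y}),x_i\}$, whereas the $x_p$-independence of $\pi$ is precisely what annihilates every residual deformation term, since each survives only with an $x_p$-derivative of $\pi$ attached. Neither assumption alone is enough; together they make all the new terms either cancel in antisymmetric pairs or vanish outright, so each case reduces to an identity already available for $\pi$ or for the classical tangent lift.
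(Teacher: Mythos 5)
Your proposal is correct and takes essentially the same route as the paper's own proof: a direct check of the Jacobi identity on the coordinate functions $(x_i,x_j,x_k)$, $(x_i,x_j,y_k)$, $(y_i,y_j,x_k)$, $(y_i,y_j,y_k)$, where the (differentiated) Casimir condition kills the terms involving $\{c({\bf y}),\cdot\}$ and the $x_p$-independence of $\pi$ annihilates the remaining deformation terms carrying $\partial\pi/\partial x_p$ or $\partial^2\pi/\partial x_p\partial x_s$. Your two auxiliary identities $\{c({\bf y}),y_i\}=0$ and $\{c({\bf y}),x_i\}=-a_p\delta_{pi}\,c({\bf y})$ are just a cleaner bookkeeping of the very cancellations the paper writes out explicitly.
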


\begin{proof}
After a direct calculation we obtain
\begin{align} 
& \circlearrowleft \{\{x_i,x_j\}_{TM,c({\bf y})},x_k\}_{TM,c({\bf y})}= 0, \nonumber\\
& \circlearrowleft \{\{x_i,x_j\}_{TM,c({\bf y})},y_k\}_{TM,c({\bf y})}=-\sum_{m=1}^{N} \delta_{pk}  \left( \delta_{pi}\pi_{mj}({\bf x})-\delta_{pj}\pi_{mi}({\bf x})\right) \dfrac{\partial c}{\partial y_m}({\bf y})=0\nonumber\\
& \circlearrowleft \{\{y_i,y_j\}_{TM,c({\bf y})},x_k\}_{TM,c({\bf y})}= 
 \delta_{pk}  \sum_{s,m=1}^{N} y_s\dfrac{\partial}{\partial x_s} 
\left( \delta_{pi}  \pi_{mj}({\bf x})   \dfrac{\partial c}{\partial y_m}({\bf y})- \right.\\
&\quad     -\left. \delta_{pj}  \pi_{mi}({\bf x})   \dfrac{\partial c}{\partial y_m}({\bf y}) \right)  +c({\bf y})\left( \delta_{pi} \dfrac{\partial \pi_{jk}}{\partial x_p}({\bf x})+\delta_{pj} \dfrac{\partial \pi_{ki}}{\partial x_p}({\bf x})- \delta_{pk} \dfrac{\partial \pi_{ij}}{\partial x_p}({\bf x})\right)=0,\nonumber
\end{align}
\begin{align}
& \circlearrowleft \{\{y_i,y_j\}_{TM,c({\bf y})},y_k\}_{TM,c({\bf y})}=  
\sum_{s=1}^{N}y_s c({\bf y}) \dfrac{\partial }{\partial x_s}
 \left( \delta_{pi} \dfrac{\partial \pi_{jk}}{\partial x_p}({\bf x})+\delta_{pj} \dfrac{\partial \pi_{ki}}{\partial x_p}({\bf x})+\right.\nonumber\\
&\quad \left. \delta_{pk} \dfrac{\partial \pi_{ij}}{\partial x_p}({\bf x})\right)=0,\nonumber
\end{align}
because $c$ is linear Casimir function for $\pi$ and $\pi$ does not depend on the variable $x_p$. 
\end{proof}

\begin{theorem}
\label{theorem-4}
If $c_1,\dots, c_r$, where $r=\dim M-\operatorname{rank} \pi$, are Casimir functions for the Poisson structure $\pi$ and if all $c_i$ do not depend on the variable $x_p$ for certain $1\leq p\leq N$, then the functions 
\begin{equation}
\label{cas-n-1}
c_i\circ q_M \quad \textrm{and}\quad l_{dc_i}=\sum_{s=1}^{N}\dfrac{\partial c_i}{\partial x_s}({\bf x})y_s, \quad i=1,\dots r,
\end{equation}
are Casimir functions for the Poisson tensor $\pi_{TM, c({\bf x})}$ and $\pi_{TM, c({\bf y})}$. 
\end{theorem}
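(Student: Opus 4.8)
The plan is to verify the Casimir property by checking that each candidate function brackets to zero with every coordinate function $x_k$ and $y_k$ on $TM$; since the Hamiltonian vector field $\{F,\cdot\}$ of a function $F$ is a derivation of $C^\infty(TM)$, it vanishes identically as soon as it annihilates a full system of coordinates, so this reduction is legitimate. The organizing observation is that both deformed tensors differ from the classical tangent Poisson tensor (\ref{tensor-pi}) only in the off-diagonal blocks: writing $\pi_{TM,c}=\pi_{TM}+D$, the correction bivector $D$ has a single nonzero pair of entries, namely $\pm c$ in the slots pairing $x_p$ with $y_p$, with $c=c({\bf x})$ in case (\ref{tensor-pi-1-n1}) and $c=c({\bf y})$ in case (\ref{tensor-pi-1-n2}). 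Because the bivector splits additively, the brackets split as well, giving for all $F,G\in C^\infty(TM)$
\begin{equation}
\{F,G\}_{c}=\{F,G\}_{TM}+c\left(\frac{\partial F}{\partial x_p}\frac{\partial G}{\partial y_p}-\frac{\partial F}{\partial y_p}\frac{\partial G}{\partial x_p}\right).
\end{equation}

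I would then invoke the known result (\ref{cas}): since each $c_i$ is a Casimir of $\pi$, the functions $c_i\circ q_M$ and $l_{dc_i}$ are already Casimir for the undeformed tensor $\pi_{TM}$, so the first term on the right vanishes when $F=c_i\circ q_M$ or $F=l_{dc_i}$ and $G$ is arbitrary. It then remains only to show that the correction term is zero. For $F=c_i\circ q_M=c_i({\bf x})$ this is immediate: independence of $x_p$ gives $\partial_{x_p}c_i=0$, and $c_i$ does not involve ${\bf y}$, so $\partial_{y_p}F=0$ as well. For $F=l_{dc_i}=\sum_s\frac{\partial c_i}{\partial x_s}y_s$ one has $\partial_{y_p}F=\frac{\partial c_i}{\partial x_p}=0$ and $\partial_{x_p}F=\sum_s\frac{\partial^2 c_i}{\partial x_p\partial x_s}y_s=0$, both again because $c_i$ does not depend on $x_p$. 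Hence the correction term vanishes for every $G$, and both functions are Casimir for $\pi_{TM,c}$.

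I would close by noting that the argument is identical for the two tensors (\ref{tensor-pi-1-n1}) and (\ref{tensor-pi-1-n2}), since the only difference is whether the scalar factor in $D$ is $c({\bf x})$ or $c({\bf y})$, and in either case this factor merely multiplies the same combination of $x_p$- and $y_p$-derivatives, which is zero. The genuine content — were one instead to proceed by a fully direct computation mirroring the proofs of Theorems \ref{theorem-2} and \ref{theorem-3} — is the evaluation of $\{l_{dc_i},y_k\}_{c}$, where the term arising from $\{y_s,y_k\}$ and the term arising from differentiating the coefficients $\frac{\partial c_i}{\partial x_s}$ must be recombined into $\sum_m y_m\,\partial_{x_m}\{c_i,x_k\}_\pi$, which vanishes precisely because $c_i$ is a Casimir of $\pi$. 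The decomposition $\pi_{TM,c}=\pi_{TM}+D$ is exactly what lets me bypass that recombination and reduce the whole verification to the two derivative conditions $\partial_{x_p}F=\partial_{y_p}F=0$.
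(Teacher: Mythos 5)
Your proof is correct, and it rests on exactly the same two vanishing mechanisms as the paper's: the Casimir property of $c_i$ for $\pi$ handles the undeformed part, and $x_p$-independence handles the deformation. The organization, however, is genuinely different. The paper proves the theorem by brute force for $\pi_{TM,c({\bf x})}$: it expands the four coordinate brackets $\{c_i({\bf x}),x_j\}$, $\{c_i({\bf x}),y_j\}$, $\{l_{dc_i},x_j\}$, $\{l_{dc_i},y_j\}$ and checks each is zero — in the last one performing precisely the recombination into $\sum_m y_m\,\partial_{x_m}\{c_i,x_j\}_\pi$ that you describe at the end of your proposal — and then dismisses the $c({\bf y})$ case as ``completely analogous''. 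You instead split the bivector as $\pi_{TM,c}=\pi_{TM}+D$, where $D$ has its only nonzero components in the slot pairing $x_p$ with $y_p$, invoke the already-stated fact (\ref{cas}) that $c_i\circ q_M$ and $l_{dc_i}$ are Casimirs of the undeformed lift $\pi_{TM}$, and reduce everything to the vanishing of $c\left(\partial_{x_p}F\,\partial_{y_p}G-\partial_{y_p}F\,\partial_{x_p}G\right)$, which holds because all $x_p$- and $y_p$-derivatives of both candidate functions are zero. This modular route buys three things: the classical computation is not repeated; the tensors (\ref{tensor-pi-1-n1}) and (\ref{tensor-pi-1-n2}) are treated uniformly, since whether the scalar factor is $c({\bf x})$ or $c({\bf y})$ is irrelevant to the argument; and it makes transparent that the verification is purely pointwise bilinearity in the bivector and never uses the Jacobi identity of the deformed structure. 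The paper's direct check is self-contained but redundant by comparison, and it leaves the second tensor to the reader, whereas in your version that case is literally the same sentence.
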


\begin{proof}
Let us take the Poisson tensor $\pi_{TM,c({\bf x})}.$ A direct calculation gives us
\begin{align*}
&\{c_i({\bf x}),x_j\}_{TM,c({\bf x})}=0,\\ 
&\{c_i({\bf x}),y_j\}_{TM,c({\bf x})}=\sum_{s=1}^N\dfrac{\partial c_i}{\partial x_s}({\bf x})\pi_{sj}({\bf x})+\delta_{pj}c({\bf x})\dfrac{\partial c_i}{\partial x_p}({\bf x})=0,\\
&\{l_{dc_i},x_j\}_{TM,c({\bf x})}=\sum_{s=1}^N\dfrac{\partial c_i}{\partial x_s}({\bf x})\pi_{sj}({\bf x})-\delta_{pj}c({\bf x})\dfrac{\partial c_i}{\partial x_p}({\bf x})=0,\\
&\{l_{dc_i},y_j\}_{TM,c({\bf x})}=\sum_{s,m=1}^N y_m\dfrac{\partial}{\partial x_m}\left( \pi_{sj}({\bf x})\dfrac{\partial c_i}{\partial x_s}({\bf x})\right)+\delta_{pj}c({\bf x})\dfrac{\partial ^2 c_i}{\partial x_p\partial x_s}({\bf x})y_s=0,
\end{align*}
because $c_i({\bf x})$ is a Casimir function for $\pi$ and $c_i({\bf x})$ does not depend on the variable $x_p.$ The proof for the structure $\pi_{TM,c({\bf y})}$ is completely analogous.
\end{proof}

\begin{theorem}
\label{theorem-5}
Let $(M,\pi_1, \pi_2)$ be a bi-Hamiltonian manifold  and the Poisson tensors $\pi_1$ and $\pi_2$ do not depend on the variable $x_p$ for certain $1\leq p\leq N$.
\begin{enumerate}
       \item If the function $c$ is Casimir function for $\pi_2$, then the structure  on  $TM$ 
\begin{equation}
\label{tensor-pi-lambda-n3}
\pi_{TM, \lambda, c({\bf x})}({\bf x}, {\bf y})=\left(
\begin{array}{c|c}
0& \pi_2({\bf x})+\delta_{pp}c(x)\\
\hline
\pi_2({\bf x})-\delta_{pp} c(x) & \sum_{s=1}^{N}\dfrac{\partial\pi_2 }{\partial x_s}({\bf x})y_s+\lambda \pi_1 ({\bf x})
\end{array}
\right)
\end{equation}
is a Poisson tensor.
        \item If the function $c$ is linear Casimir function for $\pi_2$, then the structure  on  $TM$ 
\begin{equation}
\label{tensor-pi-lambda-n4}
\pi_{TM, \lambda, c({\bf y})}({\bf x}, {\bf y})=\left(
\begin{array}{c|c}
0& \pi_2({\bf x})+\delta_{pp}c(y)\\
\hline
\pi_2({\bf x})-\delta_{pp} c(y) & \sum_{s=1}^{N}\dfrac{\partial\pi_2 }{\partial x_s}({\bf x})y_s+\lambda \pi_1 ({\bf x})
\end{array}
\right)
\end{equation}
 is a Poisson tensor.   
\end{enumerate}
\end{theorem}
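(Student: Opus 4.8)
The plan is to sidestep a from-scratch verification of the Jacobi identity and instead use that the Schouten--Nijenhuis bracket is bilinear and symmetric on bivectors, so that a tensor $\pi$ is Poisson precisely when $[\pi,\pi]_{S-N}=0$. The idea is to present the candidate tensor as a sum of pieces that are already known to be Poisson. For the first structure I would write
\begin{equation}
\pi_{TM,\lambda,c(\mathbf{x})}=\pi_{TM,\lambda}+D_c,\qquad D_c:=c(\mathbf{x})\,\frac{\partial}{\partial x_p}\wedge\frac{\partial}{\partial y_p},
\end{equation}
where $\pi_{TM,\lambda}$ is the tangent lift (\ref{tensor-pi-lambda}) of the compatible pair $(\pi_1,\pi_2)$ and $D_c$ is the rank-one off-diagonal correction that carries the $\pm\delta_{pp}c$ entries. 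Bilinearity then gives
\begin{equation}
[\pi_{TM,\lambda,c(\mathbf{x})},\pi_{TM,\lambda,c(\mathbf{x})}]_{S-N}=[\pi_{TM,\lambda},\pi_{TM,\lambda}]_{S-N}+2[\pi_{TM,\lambda},D_c]_{S-N}+[D_c,D_c]_{S-N},
\end{equation}
and the task is to see that the right-hand side vanishes.

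Two of the three summands are essentially free. The term $[\pi_{TM,\lambda},\pi_{TM,\lambda}]_{S-N}$ vanishes because $\pi_{TM,\lambda}$ is a Poisson tensor by the cited result behind (\ref{tensor-pi-lambda}), which is exactly where the compatibility condition (\ref{bi-H}) of $\pi_1$ and $\pi_2$ is used. The term $[D_c,D_c]_{S-N}$ vanishes for a cheap reason: $D_c$ involves only the two coordinate directions $\partial/\partial x_p$ and $\partial/\partial y_p$, so its self-bracket is a trivector built from at most two legs and is identically zero. Everything therefore reduces to the cross term $[\pi_{TM,\lambda},D_c]_{S-N}$, which I would split using $\pi_{TM,\lambda}=\pi_{TM}+\lambda V_1$, where $\pi_{TM}$ is the base lift (\ref{tensor-pi}) of $\pi_2$ and $V_1:=\sum_{i<j}\pi_{1,ij}(\mathbf{x})\,\frac{\partial}{\partial y_i}\wedge\frac{\partial}{\partial y_j}$ is the vertical bivector that accounts for the $\lambda\pi_1$ block. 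The piece $[\pi_{TM},D_c]_{S-N}=0$ is equivalent to Theorem \ref{theorem-2}, since expanding $[\pi_{TM}+D_c,\pi_{TM}+D_c]_{S-N}=0$ and cancelling $[\pi_{TM},\pi_{TM}]_{S-N}=0$ and $[D_c,D_c]_{S-N}=0$ leaves precisely this identity. Hence the only genuinely new fact for part 1 is $[V_1,D_c]_{S-N}=0$. In $[V_1,D_c]_{S-N}$ every term differentiates one coefficient: either $c$, along a vertical leg of $V_1$, or $\pi_{1,ij}$, along a leg of $D_c$. For part 1 the first kind dies because $c=c(\mathbf{x})$ has no $\mathbf{y}$-dependence, and the second dies because $\pi_1$ depends only on $\mathbf{x}$ and, by hypothesis, not on $x_p$; so $[V_1,D_c]_{S-N}=0$ and part 1 is complete.

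Part 2 has the identical architecture, with $\widetilde D_c:=c(\mathbf{y})\,\frac{\partial}{\partial x_p}\wedge\frac{\partial}{\partial y_p}$ in place of $D_c$ and Theorem \ref{theorem-3} in place of Theorem \ref{theorem-2}, so once more the whole problem collapses to a single cross term $[V_1,\widetilde D_c]_{S-N}=0$. This is the step I expect to be the main obstacle. Now $c$ depends on $\mathbf{y}$, so differentiating its coefficient along a vertical leg of $V_1$ no longer vanishes; using linearity to replace $\partial c/\partial y_m$ by constants $\gamma_m$, a short Schouten computation shows that $[V_1,\widetilde D_c]_{S-N}$ collapses to the expression $\sum_m\gamma_m\pi_{1,mj}(\mathbf{x})$, that is, to $\{c,x_j\}$ evaluated with respect to $\pi_1$. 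Its vanishing is exactly the condition that $c$ be a Casimir for $\pi_1$ as well, so this is the point I would scrutinize most carefully: I would expect the computation either to require that hypothesis explicitly or to exploit that in the situations of interest the chosen linear Casimir is common to the whole Poisson pencil, so that $\sum_m\gamma_m\pi_{1,mj}=0$ holds. As a concrete fallback I would run the direct coordinate verification of Theorems \ref{theorem-2} and \ref{theorem-3}, computing the cyclic sums $\circlearrowleft\{\{y_i,y_j\},x_k\}$ and $\circlearrowleft\{\{y_i,y_j\},y_k\}$ and keeping only the new terms produced by the $\lambda\pi_1$ block and the $c(\mathbf{y})$ correction, which is where the linear-Casimir and $x_p$-independence hypotheses must do all the work.
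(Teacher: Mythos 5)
Your treatment of part 1 is correct and complete, and it is a genuinely different route from the paper's proof, which is only asserted ``by direct calculation.'' Writing $\pi_{TM,\lambda,c({\bf x})}=\pi_{TM,\lambda}+D_c$ with $D_c=c({\bf x})\,\partial_{x_p}\wedge\partial_{y_p}$, and using bilinearity of the Schouten bracket, each summand is disposed of by a hypothesis you can name: $[\pi_{TM,\lambda},\pi_{TM,\lambda}]_{S-N}=0$ is the cited lift (\ref{tensor-pi-lambda}) of the compatible pair; $[D_c,D_c]_{S-N}=0$ because a trivector supported on the two directions $\partial_{x_p},\partial_{y_p}$ must have a repeated index; $[\pi_{TM},D_c]_{S-N}=0$ is Theorem \ref{theorem-2} by polarization; and $[V_1,D_c]_{S-N}=0$ because $c({\bf x})$ is ${\bf y}$-independent while $\pi_1$ is $x_p$-independent. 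This is cleaner than the coordinate verification precisely because it matches each hypothesis to the term it kills.

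For part 2 your suspicion is justified, and the problem lies in the statement, not in your method: the cross term $[V_1,\widetilde D_c]_{S-N}$ does \emph{not} vanish under the stated hypotheses. Its only nonzero components sit in the slots $(y_i,x_p,y_p)$ and equal, up to sign and normalization, $\sum_m\gamma_m\pi_{1,mi}({\bf x})$ where $\gamma_m=\partial c/\partial y_m$ are the constant coefficients of the linear function $c$; equivalently, the direct check of the Jacobi identity leaves the single residual
\begin{equation}
\{\{x_p,y_p\},y_i\}_{TM,\lambda,c({\bf y})}+\{\{y_p,y_i\},x_p\}_{TM,\lambda,c({\bf y})}+\{\{y_i,x_p\},y_p\}_{TM,\lambda,c({\bf y})}
=\lambda\sum_{m=1}^{N}\gamma_m\,\pi_{1,mi}({\bf x}),
\end{equation}
which is $\lambda\{c,x_i\}_1$ and vanishes for all $i$ exactly when $c$ is a Casimir of $\pi_1$ as well. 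This hypothesis is not implied by the stated ones. A concrete counterexample to part 2 as written: take $N=4$, $p=4$, $\pi_2=\partial_{x_1}\wedge\partial_{x_2}$, $\pi_1=\partial_{x_2}\wedge\partial_{x_3}$ (both constant, hence compatible and $x_4$-independent), and $c=x_3$, a linear Casimir of $\pi_2$; then
\begin{equation}
\{\{x_4,y_4\},y_2\}+\{\{y_4,y_2\},x_4\}+\{\{y_2,x_4\},y_4\}=\{y_3,y_2\}=-\lambda\neq 0 .
\end{equation}
So your argument proves part 2 only under the additional assumption that $c$ is a Casimir of the whole pencil (i.e.\ of $\pi_1$ and $\pi_2$ simultaneously), and the ``fallback'' direct computation cannot rescue the stated version, since it reproduces exactly the same residual term. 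You should record this as a needed correction to the theorem rather than as a gap in your proof.
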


\begin{proof}
By direct calculation.  
\end{proof}

\begin{theorem}
\label{theorem-6}
Let $c_1,\dots, c_r$, where $r=\dim M-\operatorname{rank} \pi_2$, be  Casimir functions for the Poisson structure $\pi_2$ and functions $f_i^{\lambda}$, $i=1,\dots, r$, satisfy the conditions  
\begin{equation}
\label{q1-n5}
\{f^{\lambda}_i,x_j\}_2=\{x_j,c_i\}_1, \quad \textrm{for}\quad j=1,\cdots, n.
\end{equation}
If  the functions $c_i$ and $f_i$ do not depend  on the variable $x_p$ for certain $1\leq p\leq N$, then the functions
\begin{equation}
\label{cas-b-n2}
c_i\circ q_M^* \quad \textrm{and}\quad  \tilde{c}_i({\bf x}, {\bf y})=\sum_{s=1}^{N}\dfrac{\partial c_i}{\partial x_s}({\bf x})y_s+\lambda f^{\lambda}_i({\bf x}),\quad i=1,\dots r,
\end{equation}
 are the Casimir functions for the Poisson tensor $\pi_{TM\lambda, c({\bf x})}$ and $\pi_{TM\lambda, c({\bf y})}$ given by (\ref{tensor-pi-lambda-n3}), (\ref{tensor-pi-lambda-n4}), respectively.
\end{theorem}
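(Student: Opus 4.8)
The plan is to verify directly the defining property of a Casimir function, namely that the Poisson bracket of each candidate with every coordinate function on $TM$ vanishes. Writing the coordinates as $(x_1,\dots,x_N,y_1,\dots,y_N)$ and reading the four blocks of (\ref{tensor-pi-lambda-n3}) as elementary brackets gives
$$\{x_i,x_j\}=0,\qquad \{x_i,y_j\}=\pi_{2,ij}+\delta_{pi}\delta_{pj}\,c,\qquad \{y_i,y_j\}=\sum_{s=1}^{N}\frac{\partial \pi_{2,ij}}{\partial x_s}\,y_s+\lambda\,\pi_{1,ij},$$
together with $\{y_i,x_j\}=\pi_{2,ij}-\delta_{pi}\delta_{pj}c$. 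Here $\{\cdot,\cdot\}$ denotes the bracket on $TM$ associated with (\ref{tensor-pi-lambda-n3}) and $\{\cdot,\cdot\}_1,\{\cdot,\cdot\}_2$ the brackets on $M$. For each fixed $i$ I would then evaluate the four families $\{c_i,x_j\}$, $\{c_i,y_j\}$, $\{\tilde c_i,x_j\}$, $\{\tilde c_i,y_j\}$ by the Leibniz rule, using $\partial \tilde c_i/\partial y_m=\partial c_i/\partial x_m$.

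Three of the four families are immediate. Since $c_i$ depends only on $\mathbf x$ and the top-left block vanishes, $\{c_i,x_j\}=0$. For $\{c_i,y_j\}$ the Leibniz expansion yields $\sum_m \frac{\partial c_i}{\partial x_m}\pi_{2,mj}+\delta_{pj}\,c\,\frac{\partial c_i}{\partial x_p}$; the first summand is $\{c_i,x_j\}_2=0$ because $c_i$ is a Casimir of $\pi_2$, and the second vanishes because $c_i$ is independent of $x_p$. The bracket $\{\tilde c_i,x_j\}$ reduces the same way, now through the bottom-left block.

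The substantive computation is $\{\tilde c_i,y_j\}$. Expanding by Leibniz produces contributions of three kinds, which I would collect and cancel separately. The terms linear in $\mathbf y$ and free of $\lambda$ reassemble, after relabelling indices, into $\sum_t y_t\,\frac{\partial}{\partial x_t}\big(\sum_m \frac{\partial c_i}{\partial x_m}\pi_{2,mj}\big)=\sum_t y_t\,\frac{\partial}{\partial x_t}\{c_i,x_j\}_2$, which is zero since the inner expression vanishes identically. The terms carrying $\lambda$ collect into $\lambda\big(\{f_i^{\lambda},x_j\}_2+\{c_i,x_j\}_1\big)$, which is exactly zero by hypothesis (\ref{q1-n5}) rewritten as $\{f_i^{\lambda},x_j\}_2=-\{c_i,x_j\}_1$. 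Finally, every term carrying the deformation factor $c$ appears multiplied by $\delta_{pj}$ and a derivative $\partial_{x_p}$ acting on $c_i$ or on $f_i^{\lambda}$; all such terms vanish because $c_i$ and $f_i^{\lambda}$ are independent of $x_p$.

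The only genuine difficulty is the bookkeeping in this last step, and checking that the case (\ref{tensor-pi-lambda-n4}) introduces nothing new. The sole change there is that the off-diagonal correction reads $c(\mathbf y)$ instead of $c(\mathbf x)$, but this correction still enters only through the factor $\delta_{pm}\delta_{pj}$, so it again survives only when paired with $\partial_{x_p}c_i$ or $\partial_{x_p}f_i^{\lambda}$ and is killed by the $x_p$-independence; the linearity of $c$ is what guarantees (\ref{tensor-pi-lambda-n4}) is a genuine Poisson tensor in the first place, by Theorem \ref{theorem-5}. Since all four families vanish for both tensors, $c_i\circ q_M^*$ and $\tilde c_i$ are Casimir functions of (\ref{tensor-pi-lambda-n3}) and (\ref{tensor-pi-lambda-n4}), as claimed.
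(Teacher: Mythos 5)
Your proposal is correct and follows essentially the same route as the paper: the paper's proof of Theorem \ref{theorem-6} simply declares itself analogous to that of Theorem \ref{theorem-4}, which is exactly the direct coordinate-bracket verification you carry out (vanishing of the brackets of $c_i$ and $\tilde c_i$ with all $x_j$, $y_j$, using the Casimir property and $x_p$-independence). Your write-up in fact supplies the details the paper leaves implicit, namely the cancellation of the $\lambda$-terms via (\ref{q1-n5}) rewritten as $\{f_i^{\lambda},x_j\}_2=-\{c_i,x_j\}_1$ and the observation that the $c(\mathbf{y})$ case changes nothing because the correction enters only through $\delta_{pm}\delta_{pj}$.
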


\begin{proof}
The proof is analogous to the proof of Theorem \ref{theorem-4}.
\end{proof}

Some results were investigated also in the works \cite{Blasz, Tsi-2} for some similar constructions.

In the case of a linear Poisson  structure, when $M=\mathfrak{g}^*$ is the dual to Lie algebra $\mathfrak{g}$, we have additional Poisson structures on $TM$.

\begin{theorem}
\label{theorem-7}
Let $\pi$ be the Lie-Poisson structure on $\mathfrak{g}^*$, which does not depend on the variable $x_p$.
\begin{enumerate}
      \item If $c$ is the Casimir function for $\pi$ then  the tensor
\begin{equation}
\label{naw_m-1}
\widetilde{\pi}_{T\mathfrak{g}^*, c({\bf x})}({\bf x}, {\bf y})=\left(
\begin{array}{c|c}
 \lambda \pi({\bf x}) & \pi({\bf x})+\delta_{pp}c(x)\\
\hline
\pi({\bf x})-\delta_{pp}c(x) & \pi({\bf y})
\end{array}
\right)
\end{equation}
gives a Poisson structure on $T\mathfrak{g}^*$ for any $\lambda\in \mathbb{R}$.
      \item If $c$ is the linear Casimir function for $\pi$ then  the tensors
\begin{equation}
\label{naw_m-2}
\widetilde{\pi}_{T\mathfrak{g}^*, c({\bf y})}({\bf x}, {\bf y})=\left(
\begin{array}{c|c}
 \lambda \pi({\bf x}) & \pi({\bf x})+\delta_{pp}c(y)\\
\hline
\pi({\bf x})-\delta_{pp}c(y) & \pi({\bf y})
\end{array}
\right),
\end{equation}
\begin{equation}
\label{naw_m-3}
\widetilde{\widetilde{\pi}}_{T\mathfrak{g}^*, c({\bf x})}({\bf x}, {\bf y})=\left(
\begin{array}{c|c}
 \lambda \pi({\bf y}) & \pi({\bf x})+\delta_{pp}c(x)\\
\hline
\pi({\bf x})-\delta_{pp}c(x) & \pi({\bf y})
\end{array}
\right),
\end{equation}
\begin{equation}
\label{naw_m-4}
\widetilde{\widetilde{\pi}}_{T\mathfrak{g}^*, c({\bf y})}({\bf x}, {\bf y})=\left(
\begin{array}{c|c}
 \lambda \pi({\bf y}) & \pi({\bf x})+\delta_{pp}c(y)\\
\hline
\pi({\bf x})-\delta_{pp}c(y) & \pi({\bf y})
\end{array}
\right)
\end{equation}
give Poisson structures on $T\mathfrak{g}^*$ for any $\lambda\in \mathbb{R}$.
\end{enumerate}
\end{theorem}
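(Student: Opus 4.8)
The plan is to check, for each of the five bivectors, the two defining properties of a Poisson tensor: skew-symmetry and the Jacobi identity. Skew-symmetry is immediate — each diagonal block ($\lambda\pi$, $\pi({\bf y})$) is skew because $\pi$ is, while the two off-diagonal blocks are negatives of one another's transpose, since the perturbation $\delta_{pp}c$ occupies the symmetric $(p,p)$ slot and $\pi^\top=-\pi$. All the content is therefore in the Jacobi identity. Because the bracket attached to any bivector is a biderivation, its Jacobiator $\circlearrowleft\{\{\cdot,\cdot\},\cdot\}$ is $C^\infty$-linear in each slot, so it suffices to verify it on the coordinate generators $x_1,\dots,x_N,y_1,\dots,y_N$. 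I would read off the four fundamental brackets from each block matrix and split the computation into the four types of triples $(x_i,x_j,x_k)$, $(x_i,x_j,y_k)$, $(x_i,y_j,y_k)$, $(y_i,y_j,y_k)$.

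For the first tensor (\ref{naw_m-1}) the brackets are $\{x_i,x_j\}=\lambda\pi_{ij}({\bf x})$, $\{x_i,y_j\}=\pi_{ij}({\bf x})+\delta_{ip}\delta_{jp}c({\bf x})$ and $\{y_i,y_j\}=\pi_{ij}({\bf y})$, the last because for a Lie--Poisson structure $\sum_s\frac{\partial\pi_{ij}}{\partial x_s}y_s=\pi_{ij}({\bf y})$ by linearity of $\pi$. In each triple the terms split into two families. The ``$\pi\,\partial\pi$'' terms assemble under the cyclic sum into the component Jacobi identity for $\pi$ — in the $x$-variables for the $(xxx)$ and $(xxy)$ types, in the $y$-variables for $(yyy)$ — and hence vanish; here I use $c_{ij}^n=\frac{\partial\pi_{ij}}{\partial x_n}$ so that the $x$- and $y$-structure constants coincide and the identity applies verbatim. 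The remaining ``$c$-terms'' die in one of three ways: those of the form $\sum_m\frac{\partial c}{\partial x_m}\pi_{mk}=\{c,x_k\}$ vanish because $c$ is a Casimir; those carrying a factor $\frac{\partial\pi_{ij}}{\partial x_p}$ vanish because $\pi$ is independent of $x_p$; and the quadratic leftovers such as $\delta_{ip}\delta_{jp}\delta_{kp}\,c\,\frac{\partial c}{\partial x_p}$ appear in two cyclic positions with opposite signs and cancel. Note that this part uses only that $c$ is a Casimir, not that it is linear.

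For Part 2 I would run the identical bookkeeping on (\ref{naw_m-2})--(\ref{naw_m-4}), the only novelty being that the perturbation now lives in the $y$-variables ($c({\bf y})$) and/or the top-left block is $\lambda\pi({\bf y})$. The effect is that a $c$-derivative, or a derivative of the top-left block, gets contracted against $\pi$ evaluated in the \emph{opposite} family of variables, producing terms of the shape $\sum_m\frac{\partial c}{\partial x_m}({\bf x})\,\pi_{mk}({\bf y})$ (or with $x\leftrightarrow y$). Such a term is \emph{not} annihilated by the Casimir condition $\sum_m\frac{\partial c}{\partial x_m}\pi_{mk}({\bf x})=0$. It is exactly here that linearity enters: when $c=\sum_l a_lx_l$ the gradient $\frac{\partial c}{\partial x_m}=a_m$ is constant and the Casimir relation reads $\sum_m a_m c_{mk}^n=0$ for every $n$ — a relation on structure constants that is blind to which variables one substitutes, whence $\sum_m a_m\pi_{mk}({\bf y})=0$ as well. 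The same observation makes $\lambda\pi({\bf y})$ carry the same structure constants as $\pi({\bf x})$, keeping the $\pi\,\partial\pi$ terms within the scope of the component Jacobi identity.

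The routine but lengthy part — and the step I expect to be the main obstacle — is the term-by-term bookkeeping of the $c$-dependent cross terms in the mixed triples, verifying that each one is killed either by the Casimir identity (in the linear, variable-agnostic form above), by the $x_p$-independence $\frac{\partial\pi}{\partial x_p}=0$, or by cancellation against its cyclic partner, while the structural terms collapse onto the component Jacobi identity for the Lie--Poisson bracket. Everything else is forced once linearity identifies the $x$- and $y$-structure constants.
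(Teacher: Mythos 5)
Your proposal is correct and follows the same route as the paper: the paper's entire proof of Theorem \ref{theorem-7} is the single line ``By direct calculation,'' and your coordinate-by-coordinate verification of skew-symmetry and the Jacobi identity is precisely that calculation, carried out with the right three cancellation mechanisms (the component Jacobi identity for the structure constants, the Casimir condition --- made variable-agnostic by linearity of $c$ in Part 2 --- and the $x_p$-independence $c_{ij}^p=0$, plus the pairwise cancellation of the quadratic $c$-terms). In fact your write-up supplies exactly the detail the paper omits, including the correct observation that Part 1 needs only the Casimir property while Part 2 genuinely requires linearity of $c$.
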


\begin{proof}
By direct calculation.  
\end{proof} 

\begin{theorem}
\label{theorem-8}
Let $c_1,\dots, c_r$, where $r=\dim M-\operatorname{rank} \pi$, be  Casimir functions for the Poisson structure $\pi$ and let  $c_i$ do not depend on the variable $x_p$ for certain $1\leq p\leq N$.  
\begin{itemize}
\item The functions
\begin{equation}
\label{cas-n-1-nm}
c_i({\bf x}) \quad \textrm{and} \quad \tilde{\tilde{c}}_i=c_i({\bf x}-\lambda {\bf y})-c_i({\bf x}), \quad i=1,\dots r,
\end{equation}
are Casimir functions for the Poisson tensor $\widetilde{\pi}_{T\mathfrak{g}^*, c({\bf x})}$ and $\widetilde{\pi}_{T\mathfrak{g}^*, c({\bf y})}$ given by
(\ref{naw_m-1}) and (\ref{naw_m-2}).
\item The functions
\begin{equation}
\label{cas-n-1-nm-1-p}
\hat{c}_i({\bf x}, {\bf y})= c_i({\bf x}-\sqrt{\lambda} {\bf y})+c_i({\bf x}+\sqrt{\lambda} {\bf y}) 
\end{equation}
and
\begin{equation}
\hat{\hat{c}}_i=c_i({\bf x}-\sqrt{\lambda}  {\bf y})-c_i({\bf x}+\sqrt{\lambda} {\bf y}), 
\end{equation}
$i=1,\dots r$, are Casimir functions for the Poisson tensor $\widetilde{\widetilde{\pi}}_{T\mathfrak{g}^*, c({\bf x})}$ and $\widetilde{\widetilde{\pi}}_{T\mathfrak{g}^*, c({\bf y})}$ given by
(\ref{naw_m-3}) and (\ref{naw_m-4}).
\end{itemize}
\end{theorem}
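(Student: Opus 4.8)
The strategy is a direct verification: a function $F$ on $T\mathfrak{g}^*$ is a Casimir of one of the block tensors (\ref{naw_m-1})--(\ref{naw_m-4}) if and only if $\{F,x_j\}=0$ and $\{F,y_j\}=0$ for every $j=1,\dots,N$, so it suffices to check these $2N$ brackets. First I would read the elementary brackets off the matrices. For (\ref{naw_m-1}) and (\ref{naw_m-2}) they are $\{x_s,x_j\}=\lambda\pi_{sj}({\bf x})$, $\{x_s,y_j\}=\pi_{sj}({\bf x})+\delta_{sp}\delta_{jp}\,c$ and $\{y_s,y_j\}=\pi_{sj}({\bf y})$, where $c$ abbreviates $c({\bf x})$ or $c({\bf y})$ and where the symbol $\delta_{pp}c$ denotes the matrix whose only nonzero entry is $c$ in the slot $(p,p)$ (this is the specialization $v=\partial/\partial x_p$ of the term $c\,vv^{\top}$ in (\ref{tensor-pi-1-n0})). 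For (\ref{naw_m-3}) and (\ref{naw_m-4}) only the upper-left block differs, becoming $\{x_s,x_j\}=\lambda\pi_{sj}({\bf y})$; by antisymmetry $\{y_s,x_j\}=\pi_{sj}({\bf x})-\delta_{sp}\delta_{jp}\,c$ in every case.

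The unifying device is to test a single family of functions $\phi_\mu:=c_i({\bf x}+\mu{\bf y})$ and to ask for which shift parameters $\mu$ it is Casimir. Writing ${\bf z}:={\bf x}+\mu{\bf y}$ and $g_s:=\frac{\partial c_i}{\partial x_s}({\bf z})$, the chain rule gives $\frac{\partial \phi_\mu}{\partial x_s}=g_s$ and $\frac{\partial \phi_\mu}{\partial y_s}=\mu g_s$. Two observations then drive every cancellation. Because $c_i$ does not depend on $x_p$, we have $g_p=0$, so each contribution carrying the factor $\delta_{sp}\delta_{jp}c$ disappears irrespective of whether $c$ is $c({\bf x})$ or $c({\bf y})$; this is why the correction term never obstructs the Casimir property and why the two bullets can be treated in parallel for the $c({\bf x})$ and $c({\bf y})$ tensors. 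Crucially, the Lie--Poisson structure is linear, so $\pi_{sj}({\bf x})+\mu\pi_{sj}({\bf y})=\pi_{sj}({\bf x}+\mu{\bf y})=\pi_{sj}({\bf z})$, which lets me recombine the $\pi({\bf x})$ and $\pi({\bf y})$ blocks into one tensor evaluated at the shifted point ${\bf z}$, where $\sum_s g_s\pi_{sj}({\bf z})=0$ by the Casimir identity for $c_i$.

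Carrying this out for (\ref{naw_m-1})--(\ref{naw_m-2}), after the $g_p=0$ terms drop, $\{\phi_\mu,x_j\}$ collapses to $(\lambda+\mu)\sum_s g_s\pi_{sj}({\bf x})$ while $\{\phi_\mu,y_j\}=\sum_s g_s\big(\pi_{sj}({\bf x})+\mu\pi_{sj}({\bf y})\big)=\sum_s g_s\pi_{sj}({\bf z})=0$ for every $\mu$. The first bracket vanishes exactly when $\mu=0$ (using the Casimir identity at ${\bf z}={\bf x}$) or $\mu=-\lambda$ (killing the prefactor), producing the two Casimirs $c_i({\bf x})$ and $c_i({\bf x}-\lambda{\bf y})$; since $\tilde{\tilde{c}}_i$ is their difference, the first bullet follows. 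For (\ref{naw_m-3})--(\ref{naw_m-4}) the upper-left block is $\lambda\pi({\bf y})$, so $\{\phi_\mu,x_j\}$ reduces to $\sum_s g_s\big(\mu\pi_{sj}({\bf x})+\lambda\pi_{sj}({\bf y})\big)$, which equals $\mu\sum_s g_s\pi_{sj}({\bf z})=0$ precisely when $\lambda/\mu=\mu$, i.e. $\mu=\pm\sqrt{\lambda}$; the companion bracket $\{\phi_\mu,y_j\}=\sum_s g_s\pi_{sj}({\bf z})$ again vanishes for all $\mu$. This is exactly where the square root enters: the admissible shifts give $c_i({\bf x}\pm\sqrt{\lambda}{\bf y})$, and their sum and difference are the claimed $\hat{c}_i$ and $\hat{\hat{c}}_i$.

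I expect no essential difficulty here; the whole proof rests on the two observations above (the correction term dies because $g_p=0$, and linearity turns $\pi({\bf x})+\mu\pi({\bf y})$ into $\pi({\bf z})$), after which matching the coefficient coming from the upper-left block forces $\mu\in\{0,-\lambda\}$ in the first case and $\mu=\pm\sqrt{\lambda}$ in the second. The only points that demand care are tracking the sign when passing from $\{x_j,y_s\}$ to $\{y_s,x_j\}$ by antisymmetry, and keeping the arguments of the partials $g_s$ consistently evaluated at ${\bf z}$ rather than at ${\bf x}$.
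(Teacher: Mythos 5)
Your proof is correct and takes the same route as the paper, whose entire proof is the phrase ``By direct calculation'': you verify $\{F,x_j\}=\{F,y_j\}=0$ for each claimed Casimir. Your organization of that calculation via the shifted family $\phi_\mu=c_i({\bf x}+\mu{\bf y})$ --- with the $\delta_{pp}c$ corrections killed by $\partial c_i/\partial x_p=0$ and linearity of the Lie--Poisson tensor recombining $\pi({\bf x})+\mu\pi({\bf y})=\pi({\bf x}+\mu{\bf y})$, which forces $\mu\in\{0,-\lambda\}$ in the first case and $\mu=\pm\sqrt{\lambda}$ in the second --- is a clean way to carry out exactly the computation the paper leaves to the reader.
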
 
 
\begin{proof}
By direct calculation.  
\end{proof}

\begin{theorem}
Let $(\mathfrak{g}^{*},\pi_1, \pi_2)$ be a bi-Hamiltonian manifold and the Poisson tensors $\pi_1$ and $\pi_2$ do not depend on the variable $x_p$ for certain $1\leq p\leq N$.
If the function $c$ is Casimir function for $\pi_2$, then the structure  on  $T\mathfrak{g}$ 
\begin{equation}
\label{tensor-pi-lambda-n3-1}
\widetilde{\pi}_{T\mathfrak{g}, \lambda, c({\bf x})}({\bf x}, {\bf y})=\left(
\begin{array}{c|c}
\epsilon \pi_2 ({\bf x})& \pi_2({\bf x})+\delta_{pp}c(x)\\
\hline
\pi_2({\bf x})-\delta_{pp} c(x) & \sum_{s=1}^{N}\dfrac{\partial\pi_2 }{\partial x_s}({\bf x})y_s+\lambda \pi_1 ({\bf x})-\lambda \epsilon \pi_1 ({\bf y})
\end{array}
\right)
\end{equation}
is Poisson tensor.
\end{theorem}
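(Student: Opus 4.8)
The plan is to use the linearity of the structures on $\mathfrak{g}^{*}$ to reduce the claim to Theorem \ref{theorem-7} together with a single compatibility check. Since $\pi_1$ and $\pi_2$ are linear, $\sum_{s}\frac{\partial \pi_a}{\partial x_s}({\bf x})\,y_s=\pi_a({\bf y})$ for $a=1,2$, so in the coordinates $({\bf x},{\bf y})$ the fundamental brackets of \eqref{tensor-pi-lambda-n3-1} read $\{x_i,x_j\}=\epsilon\,\pi_{2,ij}({\bf x})$, $\{x_i,y_j\}=\pi_{2,ij}({\bf x})+c({\bf x})\,\delta_{pi}\delta_{pj}$ and $\{y_i,y_j\}=\pi_{2,ij}({\bf y})+\lambda\bigl(\pi_{1,ij}({\bf x})-\epsilon\,\pi_{1,ij}({\bf y})\bigr)$. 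I would then split
\[
\widetilde{\pi}_{T\mathfrak{g},\lambda,c({\bf x})}=P_{\epsilon}+\lambda\,Q_{\epsilon},
\]
where $P_{\epsilon}$ is the tensor \eqref{naw_m-1} of Theorem \ref{theorem-7} with $\pi$ and $\lambda$ replaced by $\pi_2$ and $\epsilon$, and $Q_{\epsilon}$ is the tensor whose only nonzero block is the lower-right one, equal to $\pi_1({\bf x})-\epsilon\,\pi_1({\bf y})$. By bilinearity of the Schouten--Nijenhuis bracket, $[\widetilde{\pi},\widetilde{\pi}]_{S-N}=[P_{\epsilon},P_{\epsilon}]_{S-N}+2\lambda\,[P_{\epsilon},Q_{\epsilon}]_{S-N}+\lambda^{2}\,[Q_{\epsilon},Q_{\epsilon}]_{S-N}$, so it is enough to show that each of the three brackets vanishes.

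The first bracket vanishes because $P_{\epsilon}$ is a Poisson tensor by Theorem \ref{theorem-7}, whose hypotheses ($\pi_2$ independent of $x_p$, $c$ a Casimir of $\pi_2$) are exactly the present ones. For $[Q_{\epsilon},Q_{\epsilon}]_{S-N}=0$ I would check the single nontrivial relation $\circlearrowleft\{\{y_i,y_j\}_{Q_{\epsilon}},y_k\}_{Q_{\epsilon}}=0$: since $Q_{\epsilon}$ has no brackets involving the $x_i$, only the $\pi_1({\bf y})$ part contributes and the cyclic sum collapses to the Jacobi identity for $\pi_1$, applied once at ${\bf x}$ and once at ${\bf y}$ (legitimate because for a linear structure this identity is a relation among the structure constants and therefore holds at every point).

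The heart of the proof is the compatibility $[P_{\epsilon},Q_{\epsilon}]_{S-N}=0$, i.e.\ the vanishing of $\circlearrowleft\bigl(\{\{\,\cdot\,,\cdot\}_{P_{\epsilon}},\cdot\}_{Q_{\epsilon}}+\{\{\,\cdot\,,\cdot\}_{Q_{\epsilon}},\cdot\}_{P_{\epsilon}}\bigr)$ on every coordinate triple. Triples containing at most one $y$-variable vanish at once: the inner $Q_{\epsilon}$-bracket is nonzero only on a pair of $y$'s, while the inner $P_{\epsilon}$-bracket of two $x$'s is a function of ${\bf x}$ alone, which has trivial $Q_{\epsilon}$-bracket with any coordinate. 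Only two triples require work. For $(x_i,y_j,y_k)$ the sole surviving term is $\{\{y_j,y_k\}_{Q_{\epsilon}},x_i\}_{P_{\epsilon}}$, which after the $\pi_2$-contributions cancel equals $\epsilon\,c({\bf x})\,\delta_{pi}\,\partial\pi_{1,jk}/\partial x_p$; this is zero precisely because $\pi_1$ does not depend on $x_p$. For $(y_i,y_j,y_k)$ the surviving terms assemble, after discarding contributions proportional to $\partial\pi_{1,ij}/\partial x_p=0$, into $\circlearrowleft\sum_{s}\bigl(\partial_s\pi_{1,ij}\,\pi_{2,sk}+\partial_s\pi_{2,ij}\,\pi_{1,sk}\bigr)$ evaluated at ${\bf x}$, minus $\epsilon$ times the same expression evaluated at ${\bf y}$; each of these is exactly the bi-Hamiltonian compatibility condition \eqref{bi-H} for $\pi_1,\pi_2$, so both vanish.

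The main obstacle is this last piece of bookkeeping: arranging the genuinely mixed $(y,y,y)$ terms so that they reproduce \eqref{bi-H} at the two points ${\bf x}$ and ${\bf y}$, while the $x_p$-independence of $\pi_1$ is what clears the residual Casimir terms in both nontrivial triples. With the three Schouten--Nijenhuis brackets shown to vanish, $[\widetilde{\pi},\widetilde{\pi}]_{S-N}=0$ for every $\lambda$ and $\epsilon$, which is the assertion.
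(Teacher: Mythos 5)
Your proof is correct, and it takes a genuinely different (and more structured) route than the paper, whose entire proof is the phrase ``By direct calculation'' --- i.e., judging from the proofs of Theorems \ref{theorem-2}, \ref{theorem-3} and \ref{theorem-9}, a brute-force expansion of the Jacobi identity over all coordinate triples of the full tensor. Your splitting $\widetilde{\pi}_{T\mathfrak{g},\lambda,c({\bf x})}=P_{\epsilon}+\lambda Q_{\epsilon}$, combined with bilinearity of the Schouten--Nijenhuis bracket, outsources $[P_{\epsilon},P_{\epsilon}]_{S-N}=0$ to Theorem \ref{theorem-7} (whose hypotheses --- $\pi_2$ Lie--Poisson, independent of $x_p$, $c$ a Casimir --- are exactly those at hand) and reduces the rest to $[Q_{\epsilon},Q_{\epsilon}]_{S-N}=0$ plus the cross term $[P_{\epsilon},Q_{\epsilon}]_{S-N}=0$; this buys modularity and makes visible which hypothesis kills which obstruction: the compatibility condition \eqref{bi-H}, applied at the two points ${\bf x}$ and ${\bf y}$, handles the $(y_i,y_j,y_k)$ triple, while $x_p$-independence of $\pi_1$ clears the residual Casimir terms $\epsilon\, c({\bf x})\,\delta_{pi}\,\partial\pi_{1,jk}/\partial x_p$. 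I verified your two nontrivial triples independently and they come out exactly as you state. Two small points worth making explicit in a write-up: first, in the triples with at most one $y$, the inner brackets $\{x_j,y_k\}_{P_{\epsilon}}$ (not only the $\{x_i,x_j\}_{P_{\epsilon}}$ brackets your wording mentions) are also functions of ${\bf x}$ alone, and that is what kills them under the outer $Q_{\epsilon}$-bracket; second, the whole argument presupposes that $\pi_1,\pi_2$ are linear (Lie--Poisson) tensors --- the theorem does not say ``linear'' verbatim, but this is the intended reading of the block of results on $\mathfrak{g}^{*}$ (cf.\ the sentence preceding Theorem \ref{theorem-7}), and without linearity neither the expression $\pi_1({\bf y})$ in the statement, nor your identification $\sum_{s}\frac{\partial \pi_2}{\partial x_s}({\bf x})\,y_s=\pi_2({\bf y})$, nor your evaluation of the Jacobi and compatibility identities at shifted points would be legitimate.
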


\begin{proof}
By direct calculation.  
\end{proof}

\section{ Exotic deformation of tangent Poisson structures  on $TM$}

In this section, we present additional  Poisson structures  on $TM$. We are discussing also in some particular cases how to transfer Casimir functions and functions in involution from $M$ to the space $TM$.

\begin{theorem}
\label{theorem-9}
Let $(M,\pi_1)$ and $(M,\pi_2)$ be a Poisson manifolds. If the Poisson tensors $\pi_1$ and $\pi_2$ satisfy the conditions
\begin{align}
\label{war-a1}
& \sum_{s=1}^N \left(\pi_{2,sk}({\bf x})\frac{\partial \pi_{1,ij}}{\partial y_s} ({\bf y})
+\pi_{2,sj}({\bf x})\frac{\partial \pi_{1,ki}}{\partial y_s} ({\bf y})
+\pi_{2,si}({\bf x})\frac{\partial \pi_{1,jk}}{\partial y_s}({\bf y}) \right)=0,\\
& \sum_{s=1}^N 
\left(  \sum_{m=1}^N  y_m\frac{\partial \pi_{2,sk}}{\partial x_m} ({\bf x}) \frac{\partial\pi_{1,ij}}{\partial y_s} ({\bf y})
+\pi_{1,si}({\bf y})\frac{\partial \pi_{2,jk}}{\partial x_s} ({\bf x})
+\pi_{1,sj}({\bf y})\frac{\partial \pi_{2,ki}}{\partial x_s} ({\bf x})\right)= 0,\nonumber\\
& \sum_{s,m=1}^N y_s \pi_{1,mk}({\bf y})  \frac{\partial^2\pi_{2,ij}}{\partial x_m\partial x_s} ({\bf x})=0,\nonumber
\end{align}
then there exists a Poisson structure on the Poisson manifold $TM$ associated with $\pi_1$ and $\pi_2$ of the form
\begin{equation}
\label{tensor-pi-1-b}
\left( \pi_1 \ltimes_1 \pi_2 \right) ({\bf x}, {\bf y})=\left(
\begin{array}{c|c}
 \pi_1 ({\bf y})& \pi_2({\bf x})\\
\hline
\pi_2({\bf x}) & \sum_{s=1}^{N}\dfrac{\partial\pi_2 }{\partial x_s}({\bf x})y_s
\end{array}
\right).
\end{equation}
\end{theorem}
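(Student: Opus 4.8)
The plan is to work entirely in the local coordinates $({\bf x},{\bf y})$ and verify the two defining properties of a Poisson bivector directly: skew-symmetry and the Jacobi identity. Skew-symmetry is immediate, since the diagonal blocks $\pi_1({\bf y})$ and $\sum_s\partial_{x_s}\pi_2({\bf x})\,y_s$ are skew-symmetric (being built from the skew-symmetric tensors $\pi_1$ and $\pi_2$), while the two off-diagonal blocks both equal $\pi_2({\bf x})$ and combine with the identity $\pi_2^\top=-\pi_2$ to make the full $2N\times 2N$ matrix skew-symmetric. The entire content therefore lies in the Jacobi identity, which I would check by evaluating the Jacobiator $\circlearrowleft\{\{u,v\},w\}$ on every triple of coordinate functions. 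Up to cyclic symmetry there are exactly four types of triple to treat: $(x_i,x_j,x_k)$, $(x_i,x_j,y_k)$, $(x_i,y_j,y_k)$ and $(y_i,y_j,y_k)$.

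First I would record the elementary brackets read off from (\ref{tensor-pi-1-b}): $\{x_i,x_j\}=\pi_{1,ij}({\bf y})$, $\{x_i,y_j\}=\{y_i,x_j\}=\pi_{2,ij}({\bf x})$ (consistent with $\pi_2^\top=-\pi_2$), and $\{y_i,y_j\}=\sum_s\partial_{x_s}\pi_{2,ij}({\bf x})\,y_s$. The crucial bookkeeping point is that $\pi_1$ depends only on ${\bf y}$ and $\pi_2$ only on ${\bf x}$, so a bracket against an $x$-coordinate differentiates $\pi_2$-entries in $x$, a bracket against a $y$-coordinate differentiates $\pi_1$-entries in $y$, and every such derivative is paired against the appropriate elementary bracket above. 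With these rules the triple $(x_i,x_j,x_k)$ collapses after one application of the chain rule to precisely the first condition of (\ref{war-a1}), and $(x_i,x_j,y_k)$ collapses to the second; neither requires any further input.

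The two remaining cases carry the real work. For $(y_i,y_j,y_k)$ the computation is exactly the Jacobiator of the tangent lift $\pi_{TM}$ of the single Poisson tensor $\pi_2$, and it vanishes because the tangent lift of a Poisson structure is again Poisson (see \cite{AJ,GraUrb}); equivalently, the coefficient of each $y_t$ in this Jacobiator is the $x_t$-derivative of the coordinate Jacobi expression $\sum_s(\partial_{x_s}\pi_{2,ij}\,\pi_{2,sk}+\text{cyc})=0$ for $\pi_2$, hence vanishes identically. The case $(x_i,y_j,y_k)$ is the delicate one: after full expansion it splits into a part that is literally the Jacobi identity for $\pi_2$ (and so vanishes on its own) plus a residual term of the form $\sum_{s,m}y_s\,\pi_{1,mi}({\bf y})\,\partial_{x_s}\partial_{x_m}\pi_{2,jk}({\bf x})$, which, after relabeling the free indices cyclically and using symmetry of the second derivative, is exactly the third condition of (\ref{war-a1}).

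I expect the main obstacle to be this separation in the case $(x_i,y_j,y_k)$: one must carefully isolate the genuine $\pi_2$-Jacobi contribution, which is available for free, from the residual double-derivative term that the third hypothesis is designed to annihilate, and then match the index pattern of that residual with the stated condition. Once the four cases are organized in this way, the vanishing of the full Jacobiator follows term by term from the three hypotheses (\ref{war-a1}) together with the Jacobi identity of $\pi_2$, which completes the proof.
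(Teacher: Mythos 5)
Your proposal is correct and follows essentially the same route as the paper: a direct coordinate verification of the Jacobi identity on the four types of triples $(x_i,x_j,x_k)$, $(x_i,x_j,y_k)$, $(x_i,y_j,y_k)$, $(y_i,y_j,y_k)$, with the first two cases matching the first two hypotheses of (\ref{war-a1}), the mixed $(x,y,y)$ case splitting into the $\pi_2$-Jacobi identity plus the residual double-derivative term killed by the third hypothesis, and the $(y,y,y)$ case vanishing by the tangent-lift argument for $\pi_2$. The only difference is that you make explicit the use of the Jacobi identity of $\pi_2$ in the last two cases, which the paper's computation absorbs silently into its stated final expressions.
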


\begin{proof}
By direct calculation we obtain
\begin{align} 
1)   &\{\{x_i,x_j\}_{\ltimes_1},x_k\}_{\ltimes_1}+\{\{x_k,x_i\}_{\ltimes_1},x_j\}_{\ltimes_1}+\{\{x_j,x_k\}_{\ltimes_1},x_i\}_{\ltimes_1}=\\
&\sum_{s=1}^N \left(\pi_{2,sk}({\bf x})\frac{\partial \pi_{1,ij}}{\partial y_s} ({\bf y})
+\pi_{2,sj}({\bf x})\frac{\partial \pi_{1,ki}}{\partial y_s} ({\bf y})
+\pi_{2,si}({\bf x})\frac{\partial \pi_{1,jk}}{\partial y_s} ({\bf y})
\right),\nonumber\\
2) &\{\{x_i,x_j\}_{\ltimes_1},y_k\}_{\ltimes_1}+\{\{y_k,x_i\}_{\ltimes_1},x_j\}_{\ltimes_1}+\{\{x_j,y_k\}_{\ltimes_1},x_i\}_{\ltimes_1}=\\
&\sum_{s=1}^N \left( 
\sum_{m=1}^N \frac{\partial \pi_{2,sk}}{\partial x_m} ({\bf x}) y_m \frac{\partial\pi_{1,ij}}{\partial y_s}({\bf y}) 
+\pi_{1,si}({\bf y})\frac{\partial \pi_{2,jk}}{\partial x_s} ({\bf x})
+\pi_{1,sj}({\bf y})\frac{\partial \pi_{2,ki}}{\partial x_s} ({\bf x})\right), \nonumber\\
3) & \{\{y_i,y_j\}_{\ltimes_1},x_k\}_{\ltimes_1}+\{\{y_j,x_k\}_{\ltimes_1},y_i\}_{\ltimes_1}+\{\{x_k,y_i\}_{\ltimes_1},y_j\}_{\ltimes_1}=\\
 &\sum_{s,m=1}^N \pi_{1,mk}({\bf y}) y_s \frac{\partial^2\pi_{2,ij}}{\partial x_m\partial x_s} ({\bf x}), \nonumber\\
4) & \{\{y_i,y_j\}_{\ltimes_1},y_k\}_{\ltimes_1}+\{\{y_j,y_k\}_{\ltimes_1},y_i\}_{\ltimes_1}+\{\{y_k,y_i\}_{\ltimes_1},y_j\}_{\ltimes_1}=0.
\end{align}
\end{proof}

\begin{theorem}
Let $(M,\pi_1)$ and $(M,\pi_2)$ be Poisson manifolds. If the Poisson tensors $\pi_1$ and $\pi_2$ satisfy the conditions
\begin{align}
\label{war-a}
& \sum_{s=1}^N\left(
\pi_{2,sk}({\bf x}) \frac{\partial \pi_{1,ij}}{\partial x_s}({\bf x})+
\pi_{1,si}({\bf x}) \frac{\partial \pi_{2,jk}}{\partial x_s}({\bf x})+
\pi_{1,sj}({\bf x}) \frac{\partial \pi_{2,ki}}{\partial x_s}({\bf x})\right)=0,\\
& \sum_{s,m=1}^N y_s\pi_{1,mk}({\bf x}) \frac{\partial^2 \pi_{2,ij}}{\partial x_m \partial x_s}({\bf x})=0,\nonumber
\end{align}
then there exists a Poisson structure on the Poisson manifold $TM$ associated with $\pi_1$ and $\pi_2$ of the form
\begin{equation}
\label{tensor-pi-1}
\left( \pi_1 \ltimes_2 \pi_2 \right) ({\bf x}, {\bf y})=\left(
\begin{array}{c|c}
\pi_1 ({\bf x})& \pi_2({\bf x})\\
\hline
\pi_2({\bf x}) & \sum_{s=1}^{N}\dfrac{\partial\pi_2 }{\partial x_s}({\bf x})y_s
\end{array}
\right).
\end{equation}
\end{theorem}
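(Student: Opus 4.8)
The plan is to verify the Jacobi identity for the bracket $\{\cdot,\cdot\}_{\ltimes_2}$ determined by the tensor (\ref{tensor-pi-1}) directly in the local coordinates $({\bf x},{\bf y})$, exactly as in the proof of Theorem \ref{theorem-9}. By bilinearity and the Leibniz rule it suffices to check the Jacobiator on coordinate functions, and since every block of (\ref{tensor-pi-1}) is either a function of ${\bf x}$ alone (the block $\pi_1({\bf x})$ and the two $\pi_2({\bf x})$ blocks) or linear in ${\bf y}$ (the block $\sum_s\frac{\partial\pi_2}{\partial x_s}y_s$), only four cyclic triples are inequivalent up to symmetry: $(x_i,x_j,x_k)$, $(x_i,x_j,y_k)$, $(y_i,y_j,x_k)$ and $(y_i,y_j,y_k)$. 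I would first record the generating brackets $\{x_i,x_j\}=\pi_{1,ij}({\bf x})$, $\{x_i,y_j\}=\{y_i,x_j\}=\pi_{2,ij}({\bf x})$ and $\{y_i,y_j\}=\sum_s\frac{\partial\pi_{2,ij}}{\partial x_s}({\bf x})\,y_s$, together with their antisymmetry.

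For the pure triple $(x_i,x_j,x_k)$, since $\pi_{1,ij}$ depends on ${\bf x}$ only and $\{x_m,x_k\}=\pi_{1,mk}$, the cyclic sum is exactly the Jacobiator of $\pi_1$ and vanishes because $\pi_1$ is a Poisson tensor. This is the structural point that distinguishes the present theorem from Theorem \ref{theorem-9}: there $\pi_1$ occupies the ${\bf y}$-block, so the $(x,x,x)$ triple closes through $\{y_m,x_k\}=\pi_{2,mk}$ and produces an extra $\partial_{y}$-condition, whereas here it closes on $\pi_1$'s own Jacobi identity and no such condition is needed. For the pure triple $(y_i,y_j,y_k)$, the bottom-right block is precisely the tangent lift of $\pi_2$ and the off-diagonal blocks equal $\pi_2({\bf x})$, so this cyclic sum is the standard tangent-Poisson computation for (\ref{tensor-pi}) applied to $\pi_2$ and involves $\pi_1$ not at all; it vanishes because $\pi_2$ is Poisson.

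The two mixed triples produce the hypotheses. Expanding $(x_i,x_j,y_k)$ with the brackets above collapses the cyclic sum to $\sum_s\bigl(\pi_{2,sk}\frac{\partial\pi_{1,ij}}{\partial x_s}+\pi_{1,si}\frac{\partial\pi_{2,jk}}{\partial x_s}+\pi_{1,sj}\frac{\partial\pi_{2,ki}}{\partial x_s}\bigr)$, which is the first condition in (\ref{war-a}). For $(y_i,y_j,x_k)$ the Leibniz rule applied to $\{y_i,y_j\}=\sum_s\frac{\partial\pi_{2,ij}}{\partial x_s}\,y_s$ splits the computation into a $y$-free part and a $y$-linear part; I would show that the $y$-free part assembles into $\sum_s\bigl(\pi_{2,sk}\frac{\partial\pi_{2,ij}}{\partial x_s}+\pi_{2,si}\frac{\partial\pi_{2,jk}}{\partial x_s}+\pi_{2,sj}\frac{\partial\pi_{2,ki}}{\partial x_s}\bigr)$, i.e.\ the Jacobiator of $\pi_2$, and therefore drops out unconditionally, leaving exactly $\sum_{s,m}y_s\,\pi_{1,mk}\frac{\partial^2\pi_{2,ij}}{\partial x_m\partial x_s}$, which is the second condition in (\ref{war-a}).

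The main obstacle is precisely the triple $(y_i,y_j,x_k)$: one must differentiate the ${\bf y}$-linear block correctly, keeping track both of the contraction $\{y_s,x_k\}=\pi_{2,sk}$ and of the second-derivative term coming from $\bigl\{\tfrac{\partial\pi_{2,ij}}{\partial x_s},x_k\bigr\}=\sum_m\frac{\partial^2\pi_{2,ij}}{\partial x_m\partial x_s}\,\pi_{1,mk}$, and then recognizing that the first group reorganizes into the $\pi_2$-Jacobiator so that it cancels without invoking any hypothesis. Once this cancellation is seen, the identification with the second line of (\ref{war-a}) is immediate; assembling the four triples then completes the verification that $\pi_1\ltimes_2\pi_2$ is a Poisson tensor.
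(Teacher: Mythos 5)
Your proposal is correct and follows essentially the same route as the paper: a direct verification of the Jacobi identity on the four inequivalent coordinate triples, with the $(x,x,x)$ and $(y,y,y)$ sums vanishing by the Jacobi identities of $\pi_1$ and $\pi_2$ respectively, and the two mixed triples producing exactly the two conditions in (\ref{war-a}). In fact your write-up makes explicit the one cancellation the paper leaves implicit (the $\pi_2$-Jacobiator dropping out of the $(y_i,y_j,x_k)$ triple), and your final expression $\sum_{s,m}y_s\,\pi_{1,mk}\frac{\partial^2\pi_{2,ij}}{\partial x_m\partial x_s}$ is the correct one, the paper's displayed factor $\pi_{1,mk}\pi_{1,sk}$ being a typo.
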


\begin{proof}
By the Jacobi identity we have
\begin{equation}
\{\{x_i,x_j\}_{\ltimes_2},x_k\}_{\ltimes_2}+\{\{x_k,x_i\}_{\ltimes_2},x_j\}_{\ltimes_2}+\{\{x_j,x_k\}_{\ltimes_2},x_i\}_{\ltimes_2}=
\end{equation}
$$
=\{\{y_i,y_j\}_{\ltimes_2},y_k\}_{\ltimes_2}+\{\{y_k,y_i\}_{\ltimes_2},y_j\}_{\ltimes_2}+\{\{y_j,y_k\}_{\ltimes_2},y_i\}_{\ltimes_2}=0,
$$
\begin{equation}
\{\{x_i,x_j\}_{\ltimes_2},y_k\}_{\ltimes_2}+\{\{y_k,x_i\}_{\ltimes_2},x_j\}_{\ltimes_2}+\{\{x_j,y_k\}_{\ltimes_2},x_i\}_{\ltimes_2}=
\end{equation}
$$
\sum_{s=1}^N\left(
\frac{\partial \pi_{1,ij}}{\partial x_s}({\bf x})\pi_{2,sk}({\bf x})+
\frac{\partial \pi_{2,ki}}{\partial x_s}({\bf x})\pi_{1,sj}({\bf x})+\right.
$$$$\left.+
\frac{\partial \pi_{2,jk}}{\partial x_s}({\bf x})\pi_{1,si}({\bf x})\right),
$$
\begin{equation}
\{\{y_i,y_j\}_{\ltimes_2},x_k\}_{\ltimes_2}+\{\{x_k,y_i\}_{\ltimes_2},y_j\}_{\ltimes_2}+\{\{y_j,x_k\}_{\ltimes_2},y_i\}_{\ltimes_2}=
\end{equation}
$$
 \sum_{s,m=1}^N  y_s\pi_{1,mk}\pi_{1,sk}({\bf x}) \frac{\partial^2 \pi_{2,ij}}{\partial x_m \partial x_s}({\bf x}).
$$
We see that if the conditions \eqref{war-a}  are fulfilled then $\pi_1 \ltimes_2 \pi_2$ is a Poisson tensor. 
\end{proof}

Let us observe that if we put $\pi_1=0$ then we reduce (\ref{tensor-pi-1-b}) or (\ref{tensor-pi-1})  to (\ref{tensor-pi}), i.e. we obtain the classical tangent Poisson structure on $TM$.
Note that the last condition in  \eqref{war-a1} or \eqref{war-a} is very restrictive but it is automatically fulfilled for the linear or constant Poisson tensor $\pi_2$. In addition, if $\pi_1= \pi_2$, then the conditions \eqref{war-a1} or \eqref{war-a} are   also realized. In this case if $c_1,\dots, c_r$ are Casimir functions for the Poisson structure $\pi_2$, then the functions $c_i({\bf x}-{\bf y})+ c_i({\bf x}+{\bf y})$, $ c_i({\bf x}-{\bf y})- c_i({\bf x}+{\bf y})$, $ i=1,\dots r$,
are the Casimir functions for the Poisson tensor $\pi_2 \ltimes_1 \pi_2$. Similarly the functions $c_i({\bf x}-{\bf y})+ c_i({\bf x})$, $ c_i({\bf x}-{\bf y})- c_i({\bf x})$, $ i=1,\dots r$,
are the Casimir functions for the Poisson tensor $\pi_2 \ltimes_2 \pi_2$.  There is a natural bijection between  linear Poisson structures $\pi_2$ and  Lie algebras $\mathfrak{g}$. Moreover, there is a relationship between linear Poisson structure $\pi_2 \ltimes_1 \pi_2$  and Cartan decomposition of Lie algebra $T\mathfrak{g}=\mathfrak{g}+V$ satisfying the relations
\begin{equation}
[\mathfrak{g},\mathfrak{g}]\subset  \mathfrak{g},\quad 
[\mathfrak{g}, V]\subset  V, \quad [V,V]\subset \mathfrak{g}.
\end{equation}

\begin{corollary} 
If linear Poisson tensors $\pi_1$ and $\pi_2$ are compatible and conditions (\ref{war-a1}) are satisfied  then 
conditions (\ref{war-a}) are also satisfied.
\end{corollary}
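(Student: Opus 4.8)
The plan is to turn every term into a statement about structure constants, using that $\pi_1$ and $\pi_2$ are linear. Writing $\pi_{1,ij}({\bf x})=\sum_n (c_1)^n_{ij}\,x_n$ and $\pi_{2,ij}({\bf x})=\sum_n (c_2)^n_{ij}\,x_n$, each first derivative $\partial\pi_{a,ij}/\partial x_s=(c_a)^s_{ij}$ is a constant, and all second derivatives vanish. I would then check the two requirements of (\ref{war-a}) one at a time. The second one, $\sum_{s,m} y_s\,\pi_{1,mk}({\bf x})\,\partial^2\pi_{2,ij}/\partial x_m\partial x_s=0$, is immediate, since the second derivatives of the linear tensor $\pi_2$ are identically zero; the same observation shows the third line of (\ref{war-a1}) is vacuous, matching the remark that this condition is free for linear or constant $\pi_2$.

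The substance of the corollary is the first equation of (\ref{war-a}), and for this I would feed in the second line of (\ref{war-a1}). Two elementary consequences of linearity do the work. First, $\partial\pi_{1,ij}/\partial y_s({\bf y})=(c_1)^s_{ij}=\partial\pi_{1,ij}/\partial x_s({\bf x})$, so the $y$-derivatives of $\pi_1$ are just its structure constants. Second, since each $\pi_{2,sk}$ is homogeneous of degree one, $\sum_m y_m\,\partial\pi_{2,sk}/\partial x_m({\bf x})=\pi_{2,sk}({\bf y})$, i.e. the $x$-derivative contracted against ${\bf y}$ collapses to $\pi_2$ evaluated at ${\bf y}$. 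Substituting both into the second line of (\ref{war-a1}) turns it into the polynomial identity in ${\bf y}$
\[
\sum_{s=1}^N\Big(\pi_{2,sk}({\bf y})\,(c_1)^s_{ij}+\pi_{1,si}({\bf y})\,(c_2)^s_{jk}+\pi_{1,sj}({\bf y})\,(c_2)^s_{ki}\Big)=0,
\]
which holds for every ${\bf y}\in M$. Relabelling ${\bf y}\to{\bf x}$ and rewriting the constants as $(c_1)^s_{ij}=\partial\pi_{1,ij}/\partial x_s$, $(c_2)^s_{jk}=\partial\pi_{2,jk}/\partial x_s$, $(c_2)^s_{ki}=\partial\pi_{2,ki}/\partial x_s$ yields precisely the first equation of (\ref{war-a}).

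I would close by locating the compatibility hypothesis. Labelling the six summands of the Schouten--Nijenhuis identity (\ref{bi-H}) as $A,\dots,F$, the same linearity reduction turns the first line of (\ref{war-a1}) into $A+C+E=0$; together with $[\pi_1,\pi_2]_{S-N}=0$ this gives the companion relation $B+D+F=0$ for the three $\pi_1\,\partial\pi_2$ terms. This shows the hypotheses fit together consistently, although the implication itself rides entirely on the substitution above. I expect no genuine obstacle: the only care needed is the index bookkeeping and the homogeneity identity $\sum_m y_m\,\partial_{x_m}\pi_{2,sk}({\bf x})=\pi_{2,sk}({\bf y})$, after which (\ref{war-a}) follows without further computation.
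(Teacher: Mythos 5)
Your proof is correct and amounts to the direct structure-constant verification that the paper leaves implicit (the corollary is stated there without proof, in the spirit of the neighbouring ``by direct calculation'' results). The three reductions you use --- constancy of $\partial \pi_{1,ij}/\partial y_s$, the Euler-homogeneity collapse $\sum_m y_m\,\partial_{x_m}\pi_{2,sk}({\bf x})=\pi_{2,sk}({\bf y})$, and the vanishing of second derivatives of the linear $\pi_2$ --- do exactly the right work: the second line of (\ref{war-a1}) becomes, after relabelling ${\bf y}\to{\bf x}$, the first line of (\ref{war-a}), and the second line of (\ref{war-a}) is automatic. Your closing remark is also worth noting: the implication never actually invokes $[\pi_1,\pi_2]_{S-N}=0$, so for linear tensors the compatibility hypothesis in the corollary is redundant --- the compatibility identity only yields the companion relation $B+D+F=0$, not the combination $A+D+F$ that (\ref{war-a}) requires, which is instead supplied directly by (\ref{war-a1}).
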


\begin{corollary} 
If linear or constant Poisson tensors $\pi_1$ and $\pi_2$ are compatible and conditions (\ref{war-a1}) are satisfied  then 
\begin{equation}
\label{tensor-pi-1-c}
\left( \pi_2 \ltimes_1 \pi_1 \right) ({\bf x}, {\bf y})=\left(
\begin{array}{c|c}
 \pi_2 ({\bf y})& \pi_1({\bf x})\\
\hline
\pi_1({\bf x}) & \sum_{s=1}^{N}\dfrac{\partial\pi_1 }{\partial x_s}({\bf x})y_s
\end{array}
\right)
\end{equation}
is a Poisson tensor.
\end{corollary}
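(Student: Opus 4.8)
The plan is to recognise that the tensor $\pi_2 \ltimes_1 \pi_1$ in (\ref{tensor-pi-1-c}) is obtained from the template $\pi_1 \ltimes_1 \pi_2$ of Theorem \ref{theorem-9} (equation (\ref{tensor-pi-1-b})) simply by interchanging the roles of $\pi_1$ and $\pi_2$: here $\pi_2$ sits in the upper-left block as $\pi_2(\mathbf{y})$, while $\pi_1$ occupies the off-diagonal blocks and produces the lifted block $\sum_s \partial_{x_s}\pi_1(\mathbf{x})\,y_s$. Hence, by Theorem \ref{theorem-9} applied to the ordered pair $(\pi_2,\pi_1)$, it suffices to verify the three conditions (\ref{war-a1}) with $\pi_1$ and $\pi_2$ interchanged everywhere. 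First I would dispose of the third condition: after the interchange it reads $\sum_{s,m} y_s\, \pi_{2,mk}(\mathbf{y})\,\partial^2_{x_m x_s}\pi_{1,ij}(\mathbf{x}) = 0$, and since $\pi_1$ is linear or constant its second derivatives vanish identically, so this holds automatically. This is exactly the point of the remark following Theorem \ref{theorem-9} that the last condition is automatic for a linear or constant tensor.

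Next I would reduce the remaining two interchanged conditions to purely algebraic identities among structure constants. Writing $\pi_{1,ij}(\mathbf{x}) = \sum_n a^n_{ij}x_n$ and $\pi_{2,ij}(\mathbf{x}) = \sum_n b^n_{ij}x_n$ in the linear case (with constant coefficients in the constant case), every derivative that appears becomes a structure constant and every factor $\pi_{\cdot,sk}$ becomes linear in $\mathbf{x}$ or $\mathbf{y}$; requiring the identity to hold for all $\mathbf{x},\mathbf{y}$ then splits each condition into one scalar relation per coefficient index $m$. In this language the two hypotheses read as follows. Condition (\ref{war-a1}) forces, for each $m$, the Jacobi-type cyclic form $\sum_s\big(b^m_{sk}a^s_{ij}+b^m_{sj}a^s_{ki}+b^m_{si}a^s_{jk}\big)=0$ together with a second, non-cyclic relation coming from its middle line; while the compatibility relation (\ref{bi-H}) expands into the symmetric statement that the sum of this cyclic form and its $a\leftrightarrow b$ partner $\sum_s\big(a^m_{sk}b^s_{ij}+a^m_{sj}b^s_{ki}+a^m_{si}b^s_{jk}\big)$ vanishes.

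The heart of the argument is then the bookkeeping that assembles these inputs into the interchanged conditions. Since compatibility gives (first cyclic form) $+$ (second cyclic form) $=0$ while (\ref{war-a1}) gives the first cyclic form $=0$, the second cyclic form vanishes too, and this is \emph{exactly} the interchanged version of the first condition in (\ref{war-a1}); this is precisely the mechanism already used in the preceding corollary relating (\ref{war-a1}) to (\ref{war-a}). For the interchanged version of the second (mixed $x$--$y$) condition one then combines the non-cyclic relation supplied by the middle line of (\ref{war-a1}) with the two cyclic relations above, so that the one leftover term cancels against a quantity already known to vanish.

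I expect this index matching to be the main obstacle. The bilinear forms in the structure constants are not all cyclically symmetric, so one must track carefully which permutation of $(i,j,k)$ each term carries and verify that the particular combination demanded is a difference of quantities that compatibility and (\ref{war-a1}) have already killed. Some additional care is needed in the degenerate sub-cases, notably when $\pi_2$ is constant: there several of the relations above become vacuous, so one must check that the surviving identities still close. The linear--constant pairing is where the bookkeeping is most delicate and would have to be inspected on its own; in the genuinely bilinear (both linear) case the cancellation described above goes through cleanly.
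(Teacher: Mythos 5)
Your strategy is the natural one and is clearly what the paper intends (the corollary is stated without any proof, so the implicit argument is exactly yours): apply Theorem \ref{theorem-9} to the ordered pair $(\pi_2,\pi_1)$ and deduce the interchanged conditions (\ref{war-a1}) from compatibility plus the original ones. In the case where \emph{both} tensors are linear your bookkeeping is correct and closes. Writing $\pi_{1,ij}=\sum_n a^n_{ij}x_n$, $\pi_{2,ij}=\sum_n b^n_{ij}x_n$, compatibility (\ref{bi-H}) is precisely the sum of the two cyclic forms, so the interchanged first condition $\sum_s\bigl(a^m_{sk}b^s_{ij}+a^m_{sj}b^s_{ki}+a^m_{si}b^s_{jk}\bigr)=0$ follows; the interchanged mixed condition equals $\sum_s\bigl(a^m_{sk}b^s_{ij}+b^m_{si}a^s_{jk}+b^m_{sj}a^s_{ki}\bigr)$, which by the first line of (\ref{war-a1}) reduces to $\sum_s\bigl(a^m_{sk}b^s_{ij}-b^m_{sk}a^s_{ij}\bigr)$, and this vanishes by combining the middle line of (\ref{war-a1}) with the interchanged first condition, exactly as you describe; the interchanged third condition is automatic.

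The sub-case you defer, however --- $\pi_1$ linear and $\pi_2$ constant --- is not merely delicate: it fails, so no further inspection can close it. When $\pi_2$ is constant, the middle line of (\ref{war-a1}) degenerates to $0=0$ and supplies no relation, so the leftover term $\sum_s B_{sk}a^s_{ij}$ (where $B=\pi_2$) has nothing to cancel against; the interchanged mixed condition reduces via the first line of (\ref{war-a1}) to $-\sum_s B_{sk}a^s_{ij}=0$, which does not follow from the hypotheses. Concretely, take $\pi_1=\pi_{3,9}$ (the $\mathfrak{so}(3)$ Lie--Poisson tensor) and $\pi_2=\pi_1(\mathbf{x}_0)$ the frozen tensor at $\mathbf{x}_0=(0,0,1)$ --- precisely the class the paper highlights after Corollary \ref{corollary 1}. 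These are compatible and satisfy (\ref{war-a1}): the first line is the Jacobi identity of $\pi_1$ evaluated at $\mathbf{x}_0$, and the other two lines are identically zero. Yet for $\pi_2\ltimes_1\pi_1$ one has $\{x_1,x_3\}=\pi_{2,13}=0$, $\{x_1,y_1\}=\pi_{1,11}(\mathbf{x})=0$ and $\{x_3,y_1\}=\pi_{1,31}(\mathbf{x})=x_2$, so the Jacobi cyclic sum on $(x_1,x_3,y_1)$ equals $\{x_2,x_1\}=\pi_{2,21}=-1\neq 0$. Hence the corollary is false under the reading that allows one tensor linear and the other constant; your proof is complete (modulo the routine index work) only under the reading ``both linear or both constant'', and the case you flagged cannot be repaired but must instead be excluded from the statement.
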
 

\begin{corollary} 
If linear or constant Poisson tensors $\pi_1$ and $\pi_2$ are compatible and conditions (\ref{war-a}) are satisfied  then 
\begin{equation}
\label{tensor-pi-1-d}
\left( \pi_2 \ltimes_2 \pi_1 \right) ({\bf x}, {\bf y})=\left(
\begin{array}{c|c}
 \pi_2 ({\bf x})& \pi_1({\bf x})\\
\hline
\pi_1({\bf x}) & \sum_{s=1}^{N}\dfrac{\partial\pi_1 }{\partial x_s}({\bf x})y_s
\end{array}
\right)
\end{equation}
is a Poisson tensor.
\end{corollary}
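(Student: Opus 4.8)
The plan is to reduce this corollary to the theorem producing the Poisson tensor (\ref{tensor-pi-1}), by interchanging the roles of $\pi_1$ and $\pi_2$. Indeed, $\pi_2\ltimes_2\pi_1$ is precisely the tensor (\ref{tensor-pi-1}) applied to the ordered pair $(\pi_2,\pi_1)$, so by that theorem it is a Poisson tensor as soon as the pair $(\pi_2,\pi_1)$ satisfies the two conditions (\ref{war-a}) with $\pi_1$ and $\pi_2$ exchanged. The whole task therefore reduces to deriving these two \emph{swapped} conditions from the hypotheses at hand: the compatibility of $\pi_1$ and $\pi_2$, the original conditions (\ref{war-a}), and the assumption that both tensors are linear or constant.

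First I would dispose of the swapped second condition. After the exchange it reads
\begin{equation}
\sum_{s,m=1}^N y_s\,\pi_{2,mk}({\bf x})\,\frac{\partial^2 \pi_{1,ij}}{\partial x_m\partial x_s}({\bf x})=0,
\end{equation}
so it involves the second derivatives of the components of $\pi_1$. Since $\pi_1$ is linear or constant these derivatives vanish identically and the condition holds automatically, exactly as in the remark following the $\ltimes$-theorems.

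The heart of the matter is the swapped first condition,
\begin{equation}
\sum_{s=1}^N\left(\pi_{1,sk}\frac{\partial \pi_{2,ij}}{\partial x_s}
+\pi_{2,si}\frac{\partial \pi_{1,jk}}{\partial x_s}
+\pi_{2,sj}\frac{\partial \pi_{1,ki}}{\partial x_s}\right)=0,
\end{equation}
which I would obtain as a linear combination of two expressions already known to vanish. Abbreviating, with all sums running over $s$ and $\partial_s=\partial/\partial x_s$,
\begin{equation}
\alpha=\sum_s\pi_{2,sk}\partial_s\pi_{1,ij},\qquad
\beta=\sum_s\left(\pi_{2,si}\partial_s\pi_{1,jk}+\pi_{2,sj}\partial_s\pi_{1,ki}\right),
\end{equation}
\begin{equation}
\gamma=\sum_s\pi_{1,sk}\partial_s\pi_{2,ij},\qquad
\delta=\sum_s\left(\pi_{1,si}\partial_s\pi_{2,jk}+\pi_{1,sj}\partial_s\pi_{2,ki}\right),
\end{equation}
the first of the original conditions (\ref{war-a}) reads $\alpha+\delta=0$, the compatibility condition (\ref{bi-H}) reads $\alpha+\beta+\gamma+\delta=0$, and the left-hand side of the swapped first condition is exactly $\beta+\gamma$. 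Subtracting the two known identities then gives
\begin{equation}
\beta+\gamma=(\alpha+\beta+\gamma+\delta)-(\alpha+\delta)=0,
\end{equation}
which is the required swapped first condition.

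The only delicate point is the index bookkeeping. The three cyclic expressions, namely (\ref{war-a}), the compatibility identity (\ref{bi-H}) and the swapped condition, differ solely in which tensor carries the free index $k$ in each summand, and one must confirm term by term that the blocks $\alpha,\beta,\gamma,\delta$ split these expressions as claimed; in particular that the $\delta$-block is genuinely common to both (\ref{war-a}) and (\ref{bi-H}). Once this matching is checked, the conclusion is immediate and no computation beyond the vanishing of the second derivatives is required.
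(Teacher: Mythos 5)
Your proof is correct and is precisely the argument the paper leaves implicit (the corollary is stated there without any written proof): apply the $\ltimes_2$ theorem to the swapped pair $(\pi_2,\pi_1)$, observe that the swapped second condition is automatic because $\pi_1$ is linear or constant, and obtain the swapped first condition $\beta+\gamma=0$ by subtracting the first condition of (\ref{war-a}), $\alpha+\delta=0$, from the compatibility identity (\ref{bi-H}), $\alpha+\beta+\gamma+\delta=0$. The index bookkeeping you flag does check out term by term, so nothing further is needed.
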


Note that the conditions (\ref{war-a1}) or (\ref{war-a}) are met for certain classes of Poisson tensors.
One of these classes is described by the following restriction.

\begin{corollary} \label{corollary 1}
If the Poisson tensor $\pi_2$ is constant and  $(M,\pi_1, \pi_2)$ is a bi-Hamiltonian manifold then the conditions 
(\ref{war-a1})  are satisfied.
\end{corollary}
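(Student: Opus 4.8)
The plan is to verify each of the three identities in~(\ref{war-a1}) separately, exploiting the fact that a constant tensor has vanishing derivatives of every order. First I would look at the second and third lines of~(\ref{war-a1}): every summand there carries a factor that is either a first derivative $\frac{\partial \pi_{2,sk}}{\partial x_m}$, $\frac{\partial \pi_{2,jk}}{\partial x_s}$, $\frac{\partial \pi_{2,ki}}{\partial x_s}$ or a second derivative $\frac{\partial^2 \pi_{2,ij}}{\partial x_m\partial x_s}$ of the components of $\pi_2$. Since $\pi_2$ is assumed constant, all such derivatives vanish identically, so the second and third conditions in~(\ref{war-a1}) hold trivially, without using compatibility at all.

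The only remaining identity is the first one, which after using that the constant components satisfy $\pi_{2,\cdot}({\bf x})=\pi_{2,\cdot}$ reads
$$\sum_{s=1}^N \left(\pi_{2,sk}\,\frac{\partial \pi_{1,ij}}{\partial y_s}({\bf y}) + \pi_{2,sj}\,\frac{\partial \pi_{1,ki}}{\partial y_s}({\bf y}) + \pi_{2,si}\,\frac{\partial \pi_{1,jk}}{\partial y_s}({\bf y})\right)=0.$$
Here I would invoke the bi-Hamiltonian hypothesis through the Schouten--Nijenhuis compatibility condition~(\ref{bi-H}). Setting all derivatives of $\pi_2$ to zero collapses that six-term expression to exactly
$$\sum_{s=1}^N\left(\pi_{2,sk}\,\frac{\partial \pi_{1,ij}}{\partial x_s}({\bf x}) + \pi_{2,sj}\,\frac{\partial \pi_{1,ki}}{\partial x_s}({\bf x}) + \pi_{2,si}\,\frac{\partial \pi_{1,jk}}{\partial x_s}({\bf x})\right)=0,$$
valid for every ${\bf x}\in M$. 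Since this identity holds at every point of $M$ and its left-hand side is a function of the base variable alone, I would simply rename the dummy argument ${\bf x}\mapsto {\bf y}$ to recover precisely the first line of~(\ref{war-a1}).

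There is no genuine obstacle here: the entire content is the observation that the first condition in~(\ref{war-a1}) is nothing but the reduction of~(\ref{bi-H}) to the case of constant $\pi_2$, read in the fibre variable ${\bf y}$ rather than in the base variable ${\bf x}$, while the remaining two conditions are automatic from the constancy of $\pi_2$. The only points requiring minor care are the bookkeeping of which three of the six terms in~(\ref{bi-H}) survive and the legitimacy of the variable relabeling; both are immediate since constancy makes the components of $\pi_2$ genuine constants and the surviving identity an identity in the argument of $\pi_1$.
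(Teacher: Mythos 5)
Your proposal is correct: the second and third conditions of (\ref{war-a1}) hold trivially because every term carries a first or second derivative of the constant tensor $\pi_2$, and the first condition is exactly the three surviving terms of the compatibility identity (\ref{bi-H}) after the $\partial\pi_2$-terms drop out, read in the fibre argument ${\bf y}$ (a harmless relabeling in the paper's setting, where the coordinates are global and $\pi_2$'s components are genuine constants). The paper itself states this corollary without proof, treating it as immediate, and your argument is precisely the justification it leaves implicit.
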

Let us observe that this class is rich, because it contains as $\pi_1$ the Lie-Poisson structure
\begin{equation}
\{f,g\}_{LP}({\bf x})=\bigg< {\bf x},[df({\bf x}),dg({\bf x})]\bigg>, \quad {\bf x}\in\mathfrak{g}^*,
\end{equation}
on the dual $\mathfrak{g}^*$ of a Lie algebra $\mathfrak{g}$, and as $\pi_2$ the frozen Poisson structure
\begin{equation}
\{f,g\}_{{\bf x}_0}({\bf x})=\bigg< {\bf x}_0,[df({\bf x}),dg({\bf x})]\bigg>, 
\end{equation}
where ${\bf x}_0$ is a fixed element of $\mathfrak{g}^*$. These structures are compatible, i.e. form a pencil of Poisson structures for every freezing point ${\bf x}_0$, see \cite{Mis-Fom}.

In general, the Casimir function $c$ for the structure $\pi_1 \ltimes_1 \pi_2$ has to satisfy the following conditions
\begin{align}
&\label{26} \sum_{s=1}^N\left(  
\pi_{1,is}({\bf y}) \frac{\partial c}{\partial x_s} ({\bf x}, {\bf y})
+ \pi_{2,is}({\bf x}) \frac{\partial c}{\partial y_s} ({\bf x}, {\bf y}) \right)=0,\\
& \label{27} \sum_{s=1}^N \pi_{2,is}({\bf x}) \frac{\partial c}{\partial x_s} ({\bf x}, {\bf y})
+ \sum_{s=1}^N y_s \sum_{m=1}^N \frac{\partial \pi_{2,im}}{\partial x_s}({\bf x}) \frac{\partial c}{\partial y_m}({\bf x}, {\bf y}) =0.
\end{align}
In particular, in the case described by the Corollary \ref{corollary 1}, the following theorems can be proved.

\begin{theorem} \label{theorem 2}
Let $\pi_2=const$, $(M,\pi_1, \pi_2)$ be bi-Hamiltonian and  $c_1,\dots, c_r$, where $r=\dim M- \text{rank } \pi_2$, be Casimir functions for the constant Poisson structure $\pi_2$.
Then the functions
\begin{equation}
c_i({\bf y}),\quad \widetilde{c}_i({\bf x},{\bf y})=c_i({\bf x})+\widetilde{\widetilde{c}}_i({\bf y}),
\end{equation}
where $\widetilde{\widetilde{c}}_i$ satisfies the conditions
\begin{equation}
\sum_{s=1}^N\left(  
\pi_{1,js}({\bf y}) \frac{\partial c_i}{\partial x_s} ({\bf x})
+ \pi_{2,js} \frac{\partial \widetilde{\widetilde{c}}_i}{\partial y_s} ( {\bf y}) \right)=0,
\end{equation}
are the Casimir functions for the Poisson tensor $\left( \pi_1 \ltimes_1 \pi_2 \right)$ given by (\ref{tensor-pi-1-b}).
\end{theorem}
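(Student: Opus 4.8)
The plan is to verify the Casimir property directly, exploiting the fact that for a constant $\pi_2$ the structure \eqref{tensor-pi-1-b} collapses to a particularly simple form. First I would record the Poisson brackets of the coordinate functions: since $\pi_2$ is constant the lower-right block $\sum_s \frac{\partial \pi_2}{\partial x_s} y_s$ vanishes, so
$$\{x_i, x_j\}_{\ltimes_1} = \pi_{1,ij}(\mathbf{y}), \qquad \{x_i, y_j\}_{\ltimes_1} = \pi_{2,ij}, \qquad \{y_i, y_j\}_{\ltimes_1} = 0.$$
By Corollary \ref{corollary 1} the hypotheses guarantee that $\pi_1 \ltimes_1 \pi_2$ is genuinely a Poisson tensor, so the notion of Casimir function is well defined, and a function $c(\mathbf{x}, \mathbf{y})$ is Casimir precisely when it satisfies \eqref{26}--\eqref{27}. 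For constant $\pi_2$ the last term of \eqref{27} drops out, so the two conditions become $\sum_s \pi_{2,is} \frac{\partial c}{\partial x_s} = 0$ together with \eqref{26} unchanged; it is these two simplified equations that I would check for each candidate.

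For the first family $c_i(\mathbf{y})$: since $c_i(\mathbf{y})$ has no $x$--dependence, $\frac{\partial c_i}{\partial x_s} = 0$, so the simplified \eqref{27} holds trivially and the first term of \eqref{26} vanishes. What remains of \eqref{26} is $\sum_s \pi_{2,is} \frac{\partial c_i}{\partial y_s}(\mathbf{y}) = 0$, which is exactly the statement that $c_i$ is a Casimir of $\pi_2$ with the argument relabelled from $\mathbf{x}$ to $\mathbf{y}$. Equivalently one can read this off the brackets directly: $\{c_i(\mathbf{y}), y_k\}_{\ltimes_1} = 0$ because $\{y_s, y_k\}_{\ltimes_1} = 0$, while $\{c_i(\mathbf{y}), x_k\}_{\ltimes_1} = \sum_s \frac{\partial c_i}{\partial y_s}(\mathbf{y}) \pi_{2,sk} = 0$ by the $\pi_2$--Casimir property.

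For the second family $\widetilde{c}_i = c_i(\mathbf{x}) + \widetilde{\widetilde{c}}_i(\mathbf{y})$: here $\frac{\partial \widetilde{c}_i}{\partial x_s} = \frac{\partial c_i}{\partial x_s}(\mathbf{x})$ and $\frac{\partial \widetilde{c}_i}{\partial y_s} = \frac{\partial \widetilde{\widetilde{c}}_i}{\partial y_s}(\mathbf{y})$. The simplified \eqref{27} reads $\sum_s \pi_{2,is} \frac{\partial c_i}{\partial x_s}(\mathbf{x}) = 0$, which is once more the $\pi_2$--Casimir property of $c_i$, while \eqref{26} becomes precisely the linear equation imposed on $\widetilde{\widetilde{c}}_i$ in the statement. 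Hence both Casimir conditions hold by construction, and $\widetilde{c}_i$ is a Casimir of $\pi_1 \ltimes_1 \pi_2$. Thus the whole verification amounts to recognizing that the two defining equations \eqref{26}--\eqref{27} split exactly into the $\pi_2$--Casimir property of $c_i$ and the prescribed equation for $\widetilde{\widetilde{c}}_i$.

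The one genuinely delicate point is not the verification but the consistency of that prescribed equation. Rewriting \eqref{26} as $\sum_s \pi_{2,is} \frac{\partial \widetilde{\widetilde{c}}_i}{\partial y_s}(\mathbf{y}) = -\sum_s \pi_{1,is}(\mathbf{y}) \frac{\partial c_i}{\partial x_s}(\mathbf{x})$, I note that the left-hand side depends on $\mathbf{y}$ alone whereas the right-hand side a priori also depends on $\mathbf{x}$. A solution $\widetilde{\widetilde{c}}_i(\mathbf{y})$ can therefore exist only when the right-hand side is $\mathbf{x}$--independent, which is automatic once the Casimirs $c_i$ are chosen linear --- the natural choice for a constant Poisson tensor, whose independent Casimirs may always be taken to be the linear forms annihilated by $\pi_2$. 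I would record this as the setting under which $\widetilde{\widetilde{c}}_i$ is guaranteed to exist; granting it, the argument above is pure bookkeeping of partial derivatives against the two simplified Casimir equations.
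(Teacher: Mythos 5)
Your proof is correct and takes essentially the same route as the paper's: the paper's entire proof is the single sentence that the result ``is obtained by direct calculation from formulas (\ref{26}) and (\ref{27})'', and your argument is precisely that calculation carried out explicitly --- noting that constancy of $\pi_2$ kills the second term of (\ref{27}) and the lower-right block, and that the two conditions then split into the $\pi_2$--Casimir property of $c_i$ and the prescribed equation for $\widetilde{\widetilde{c}}_i$. Your closing observation about when that prescribed equation is actually solvable (the right-hand side must be ${\bf x}$--independent, which holds for linear Casimirs of the constant tensor $\pi_2$) is a point the paper does not address, and it is a worthwhile addition rather than a deviation.
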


\begin{proof}
Proof is obtained by direct calculation from formulas (\ref{26}) and (\ref{27}). 
\end{proof}

\begin{theorem}\label{theorem 3}
Let $\pi_2=const$, $(M,\pi_1, \pi_2)$ be bi-Hamiltonian and functions $\{H_i\}_{i=1}^k$ be in involution with respect to the both Poisson brackets given by $\pi_1$ and $\pi_2$,
then the functions
\begin{equation}
H_i({\bf y}),\quad \text{ and }\quad \widetilde{H}_i({\bf x},{\bf y})=\sum_{s=1}^{N}\dfrac{\partial H_i}{\partial y_s}({\bf y})x_s
\end{equation}
are in involution with respect to the Poisson tensor $\left( \pi_1 \ltimes_1 \pi_2 \right)$ given by (\ref{tensor-pi-1-b}).
\end{theorem}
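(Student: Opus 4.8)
The plan is to work entirely in the coordinates $(\mathbf{x},\mathbf{y})$ and to verify directly that the three types of pairwise brackets among the functions $H_i(\mathbf{y})$ and $\widetilde{H}_i$ all vanish. The first observation is that, since $\pi_2$ is constant, the lower--right block $\sum_s \frac{\partial \pi_2}{\partial x_s}(\mathbf{x})y_s$ of the tensor (\ref{tensor-pi-1-b}) is identically zero. Hence the only nonzero brackets of the coordinate functions for $\pi_1\ltimes_1\pi_2$ are
$$\{x_i,x_j\}_{\ltimes_1}=\pi_{1,ij}(\mathbf{y}),\qquad \{x_i,y_j\}_{\ltimes_1}=\pi_{2,ij},\qquad \{y_i,y_j\}_{\ltimes_1}=0,$$
with $\pi_{2,ij}$ constant. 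I record the partial derivatives $\frac{\partial \widetilde{H}_i}{\partial x_a}=\frac{\partial H_i}{\partial y_a}(\mathbf{y})$ and $\frac{\partial \widetilde{H}_i}{\partial y_a}=\sum_s \frac{\partial^2 H_i}{\partial y_a\partial y_s}(\mathbf{y})\,x_s$ for use below.

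Two of the three brackets are immediate. Since $H_i(\mathbf{y})$ and $H_j(\mathbf{y})$ depend only on $\mathbf{y}$ and the $\{y,y\}$--block vanishes, $\{H_i(\mathbf{y}),H_j(\mathbf{y})\}_{\ltimes_1}=0$. For the mixed bracket, $H_i(\mathbf{y})$ has no $\mathbf{x}$--derivatives and is paired against $\widetilde{H}_j$ only through the $\{y,x\}$--block; the surviving contribution is $\sum_{a,b}\frac{\partial H_i}{\partial y_a}\,\pi_{2,ab}\,\frac{\partial H_j}{\partial y_b}$, which is precisely $\{H_i,H_j\}_2$ evaluated at $\mathbf{y}$, and this vanishes by the hypothesis that the $H_i$ are in involution with respect to $\pi_2$.

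The remaining bracket $\{\widetilde{H}_i,\widetilde{H}_j\}_{\ltimes_1}$ is where the real content lies. Expanding it, the $\{y,y\}$--block again drops out, leaving three contributions. The $\{x,x\}$--block contributes $\sum_{a,b}\frac{\partial H_i}{\partial y_a}\,\pi_{1,ab}(\mathbf{y})\,\frac{\partial H_j}{\partial y_b}=\{H_i,H_j\}_1(\mathbf{y})$, which vanishes by involution with respect to $\pi_1$. The two cross contributions coming from the $\{x,y\}$-- and $\{y,x\}$--blocks are each linear in $\mathbf{x}$; collecting the coefficient of $x_s$, they read $\sum_{a,b}\pi_{2,ab}\frac{\partial H_i}{\partial y_a}\frac{\partial^2 H_j}{\partial y_b\partial y_s}$ and $\sum_{a,b}\pi_{2,ab}\frac{\partial^2 H_i}{\partial y_a\partial y_s}\frac{\partial H_j}{\partial y_b}$, respectively.

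The key step, and the one I expect to be the main obstacle, is showing that these two cross terms cancel. The idea is to differentiate the involution identity $\sum_{a,b}\pi_{2,ab}\frac{\partial H_i}{\partial y_a}\frac{\partial H_j}{\partial y_b}=0$ with respect to $y_s$. Because $\pi_2$ is constant, $\frac{\partial \pi_{2,ab}}{\partial y_s}=0$, so the Leibniz rule produces exactly the sum of the two coefficient expressions above and shows it vanishes for every $s$. Thus the cross terms cancel and $\{\widetilde{H}_i,\widetilde{H}_j\}_{\ltimes_1}=0$. This is precisely the point at which the constancy of $\pi_2$ is indispensable: for a non-constant $\pi_2$ an extra term involving $\frac{\partial \pi_{2,ab}}{\partial y_s}$ would appear and the cancellation would fail. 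With all three brackets shown to vanish, the functions $H_i(\mathbf{y})$ and $\widetilde{H}_i$ are in involution with respect to $\pi_1\ltimes_1\pi_2$, completing the argument.
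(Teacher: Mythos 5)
Your proof is correct and takes essentially the same approach as the paper's: both exploit that the constancy of $\pi_2$ kills the lower-right block, reduce the claim to the three pairwise brackets, and identify the cross terms of $\{\widetilde{H}_i,\widetilde{H}_j\}_{\ltimes_1}$ as $\sum_s x_s\,\partial_{y_s}\bigl(\{H_i,H_j\}_2({\bf y})\bigr)$, which vanishes by differentiating the involution identity. No gaps.
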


\begin{proof}
The functions $H_i$ and $H_j$ are in involution with respect to the Poisson structure given by (\ref{tensor-pi-1-b}) when they satisfy the condition
\begin{equation}
\{H_i({\bf x},{\bf y}),H_j({\bf x},{\bf y})\}_{\ltimes_1}=
\sum_{s,m=1}^N\left(  
\pi_{1,sm}({\bf y}) \frac{\partial H_i}{\partial x_s} ({\bf x},{\bf y}) \frac{\partial H_j}{\partial x_m} ({\bf x},{\bf y})\right.+
\end{equation}
$$\left. \pi_{2,sm} \frac{\partial H_i}{\partial x_s} ({\bf x},{\bf y}) \frac{\partial H_j}{\partial y_m} ({\bf x},{\bf y}) +
\pi_{2,sm} \frac{\partial H_i}{\partial y_s} ({\bf x},{\bf y}) \frac{\partial H_j}{\partial x_m} ({\bf x},{\bf y})
\right)=0.$$
From the above we get $\{H_i({\bf y}),H_j({\bf y})\}_{\ltimes_1}=0$. It is also easy to see that
\begin{align}
&\{H_i({\bf y}),\widetilde{H}_j({\bf x},{\bf y})\}_{\ltimes_1}= \sum_{s,m=1}^N
\pi_{2,sm} \frac{\partial H_i}{\partial y_s} ({\bf y}) \frac{\partial H_j}{\partial y_m} ({\bf y})=
\{H_i,H_j\}_2({\bf y})=0,\\
& \{\widetilde{H}_i({\bf x},{\bf y}),\widetilde{H}_j({\bf x},{\bf y})\}_{\ltimes_1}=
\{H_i,H_j\}_1({\bf y})+\sum_{p=1}^{N}x_p \frac{\partial }{\partial y_p}\bigg( \{H_i,H_j\}_2({\bf y})\bigg)=0
\end{align}
from involution with respect to the Poisson tensors $\pi_1$ and $\pi_2$.
\end{proof}

In low-dimensional cases, there is sometimes another possibility to build a family of functions in involution.
\begin{theorem} \label{theorem 4} 
Let $H_i$ be Casimirs functions for the Poisson tensor $\pi_1$ quadratic homogeneous in ${\bf x}$ or linear homogeneous in ${\bf x}$. Then the family of functions
\begin{equation}
\widehat{H}_i({\bf x},{\bf y})=\sum_{s=1}^{N}\dfrac{\partial H_i}{\partial y_s}({\bf y})x_s,
\end{equation}
and the family $\widehat{\widehat{H}}_j(x)$  defined by the following conditions  
\begin{align}
& \{ {H}_i , \widehat{\widehat{H}}_j\}_2= 0,\\
&\sum_{s,m=1}^N\left(  
\pi_{1,sm}({\bf y}) \frac{\partial \widehat{\widehat{H}}_i}{\partial x_s} ({\bf x}) \frac{\partial \widehat{\widehat{H}}_j}{\partial x_m} 
({\bf x})\right)=0,
\end{align} 
are in involution with respect to the Poisson tensor $\left( \pi_1 \ltimes_1 \pi_2 \right)$ given by (\ref{tensor-pi-1-b}).
\end{theorem}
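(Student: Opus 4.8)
The plan is to establish involution by directly computing the three kinds of brackets $\{\widehat{H}_i,\widehat{H}_j\}_{\ltimes_1}$, $\{\widehat{H}_i,\widehat{\widehat{H}}_j\}_{\ltimes_1}$ and $\{\widehat{\widehat{H}}_i,\widehat{\widehat{H}}_j\}_{\ltimes_1}$ for the tensor (\ref{tensor-pi-1-b}), whose entries give $\{x_i,x_j\}_{\ltimes_1}=\pi_{1,ij}({\bf y})$, $\{x_i,y_j\}_{\ltimes_1}=\pi_{2,ij}({\bf x})$ and $\{y_i,y_j\}_{\ltimes_1}=\sum_s\frac{\partial\pi_{2,ij}}{\partial x_s}({\bf x})y_s$. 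The engine of the whole argument is a pair of identities coming from homogeneity: because the Hessian $\frac{\partial^2 H_i}{\partial x_a\partial x_s}$ is constant when $H_i$ is quadratic homogeneous (and vanishes when $H_i$ is linear homogeneous), one obtains $\frac{\partial \widehat{H}_i}{\partial x_a}=\frac{\partial H_i}{\partial x_a}({\bf y})$ together with $\frac{\partial \widehat{H}_i}{\partial y_a}=\frac{\partial H_i}{\partial x_a}({\bf x})$ in the quadratic case, and $\widehat{H}_i=H_i({\bf x})$, $\frac{\partial \widehat{H}_i}{\partial y_a}=0$ in the linear case. These turn every mixed $x$-$y$ derivative of $\widehat{H}_i$ into an ordinary gradient of $H_i$ read off either at ${\bf y}$ or at ${\bf x}$, which is exactly what lets each term collapse onto a Poisson bracket on $M$.

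I would first clear the two easy brackets. As $\widehat{\widehat{H}}_i$ and $\widehat{\widehat{H}}_j$ depend only on ${\bf x}$, only the $\pi_1({\bf y})$ block contributes and $\{\widehat{\widehat{H}}_i,\widehat{\widehat{H}}_j\}_{\ltimes_1}=\sum_{s,m}\pi_{1,sm}({\bf y})\frac{\partial\widehat{\widehat{H}}_i}{\partial x_s}\frac{\partial\widehat{\widehat{H}}_j}{\partial x_m}$, which is the second defining condition, hence zero. For the cross bracket $\{\widehat{H}_i,\widehat{\widehat{H}}_j\}_{\ltimes_1}$, the $\pi_1({\bf y})$ part carries the factor $\sum_a\pi_{1,ab}({\bf y})\frac{\partial H_i}{\partial x_a}({\bf y})$, which vanishes since $H_i$ is a Casimir of $\pi_1$ (its Casimir identity evaluated at ${\bf y}$), while the $\pi_2({\bf x})$ part, after substituting $\frac{\partial \widehat{H}_i}{\partial y_a}=\frac{\partial H_i}{\partial x_a}({\bf x})$, becomes $\sum_{a,b}\pi_{2,ab}({\bf x})\frac{\partial H_i}{\partial x_a}({\bf x})\frac{\partial \widehat{\widehat{H}}_j}{\partial x_b}({\bf x})=\{H_i,\widehat{\widehat{H}}_j\}_2$, which is zero by the first defining condition.

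The substantive computation is $\{\widehat{H}_i,\widehat{H}_j\}_{\ltimes_1}$. Again the $\pi_1({\bf y})$ block dies through the Casimir identity, and the plan is to show that the two $\pi_2({\bf x})$ terms and the $y$-weighted $\frac{\partial\pi_2}{\partial x_s}$ term reassemble, by the Leibniz rule, into the single expression $\sum_c y_c\,\frac{\partial}{\partial x_c}\{H_i,H_j\}_2$. The reason is precisely the homogeneity identity: writing $\frac{\partial H_i}{\partial x_a}({\bf y})=\sum_c \frac{\partial^2 H_i}{\partial x_c\partial x_a}y_c$, the factor $y_c$ can be pulled out of both off-diagonal terms, and the three contributions are then seen to be exactly $\frac{\partial}{\partial x_c}$ applied to $\sum_{a,b}\pi_{2,ab}({\bf x})\frac{\partial H_i}{\partial x_a}\frac{\partial H_j}{\partial x_b}$. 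In the linear case both $\pi_2$ terms vanish outright, so the bracket is zero with nothing further to check.

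This last expression is where the \emph{main obstacle} lies: $\{\widehat{H}_i,\widehat{H}_j\}_{\ltimes_1}=\sum_c y_c\,\frac{\partial}{\partial x_c}\{H_i,H_j\}_2$ vanishes exactly when $\{H_i,H_j\}_2=0$, that is, when the $\pi_1$-Casimirs are also in involution with respect to $\pi_2$. This is not automatic from $H_i$ being a Casimir of $\pi_1$; what can be shown in general, by feeding $\{H_i,\cdot\}_1=\{H_j,\cdot\}_1=\{H_i,H_j\}_1=0$ into the mixed Jacobi identity $\sum_{\circlearrowleft}\big(\{f,\{g,h\}_1\}_2+\{f,\{g,h\}_2\}_1\big)=0$ of the compatible pair, is only that $\{H_i,H_j\}_2$ is itself a Casimir of $\pi_1$. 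In the low-dimensional / argument-shift setting that this theorem addresses — for instance $\pi_1$ Lie--Poisson with $\pi_2$ the compatible frozen structure of Corollary \ref{corollary 1} — this residual Casimir vanishes (and it is trivially zero when $\pi_1$ has a single independent Casimir), which finishes the proof. I would therefore present the homogeneity identities and the three bracket computations as the routine core and isolate the vanishing of $\{H_i,H_j\}_2$ as the one genuinely structural input.
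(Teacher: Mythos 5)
Your computations follow exactly the route of the paper's own proof, which is a one\nobreakdash-line compression of them: the paper deduces the theorem ``from the previous results'' (the bracket computations in the proof of Theorem \ref{theorem 3}) together with the identity
\[
\frac{\partial}{\partial y_m}\Big(\sum_{s=1}^{N}\frac{\partial H_i}{\partial y_s}({\bf y})\,x_s\Big)
=\frac{\partial}{\partial y_m}\Big(\sum_{s=1}^{N}\frac{\partial H_i}{\partial x_s}({\bf x})\,y_s\Big)
=\frac{\partial H_i}{\partial x_m}({\bf x}),
\]
which is precisely your pair of homogeneity identities; your handling of $\{\widehat{\widehat{H}}_i,\widehat{\widehat{H}}_j\}_{\ltimes_1}$ and of $\{\widehat{H}_i,\widehat{\widehat{H}}_j\}_{\ltimes_1}$ is what that citation unwinds to.

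The ``main obstacle'' you isolate is, however, not a defect of your write-up but a genuine gap in the paper's statement and proof, which never mentions the term you found. Exactly as you compute, once the $\pi_1({\bf y})$ block is killed by the Casimir property one is left with $\{\widehat{H}_i,\widehat{H}_j\}_{\ltimes_1}=\sum_{c}y_c\,\partial_{x_c}\{H_i,H_j\}_2({\bf x})$, and the stated hypotheses do not force this to vanish. A concrete counterexample: on $M=\mathbb{R}^4$ take $\pi_1=\partial_{x_1}\wedge\partial_{x_2}$ and $\pi_2=\partial_{x_3}\wedge\partial_{x_4}$ (both constant, hence compatible, and conditions (\ref{war-a1}) hold, so (\ref{tensor-pi-1-b}) is Poisson), with the quadratic homogeneous $\pi_1$-Casimirs $H_1=x_3^2$, $H_2=x_3x_4$. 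Then $\{H_1,H_2\}_2=2x_3^2$, $\widehat{H}_1=2x_3y_3$, $\widehat{H}_2=x_3y_4+x_4y_3$, and $\{\widehat{H}_1,\widehat{H}_2\}_{\ltimes_1}=4x_3y_3\neq 0$. So the theorem is true only with the supplementary hypothesis you identify, and your structural observations are correct: compatibility alone only makes $\{H_i,H_j\}_2$ a Casimir of $\pi_1$; it does vanish in the Lie--Poisson/frozen setting of Corollary \ref{corollary 1} (differentials of Casimirs take values in the coadjoint isotropy algebra, which is abelian at regular points), and trivially when a single Casimir is lifted --- which is why the paper's Examples 1 and 2, each lifting one Casimir, never meet the problem. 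Two small refinements: for a quadratic--quadratic pair constancy of $\{H_i,H_j\}_2$ in ${\bf x}$ already suffices (equivalent to vanishing when $\pi_2$ is constant, since the bracket is then quadratic homogeneous), whereas for a mixed linear--quadratic pair the surviving $\pi_2$ term is $\{H_i,H_j\}_2({\bf x})$ itself, so there genuine vanishing is needed; your remark that ``in the linear case both $\pi_2$ terms vanish'' covers only linear--linear pairs.
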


\begin{proof}
From the previous results and the equality $\dfrac{\partial}{\partial y_m}\left(\sum_{s=1}^{N}\dfrac{\partial H_i}{\partial y_s}({\bf y})x_s\right)= \dfrac{\partial}{\partial y_m}\left( \sum_{s=1}^{N}\dfrac{\partial H_i}{\partial x_s}({\bf x})y_s \right)=
\dfrac{\partial H_i}{\partial x_m}({\bf x})$ one can deduce the above theorem.
\end{proof}

\section{Examples}
 
\begin{example}
Lagrange top. 
Let us consider two Poisson structures
\begin{equation}
\pi_1({\bf x})=\left(\begin{array}{ccc}0&\omega x_3&-x_2\\
 -\omega x_3 & 0& x_1\\
 x_2&-x_1&0\end{array}\right), \qquad \pi_2({\bf x})=\left(\begin{array}{ccc} 0&-1&0\\1&0&0\\0&0&0\end{array}\right),
\end{equation}
where ${\bf x}=(x_1,x_2,x_3),\quad {\bf y}=(y_1,y_2,y_3),\quad \omega=const\quad (\omega\neq 0).$ 
In this case, the Casimir function for the structure $\pi_1$ is
\begin{equation}
F({\bf x})=x_1^2+x_2^2+\omega x_3^2
\end{equation}
and for $\pi_2$ 
\begin{equation}
c_1({\bf x})=x_3.
\end{equation}
Let us take as a Hamiltonian
\begin{equation}
H({\bf x})=x_1^2+x_2^2+ x_3^2.
\end{equation} 
The equations of motion for this Hamiltonian computed for first Poisson  structure $\pi_1$ assume the form 
\begin{equation}
\left\{\begin{array}{l}
\dot{x}_1=2(\omega -1)x_2x_3\\
\dot{x}_2=-2(\omega -1)x_1x_3\\
\dot{x}_3=0
\end{array}
\right. .
\end{equation}

It is easy to see that the conditions  (\ref{war-a1}) from Theorem \ref{theorem-9} are satisfied. Then from (\ref{tensor-pi-1-b}),  the tangent Poisson structure is given by
\begin{equation}
\left (\pi_1\ltimes_1 \pi_2\right)(\bf x,\bf y)=\left(\begin{array}{cccccc}0&\omega y_3&-y_2&0&-1&0\\
-\omega y_3&0&y_1&1&0&0\\
y_2&-y_1&0&0&0&0\\
0&-1&0&0&0&0\\
1&0&0&0&0&0\\
0&0&0&0&0&0\end{array}\right).
\end{equation}
The Casimirs from Theorem \ref{theorem 2} for the structure given above assume following form
\begin{align}
\label{casimir-a}
&c_1( {\bf y})=y_3, & \widetilde{c}_1 ({\bf x}, {\bf y}) = x_3-\frac{1}{2}\left(y_1^2+y_2^2+\omega y_3^2\right).
\end{align}

If the functions $H$ and $F$ are in involution with respect to the Poisson structure $\pi_2$, then they satisfy the assumption 
of Theorem \ref{theorem 3}. In this case we obtain
\begin{align}
& H_1({\bf y})=H({\bf y})=y_1^2+y_2^2+ y_3^2,       & & \widetilde{H}_1({\bf x},{\bf y})=2x_1y_1+2x_2y_2+2x_3y_3,\\
& H_2({\bf y})=F({\bf y})=y_1^2+y_2^2+\omega y_3^2, & & \widetilde{H}_2({\bf x},{\bf y})=2x_1y_1+2x_2y_2+2\omega x_3y_3.
\end{align}
Of course, it is also easy to see that two of the four functions can be expressed by the Casimir functions $c_1$ and $\widetilde{c}_1$. Let us take as a Hamiltonian 
\begin{equation}
h=\alpha H_1+\beta H_2+\gamma \widetilde{H}_1+\delta \widetilde{H}_2.
\end{equation}
 The Hamilton's equations, in this case, are given by
\begin{equation}
\left\{\begin{array}{l}
\dot{x}_1=2\gamma (\omega -1)y_2y_3-2(\alpha + \beta )y_2-2(\gamma+\delta)x_2\\
\dot{x}_2=-2\gamma (\omega -1)y_1y_3+2(\alpha + \beta )y_1+2(\gamma+\delta)x_1\\
\dot{x}_3=0\\
\dot{y}_1=-2(\gamma+\delta)y_2\\
\dot{y}_2=2(\gamma+\delta)y_1\\
\dot{y}_3=0
\end{array}
\right. .
\end{equation}

Moreover, since the Casimir function $F$ for the Poisson structure $\pi_1$ is quadratic homogeneous in ${\bf x}$, we obtain that $F$ and $H$ satisfy also the conditions of Theorem
\ref{theorem 4}. In this case we have only two functions
\begin{align}
& \widehat{H}_2({\bf x},{\bf y})=2x_1y_1+2x_2y_2+2\omega x_3y_3 ,       & & \widehat{\widehat{H}}_1({\bf x})=H({\bf x})=x_1^2+x_2^2+ x_3^2.
\end{align}
These functions and Casimir functions (\ref{casimir-a}) are the four constants of motion for the Lagrange top. The Hamiltonian of this top is given by 
\begin{align}
h({\bf x},{\bf y}) & = \frac 12 \left(\widehat{\widehat{H}}_1({\bf x})+(\omega -1) c_1({\bf y}) \widehat{H}_2({\bf x},{\bf y})\right)\\
&=\dfrac{1}{2} \left(x_1^2+x_2^2+ x_3^2\right)+ (\omega -1)y_3 \left(x_1y_1+x_2y_2+\omega x_3y_3 \right).\nonumber
\end{align}
The Hamilton's equations in this case assume the form
\begin{equation}
\left\{\begin{array}{l}
\dot{x}_1=x_2y_3-x_3y_2\\
\dot{x}_2=x_3y_1-x_1y_3\\
\dot{x}_3=x_1y_2-x_2y_1\\
\dot{y}_1=-x_2-(\omega-1)y_2y_3\\
\dot{y}_2=x_1+(\omega-1)y_1y_3\\
\dot{y}_3=0
\end{array}
\right. ,
\end{equation}
see \cite{Med, MorTon, Tsi}. The above equations describe the motion Lagrange top.
\end{example}

\begin{example}

Let us consider all real Lie algebras of dimension equal to three. A complete list of these algebras is given for example by  Mubarakzyanov \cite{Mub}. There are nine real algebras of dimension three, two of which depend on a parameter. Our list is based on the article \cite{n1}.

\begin{center}
\begin{tabular}{|c|c|c|c|c|c|}
\hline Name & Nonzero commutation relations & Invariants   \\
\hline  $\mathcal{A}_{3,1}$   & $[e_2,e_3]=e_1$                                           & $e_1$  \\ 
\hline  $\mathcal{A}_{3,2}$   & $[e_1,e_3]=e_1,\quad [e_2,e_3]=e_1+e_2$                   & $e_1\exp(-e_2/e_1)$  \\ 
\hline  $\mathcal{A}_{3,3}$   & $[e_1,e_3]=e_1,\quad [e_2,e_3]=e_2$                       & $e_2/e_1$  \\ 
\hline  $\mathcal{A}_{3,4}$   & $[e_1,e_3]=e_1,\quad [e_2,e_3]=-e_2$                      & $e_1e_2$  \\ 
\hline  $\mathcal{A}^a_{3,5}$ & $[e_1,e_3]=e_1,\quad [e_2,e_3]=ae_2\quad0<|a|<1)$         & $e_2e_1^{-a}$  \\ 
\hline  $\mathcal{A}_{3,6}$   & $[e_1,e_3]=-e_2,\quad [e_2,e_3]=e_1$                      & $e_1^2+e_2^2$  \\ 
\hline  $\mathcal{A}^a_{3,7}$ & $[e_1,e_3]=ae_1-e_2,\quad [e_2,e_3]=e_1+ae_2\quad(a>0)$   & $(e_1^2+e_2^2)((e_1+ie_2)/(e_1-ie_2))^{ia}$  \\ 
\hline  $\mathcal{A}_{3,8}$   & $[e_1,e_3]=-2e_2,\quad [e_1,e_2]=e_1,\quad [e_2,e_3]=e_3$ & $2e_2^2+e_1e_3+e_3e_1$  \\ 
\hline  $\mathcal{A}_{3,9}$   & $[e_1,e_2]=e_3,\quad [e_2,e_3]=e_1,\quad [e_3,e_1]=e_2$   & $e_1^2+e_2^2+e_3^2$  \\ 
\hline
\end{tabular}
\end{center}

On the dual space $\mathcal{A}_{3,i}^*$, $i=1,\dots ,9$, of a Lie algebra $\mathcal{A}_{3,i}$ is a Poisson structure defined by Poisson tensor $\pi_{3,i}$. The table below presents the corresponding  tensors for the various Lie algebras.
\begin{center}
\begin{tabular}{|c|c|c|c|c|c|}
\hline Name & Poisson tensors     \\
\hline  $\mathcal{A}^*_{3,1}$   & 
\begingroup\makeatletter\def\f@size{4}\check@mathfonts
$\pi_{3,1}=\left(\begin{array}{ccc}
0 & 0 & 0\\
0 & 0 & x_1\\
0 & -x_1 & 0\\\end{array}\right)$ 
\endgroup                                               \\ 
\hline  $\mathcal{A}^*_{3,2}$   &  
\begingroup\makeatletter\def\f@size{4}\check@mathfonts
$\pi_{3,2}=\left(\begin{array}{ccc}
0 & 0 & x_1\\
0 & 0 & x_1+x_2\\
-x_1 & -x_1-x_2 & 0\\\end{array}\right)$ 
\endgroup                    \\ 
\hline  $\mathcal{A}^*_{3,3}$   &   
\begingroup\makeatletter\def\f@size{4}\check@mathfonts
$\pi_{3,3}=\left(\begin{array}{ccc}
0 & 0 & x_1\\
0 & 0 & x_2\\
-x_1 & -x_2 & 0\\\end{array}\right)$ 
\endgroup                       \\ 
\hline  $\mathcal{A}^*_{3,4}$   &  
\begingroup\makeatletter\def\f@size{4}\check@mathfonts
$\pi_{3,4}=\left(\begin{array}{ccc}
0 & 0 & x_1\\
0 & 0 & -x_2\\
-x_1 & x_2 & 0\\\end{array}\right)$ 
\endgroup                          \\ 
\hline  $\left(\mathcal{A}^a_{3,5}\right)^*$ &   
\begingroup\makeatletter\def\f@size{4}\check@mathfonts
$\pi^a_{3,5}=\left(\begin{array}{ccc}
0 & 0 & x_1\\
0 & 0 & ax_2\\
-x_1 & -ax_2 & 0\\\end{array}\right)$ 
\endgroup            \\ 
\hline  $\mathcal{A}^*_{3,6}$   &  
\begingroup\makeatletter\def\f@size{4}\check@mathfonts
$\pi_{3,6}=\left(\begin{array}{ccc}
0 & 0 & -x_2\\
0 & 0 & x_1\\
x_2 & -x_1 & 0\\\end{array}\right)$ 
\endgroup                       \\ 
\hline  $\left(\mathcal{A}^a_{3,7}\right)^*$ & 
\begingroup\makeatletter\def\f@size{4}\check@mathfonts
$\pi^a_{3,7}=\left(\begin{array}{ccc}
0 & 0 & ax_1-x_2\\
0 & 0 & x_1+ax_2\\
-ax_1+x_2 & -x_1-ax_2 & 0\\\end{array}\right)$ 
\endgroup      \\ 
\hline  $\mathcal{A}^*_{3,8}$   &  
\begingroup\makeatletter\def\f@size{4}\check@mathfonts
$\pi_{3,8}=\left(\begin{array}{ccc}
0 & x_1 & -2x_2\\
-x_1 & 0 & x_3\\
2x_2 & -x_3 & 0\\\end{array}\right)$ 
\endgroup    \\ 
\hline  $\mathcal{A}^*_{3,9}$   & 
\begingroup\makeatletter\def\f@size{4}\check@mathfonts
$\pi_{3,9}=\left(\begin{array}{ccc}
0 & x_3 & -x_2\\
-x_3 & 0 & x_1\\
x_2 & -x_1 & 0\\\end{array}\right)$ 
\endgroup       \\ 
\hline
\end{tabular}
\end{center}

The next table describes if the above structures are compatible in sense (\ref{bi}), i.e. $\mathbb{R}^3$ equipped with these is a bi-Hamiltonian manifold.
\begin{center}
\begingroup\makeatletter\def\f@size{4}\check@mathfonts
\begin{tabular}{ccccccccc|c|}
\hline \multicolumn{1}{ |c|  }{ $\mathcal{A}^*_{3,1}$ } & \multicolumn{1}{ c|  }{$\mathcal{A}^*_{3,2}$} & \multicolumn{1}{ |c|  }{$\mathcal{A}^*_{3,3}$} & \multicolumn{1}{ |c|  }{$\mathcal{A}^*_{3,4}$} & \multicolumn{1}{ |c|  }{$\left(\mathcal{A}^a_{3,5}\right)^*$} & \multicolumn{1}{ |c|  }{ $\mathcal{A}^*_{3,6}$ }  &  \multicolumn{1}{ |c|  }{$\left(\mathcal{A}^a_{3,7}\right)^*$} & \multicolumn{1}{ |c|  }{$\mathcal{A}^*_{3,8}$} & $\mathcal{A}^*_{3,9}$ & Name  \\
\hline   \multicolumn{1}{ |c|  }{YES} & \multicolumn{1}{ |c|  }{YES} &\multicolumn{1}{ |c|  }{YES}&\multicolumn{1}{ |c|  }{YES}&\multicolumn{1}{ |c|  }{YES}&\multicolumn{1}{ |c|  }{YES}&\multicolumn{1}{ |c|  }{YES}&\multicolumn{1}{ |c|  }{YES}&\multicolumn{1}{ |c|  }{YES}& $\mathcal{A}^*_{3,1}$  \\ 
\cline{1-10}    &\multicolumn{1}{ |c|  }{YES}   &\multicolumn{1}{ |c|  }{YES}&\multicolumn{1}{ |c|  }{YES}&\multicolumn{1}{ |c|  }{YES}&\multicolumn{1}{ |c|  }{YES}&\multicolumn{1}{ |c|  }{YES}&\multicolumn{1}{ |c|  }{NO}& \multicolumn{1}{ |c|  }{NO}&  $\mathcal{A}^*_{3,2}$\\ 
\cline{2-10}  & &\multicolumn{1}{ |c|  }{YES}&\multicolumn{1}{ |c|  }{YES}&\multicolumn{1}{ |c|  }{YES}&\multicolumn{1}{ |c|  }{YES}&\multicolumn{1}{ |c|  }{YES}&\multicolumn{1}{ |c|  }{NO}&\multicolumn{1}{ |c|  }{NO}&  $\mathcal{A}^*_{3,3}$ \\ 
\cline{3-10}    &  &&\multicolumn{1}{ |c|  }{YES}&\multicolumn{1}{ |c|  }{YES}&\multicolumn{1}{ |c|  }{YES}&\multicolumn{1}{ |c|  }{YES}&\multicolumn{1}{ |c|  }{YES}&\multicolumn{1}{ |c|  }{YES}&  $\mathcal{A}^*_{3,4}$\\ 
\cline{4-10}  &&&&\multicolumn{1}{ |c|  }{YES}&\multicolumn{1}{ |c|  }{YES} &\multicolumn{1}{ |c|  }{YES}&\multicolumn{1}{ |c|  }{NO}&\multicolumn{1}{ |c|  }{NO}& $\left(\mathcal{A}^a_{3,5}\right)^*$  \\ 
\cline{5-10}   &&&& &\multicolumn{1}{ |c|  }{YES}&\multicolumn{1}{ |c|  }{YES}&\multicolumn{1}{ |c|  }{YES}&\multicolumn{1}{ |c|  }{YES}& $\mathcal{A}^*_{3,6}$  \\ 
\cline{6-10}   &&&  &&&\multicolumn{1}{ |c|  }{YES}&\multicolumn{1}{ |c|  }{NO}&\multicolumn{1}{ |c|  }{NO}& $\left(\mathcal{A}^a_{3,7}\right)^*$\\ 
\cline{7-10}   &&  &&&&&\multicolumn{1}{ |c|  }{YES}&\multicolumn{1}{ |c|  }{YES}&  $\mathcal{A}^*_{3,8}$ \\ 
\cline{8-10}    & &&&&&&&\multicolumn{1}{ |c|  }{YES}& $\mathcal{A}^*_{3,9}$  \\ 
\cline{9-10}
\end{tabular}
\endgroup
\end{center}

We can apply the analogous procedure as in Example 1 to other three dimensional Lie algebras.
We will take the linear Poisson tensor on the manifold $\mathcal{A}^*_{3,8}$ and frozen Poisson tensor compatible with it
\begin{equation}
\pi_{1}({\bf x})=\left(\begin{array}{ccc}
0 & x_1 & -2x_2\\
-x_1 & 0 & x_3\\
2x_2 & -x_3 & 0\\\end{array}\right), \qquad \pi_2({\bf x})=\left(\begin{array}{ccc} 0&1&0\\-1&0&0\\0&0&0\end{array}\right).
\end{equation}
The Casimir function for the structure $\pi_1$ is 
\begin{equation}
H({\bf x})=x_2^2+x_1x_3
\end{equation}
and for $\pi_2$
\begin{equation}
c_1({\bf x})=x_3.
\end{equation}
The Casimirs from Theorem \ref{theorem 2} for the structure 
\begin{equation}
\label{structure}
\left (\pi_1\ltimes_1 \pi_2\right)(\bf x,\bf y)=\left(\begin{array}{cccccc}
0&y_1&-2y_2&0&1&0\\
-y_1&0&y_3&-1&0&0\\
2y_2&-y_3&0&0&0&0\\
0&1&0&0&0&0\\
-1&0&0&0&0&0\\
0&0&0&0&0&0\end{array}\right),
\end{equation}
assume following form
\begin{align}
\label{casimir-a3}
&c_1( {\bf y})=y_3, & \widetilde{c}_1 ({\bf x}, {\bf y}) = x_3+y_2^2+y_1 y_3.
\end{align}
The Casimir function $H$ for the Poisson structure $\pi_1$ is quadratic homogeneous in ${\bf x}$. Moreover, this function and
\begin{equation}
\label{Ha}
\widehat{\widehat{H}}_1({\bf x})=x_2^2+ x_1 x_3+\lambda x_3^2
\end{equation}
are in involution with respect to the Poisson tensor $\pi_2$, where $\lambda$ is any constant. So from Theorem \ref{theorem 4} we obtain that 
(\ref{Ha}) and
\begin{equation}
 \widehat{H}_2({\bf x},{\bf y})=2x_2y_2+ x_1y_3+x_3y_1 
\end{equation}
are in involution with respect to the Poisson structure (\ref{structure}).
If we take as the Hamiltonian  
\begin{equation}
h({\bf x},{\bf y}) = x_2^2+ x_1 x_3+\lambda x_3^2+\alpha y_3\left( 2 x_2 y_2+ x_1 y_3+x_3 y_1 \right)
\end{equation}
we obtain the following Hamilton's equations 
\begin{equation}
\left\{\begin{array}{l}
\dot{x}_1=2\alpha x_2y_3-4\lambda x_3y_2+2y_1x_2-2x_1y_2\\
\dot{x}_2= x_1y_3-x_3y_1+(2\lambda-\alpha ) x_3y_3\\
\dot{x}_3=2(x_3y_2-x_2y_3)\\
\dot{y}_1=2x_2+2\alpha y_2y_3\\
\dot{y}_2=-x_3-\alpha y^3_3\\
\dot{y}_3=0
\end{array}
\right. ,
\end{equation}
where $\alpha$ is an arbitrary constant.

\end{example}

\begin{example}
The table given below shows which Poisson structures related to the Lie algebras of dimension $3$, see Example 2, can be used to build a Poisson structure of dimension $6$. A table row $\mathcal{A}^*_{3,i}$ and table column $\mathcal{A}^*_{3,j}$ means that, we are building a Poisson tensor (Theorem \ref{theorem-9})
\begin{equation}
\left (\pi_{3,i}\ltimes_1 \pi_{3,j}\right)=\left(\begin{array}{c|c}
\pi_{3,i}({\bf y}) & \pi_{3,j}({\bf x})\\
\hline
\pi_{3,j}({\bf x}) & \pi_{3,j}({\bf y})\\\end{array}\right).
\end{equation}

\begin{center}
\begingroup\makeatletter\def\f@size{4}\check@mathfonts
\begin{tabular}{|c|c|c|c|c|c|c|c|c|c|}
\hline Name & $\mathcal{A}^*_{3,1}$  & $\mathcal{A}^*_{3,2}$ & $\mathcal{A}^*_{3,3}$ & $\mathcal{A}^*_{3,4}$ & $\left(\mathcal{A}^a_{3,5}\right)^*$ &  $\mathcal{A}^*_{3,6}$   &  $\left(\mathcal{A}^a_{3,7}\right)^*$ & $\mathcal{A}^*_{3,8}$ & $\mathcal{A}^*_{3,9}$  \\
\hline  $\mathcal{A}^*_{3,1}$   & YES & YES & YES & NO  & NO  &  NO &  NO & NO  &NO  \\ 
\hline  $\mathcal{A}^*_{3,2}$   & YES & YES & YES &  NO & NO  &  NO &  NO & NO  & NO  \\ 
\hline  $\mathcal{A}^*_{3,3}$   & YES & YES & YES & YES &YES  &YES  & YES &  NO & NO\\ 
\hline  $\mathcal{A}^*_{3,4}$   &  NO &  NO & YES & YES & YES &  NO & NO  &  NO & NO\\ 
\hline  $\left(\mathcal{A}^a_{3,5}\right)^*$ &NO& NO&YES & YES& YES & NO  & NO  &   NO  &  NO \\ 
\hline  $\mathcal{A}^*_{3,6}$   &  NO &  NO &  YES&  NO &  NO & YES & YES &  NO & NO \\ 
\hline  $\left(\mathcal{A}^a_{3,7}\right)^*$ &NO& NO& YES & NO & NO  & YES & YES &  NO  & NO  \\ 
\hline  $\mathcal{A}^*_{3,8}$   &  NO &  NO & NO  & NO  &  NO &  NO & NO  & YES &   NO \\ 
\hline  $\mathcal{A}^*_{3,9}$   &  NO & NO  & NO  & NO &  NO & NO  & NO  & NO  & YES \\ 
\hline
\end{tabular}
\endgroup
\end{center}

Let us consider, in table given above, the row $\mathcal{A}^*_{3,3}$ and the column $\mathcal{A}^*_{3,1}.$ It means we are building a Poisson tensor (Theorem \ref{theorem-9})
\begin{equation}
\left (\pi_{3,3}\ltimes_1 \pi_{3,1}\right)({\bf x},{\bf y})=\left(\begin{array}{c|c}
\pi_{3,3}({\bf y}) & \pi_{3,1}({\bf x})\\
\hline
\pi_{3,1}({\bf x}) & \pi_{3,1}({\bf y})\\\end{array}\right).
\end{equation}
In local coordinates ${\bf x}=(x_1,x_2,x_3), {\bf y}=(y_1,y_2,y_3)$ we get
\begin{equation}
\left (\pi_{3,3}\ltimes_1 \pi_{3,1}\right)({\bf x},{\bf y})=\left(\begin{array}{ccc|ccc}
0&0&y_1&0&0&0\\0&0&y_2&0&0&x_1\\-y_1&-y_2&0&0&-x_1&0\\
\hline
0&0&0&0&0&0\\0&0&x_1&0&0&y_1\\0&-x_1&0&0&-y_1&0\end{array}\right).
\end{equation}
We can prove by direct calculation and changing the variables  that this is a tensor for a Lie-Poisson structure related to the Lie algebra $\mathcal{A}_{6,16}$ from the classification given in \cite{n1}. The commutation relations for $\mathcal{A}_{6,16}$
are $[e_1,e_3]=e_4$, $[e_1,e_4]=e_5$, $[e_1,e_5]=e_6$, $[e_2,e_3]=e_5$ and $[e_2,e_4]=e_6$, where 
$(x_1,x_2,x_3,y_1,y_2,y_3)\mapsto (-e_5,-e_3, e_1, e_6,e_4, -e_2)$. In this case, the Casimirs assume the following form $c_1({\bf x})=\frac{x_2}{x_1}$ for $\pi_1$ and  $c_2({\bf x})=x_1$ for $\pi_2$, respectively. Moreover, the Casimirs for the $\pi_1 \ltimes_1 \pi_2$ are given by the formulas $c_1({\bf x}, {\bf y})=y_1$ and $c_2({\bf x}, {\bf y})=\frac 13 x_1^3-x_1y_1y_2+x_2y_1^2$.

Let us take now the row $\mathcal{A}^*_{3,2}$ and the column $\mathcal{A}^*_{3,1}.$ It means we are building a Poisson tensor 
\begin{equation}
\left (\pi_{3,2}\ltimes_1 \pi_{3,1}\right)({\bf x},{\bf y})=\left(\begin{array}{c|c}
\pi_{3,2}({\bf y}) & \pi_{3,1}({\bf x})\\
\hline
\pi_{3,1}({\bf x}) & \pi_{3,1}({\bf y})\\\end{array}\right).
\end{equation}
In local coordinates ${\bf x}=(x_1,x_2,x_3), {\bf y}=(y_1,y_2,y_3)$ we get
\begin{equation}
\left (\pi_{3,2}\ltimes_1 \pi_{3,1}\right)({\bf x},{\bf y})=\left(\begin{array}{ccc|ccc}
0&0&y_1&0&0&0\\0&0&y_1+y_2&0&0&x_1\\-y_1&-y_1-y_2&0&0&-x_1&0\\
\hline
0&0&0&0&0&0\\0&0&x_1&0&0&y_1\\0&-x_1&0&0&-y_1&0\end{array}\right).
\end{equation}
We can prove by direct calculation and changing the variables  that this is also a tensor for a Lie-Poisson structure related also to the Lie algebra $\mathcal{A}_{6,16}$, where 
$(x_1,x_2,x_3,y_1,y_2,y_3)\mapsto (x_1,x_2,x_3,y_1,y_1+y_2,y_3)=(-e_5,-e_3, e_1, e_6,e_4, -e_2)$. In this case, the Casimirs assume the following form $c_1({\bf x})=x_1$ for $\pi_1$ and  $c_2({\bf x})=\frac{x_2}{x_1}$ for $\pi_2$, respectively. Moreover, the Casimirs for the $\pi_1 \ltimes_1 \pi_2$ are given by the formulas $c_1({\bf x}, {\bf y})=y_1$ and $c_2({\bf x}, {\bf y})=\frac 13 x_1^3-x_1y_1(y_1+y_2)+x_2y_1^2$.

We will get very important cases, the Poisson structure on $\mathfrak{so(4)}^*$, if we consider the following case
\begin{equation}
\left (\pi_{3,9}\ltimes_1 \pi_{3,9}\right)({\bf x},{\bf y})=\left(\begin{array}{c|c}
\pi_{3,9}({\bf y}) & \pi_{3,9}({\bf x})\\
\hline
\pi_{3,9}({\bf x}) & \pi_{3,9}({\bf y})\\\end{array}\right)\cong \pi_{3,9}({\bf x}+{\bf y})\oplus\pi_{3,9}({\bf y}-{\bf x}),
\end{equation}
and $\mathfrak{e(3)}^*$ if we take
\begin{equation}
\pi({\bf x},{\bf y})=\left(\begin{array}{c|c}
0 & \pi_{3,9}({\bf x})\\
\hline
\pi_{3,9}({\bf x}) & \pi_{3,9}({\bf y})\\\end{array}\right),
\end{equation}
see \cite{Adler, AJ}.
\end{example}

\begin{example}
If we take Euclidean Lie algebra $\mathcal{A}_{3,6}=\mathfrak{e}(2)$, then using the constructions described in Theorems \ref{theorem-2}, 
\ref{theorem-5} and \ref{theorem-7}  we obtain the following
Poisson structures on $T\mathcal{A}_{3,6}^*$.
\begin{enumerate}
\item From Theorem \ref{theorem-2} we have
\begin{equation}
\pi_{TM,c({\bf x})}({\bf x},{\bf y})\!\!=\!\!\!
\left( \!\!\!\begin{array}{ccc|ccc}
0 & 0 & 0 &    0 &  0  & -x_2\\
0 & 0 & 0 &    0 &  0  &  x_1\\
0 & 0 & 0 &  x_2 &-x_1 &    \epsilon (x_1^2+x_2^2)\\
\hline
0 &  0  & -x_2                         & 0   & 0   & -y_2\\
0 &  0  &  x_1                         & 0   & 0   &  y_1\\
x_2 &-x_1 & \epsilon (x_1^2+x_2^2)     & y_2 &-y_1 &   0
\end{array}\!\!\!\right),
\end{equation}
where $\epsilon$ is an arbitrary constant. 
In this case, the Casimir functions (Theorem \ref{theorem-4}) are given by the formulas
\begin{equation}
c_1({\bf x},{\bf y})= x_1^2+x_2^2, \quad l_{dc_1}=2x_1y_1+2x_2y_2.
\end{equation}
\item From Theorem \ref{theorem-5} we have
\begin{equation}
\pi_{TM,c({\bf x})}({\bf x},{\bf y})\!\!=\!\!\!
\left(\!\!\! \begin{array}{ccc|ccc}
0 & 0 & 0 &    0 &  0  & -x_2\\
0 & 0 & 0 &    0 &  0  &  x_1\\
0 & 0 & 0 &  x_2 &-x_1 &    \epsilon (x_1^2+x_2^2)\\
\hline
0 &  0  & -x_2                         & 0   & 0   & -y_2+\lambda x_1\\
0 &  0  &  x_1                         & 0   & 0   &  y_1-\lambda x_2\\
x_2 &-x_1 & \epsilon (x_1^2+x_2^2)     & y_2 &-y_1 &   0
\end{array}\!\!\!\right),
\end{equation}
where we use the compatible Poisson structure related to Lie algebra $\mathcal{A}_{3,4}$.
In this case, the Casimir functions (Theorem \ref{theorem-6}) are given by the formulas
\begin{equation}
c_1({\bf x},{\bf y})= x_1^2+x_2^2, \quad \tilde{c}_1=2x_1y_1+2x_2y_2-2\lambda x_1x_2.
\end{equation}
\item From Theorem \ref{theorem-7} we have
\begin{equation}
\pi_{TM,c({\bf x})}({\bf x},{\bf y})\!\!=\!\!\!
\left( \!\!\!\begin{array}{ccc|ccc}
0 & 0 & -\lambda x_2 &    0 &  0  & -x_2\\
0 & 0 & \lambda x_1 &    0 &  0  &  x_1\\
\lambda x_2 & \lambda x_1 & 0 &  x_2 &-x_1 &    \epsilon (x_1^2+x_2^2)\\
\hline
0 &  0  & -x_2                         & 0   & 0   & -y_2\\
0 &  0  &  x_1                         & 0   & 0   &  y_1\\
x_2 &-x_1 & \epsilon (x_1^2+x_2^2)     & y_2 &-y_1 &   0
\end{array}\!\!\!\right).
\end{equation}
In this case, the Casimir functions (Theorem \ref{theorem-8}) are given by the formulas
\begin{equation}
c_1({\bf x},{\bf y})= x_1^2+x_2^2, \quad \tilde{\tilde{c}}_1=\lambda^2 y_1^2+\lambda^2 y_2^2-2\lambda x_1y_1-2\lambda x_2y_2.
\end{equation}
\end{enumerate}

\end{example}

\section*{Acknowledgments}

This article has received financial support from the Polish Ministry of Science and Higher Education under subsidy for maintaining the research potential of the Faculty of Mathematics and Informatics, University of Bialystok (BST-148).

\bibliographystyle{plain}

\end{document}